\renewcommand{\k}{{\sf K}}
\newcommand{\K}{{\sf K}}
\newcommand{\newP}[1]{\noindent{\bf #1:}}
\newcommand{\ud}{\mathrm{d}}
\def\Re{\mathbb{R}}
\def\Proposition#1{Prop.~\ref{#1}}
\def\Sec#1{Sec.~\ref{#1}}
\newcommand{\tr}{\mbox{tr}}
\def\eqdef{\mathrel{:=}}
\def\clZ{{\cal Z}}
\def\E{{\sf E}}
\def\expect{{\sf E}}
\def\Expect{{\sf E}}
\def\P{{\sf P}}
\newtheorem{theorem}{Theorem}
\newtheorem{example}{Example}
\newtheorem{definition}{Definition}
\newtheorem{lemma}{Lemma}
\newtheorem{remark}{Remark}
\newtheorem{proposition}{Proposition}
\newtheorem{corollary}{Corollary}
\newcommand{\trace}{\text{Tr}}
\newcommand{\mN}{m^{(N)}}
\newcommand{\SigN}{\Sigma^{(N)}}
\newcommand{\Fnorm}[1]{\|#1\|_F}
\newcounter{rmnum}
\newenvironment{romannum}{\begin{list}{{\upshape (\roman{rmnum})}}{\usecounter{rmnum}
			\setlength{\leftmargin}{12pt}
			\setlength{\rightmargin}{8pt}
			\setlength{\itemsep}{2pt}
			\setlength{\itemindent}{-1pt}
	}}{\end{list}}
\newcounter{anum}
\newcommand{\Xbar}{\bar{X}}
\newcommand{\PP}{{\sf P}}
\newcommand{\NN}{\mathcal{N}}
\def\Kbar{\bar{\sf K}}
\newcommand{\mbar}{\bar{m}}
\newcommand{\Sigmabar}{\bar{\Sigma}}
\newcommand{\X}{X}
\newcommand{\sRicc}{\sqrt{\text{Ricc}}}
\newcommand{\Ricc}{\text{Ricc}}
\newcommand{\ii}{k}
\def\FRAC#1#2#3{\genfrac{}{}{}{#1}{#2}{#3}}
\def\half{{\mathchoice{\FRAC{1}{1}{2}}%
		{\FRAC{2}{1}{2}}%
		{\FRAC{3}{1}{2}}%
		{\FRAC{4}{1}{2}}}}
\newcommand{\calN}{\mathcal{N}}
\newcommand{\KN}{{\sf K}^{(N)}}
\newcommand{\Bbar}{\bar{B}}
\newcommand{\xibar}{\bar{\xi}}
\newcommand{\ddt}{\frac{\ud}{\ud t}}
\title{\LARGE \bf
An Optimal Transport Formulation of the Ensemble Kalman Filter}
\author{Amirhossein Taghvaei,  Prashant G. Mehta
\thanks{A. Taghvaei is with the Department of Mechanical and Aerospace
  Engineering at University of California Irvine.  The research
  reported in this paper was performed while he was a graduate student
  at the  University of
	Illinois at Urbana-Champaign {\tt\small ataghvae@uci.edu}.}
\thanks{P. G. Mehta is with the Coordinated Science Laboratory and
the Department of Mechanical Science and Engineering at the University of
Illinois at Urbana-Champaign (UIUC)
 {\tt\small mehtapg@illinois.edu}.}
 \thanks{Financial support from the NSF CMMI grants  1462773 and
   1761622 is gratefully acknowledged.}%
 \thanks{The conference version of this paper appears in~\cite{AmirACC2016} and~\cite{AmirACC2018}.}%
}
\begin{document}
\normalem
\maketitle

\begin{abstract}
	Controlled interacting particle systems such as the ensemble Kalman
	filter (EnKF) and the feedback particle filter (FPF) are numerical
	algorithms to approximate the solution of the nonlinear filtering
	problem in continuous time.  The distinguishing feature of these
	algorithms is that the Bayesian update step is implemented using a
	feedback control law.  
	It has been noted in the literature that the control
	law is not unique.  This is the main problem addressed in this
	paper.  To obtain a unique control law, the filtering problem is
	formulated here as an optimal transportation problem.  An explicit formula
	for the (mean-field type) optimal control law is derived in the linear
	Gaussian setting.  Comparisons are made with the control laws
        for different types of EnKF
	algorithms described in the literature.  Via empirical approximation
	of the mean-field control law, a finite-$N$ controlled interacting particle algorithm is obtained.  For this algorithm, the equations for
	empirical mean and covariance are derived and shown to be identical to
	the Kalman filter.  This allows strong conclusions on convergence and
	error properties based on the classical filter stability theory for
	the Kalman filter.  It is shown that, under certain technical conditions, the mean squared error (m.s.e.) converges to zero {\em
		even} with a finite number of particles. 
        A detailed propagation of chaos
	analysis is carried out for the finite-$N$ algorithm. The analysis is used to prove weak
	convergence of the empirical distribution as $N\rightarrow\infty$.  
	For a certain
	simplified filtering problem, analytical comparison of the m.s.e. with
	the importance sampling-based algorithms is described.  The
        analysis helps explain the favorable scaling properties of
        the control-based algorithms reported in
        several numerical studies in recent literature. 
\end{abstract}

\section{Introduction}
\label{sec:intro}

The subject of this paper concerns Monte-Carlo methods for simulating a
nonlinear filter (conditional distribution) in continuous-time
settings.  
The mathematical abstraction of any filtering problem involves two
processes: a hidden Markov process $\{X_t\}_{t\ge
	0}$ and the observation process $\{Z_t\}_{t\ge
	0}$.  The numerical 
problem is to compute the {\em posterior distribution} $\PP(X_t \in \cdot  | \clZ_t)$
where $\clZ_t:=\sigma\{Z_s;0 \leq s \leq t\}$ is the filtration
generated by the observations.  A standard solution approach 
is the
particle filter which relies on importance sampling to implement the
effect of conditioning~\cite{gordon93,doucet09}.  In numerical
implementations, this often leads to the particle degeneracy issue (or
weight collapse) whereby only a few particles have large weights.
To combat this issue, various types of resampling schemes have been
proposed in the literature~\cite{bain2009,delmoralbook}.

In the past decade, an alternate class of algorithms has attracted
growing attention.  These algorithms can be regarded as a controlled
interacting particle system where the central idea is to implement the
effect of conditioning using feedback control.  Mathematically, this
involves construction of controlled stochastic process, denoted by
$\{\Xbar_t\}_{t\ge 0}$.  In continuous-time settings, the model for
the $\bar{X}_t$ is a stochastic differential equation (sde):
\begin{equation}
\ud \Xbar_t = u_t(\Xbar_t)\ud t + \k_t(\Xbar_t) \ud Z_t +
\text{[additional terms]},\quad \Xbar_0 \stackrel{\text{d}}{=} X_0
\label{eq:u-k}
\end{equation}
where the [additional terms] are pre-specified (these terms may be
zero).  
The control problem is to
design mean-field type control law $\{u_t(\cdot)\}_{t\ge
	0}$ and $\{\k_t(\cdot)\}_{t\ge
	0}$ such that
the conditional distribution of $\bar{X}_t$ (given $\clZ_t$) is equal
to the posterior distribution of $X_t$.  If this property holds, the
filter is said to be {\em exact}.  In a numerical implementation, the
mean-field terms in the control law are approximated empirically by simulating $N$
copies of~\eqref{eq:u-k}.  The resulting system is a controlled
interacting particle system with a finite number of $N$ interacting
particles.  
The particles have uniform importance weights by
construction.  Therefore, the particle degeneracy issue does not
arise.  Resampling is no longer necessary and steps such as rules for reproduction, death or birth of
particles are altogether avoided. 

The focus of this paper is on (i) formal methods for design
of control laws ($u_t(\cdot)$
and $\k_t(\cdot)$) for~\eqref{eq:u-k}; (ii) algorithms for empirical approximation of the
control laws using $N$ particles; and (iii) error analysis of the
finite-$N$ interacting particle models as $N\rightarrow\infty$.   
The main problem highlighted and addressed in this paper is the issue
of uniqueness:  
one can interpret the controlled system~\eqref{eq:u-k} as transporting
the initial distribution at time $t = 0$ (prior) to the conditional
distribution at time t (posterior).  
Clearly, there are infinitely many maps that transport one
distribution into another. 
This suggests that there are infinitely many choices of control laws that all lead to exact filters.  
This is not surprising: The exactness condition specifies only the
marginal distribution of the stochastic process $\{\Xbar_t\}_{t\geq 0}$
at times $t\geq 0$, which is not enough to uniquely identify a
stochastic process, e.g., the joint distributions at two time instants
are not specified. 

Although these issues are relevant more generally, the scope of this
paper is limited to the linear Gaussian problem.  A motivation comes
from the widespread use of the ensemble Kalman filter (EnKF) algorithm
in applications.  It is noted that the mean-field limit of the EnKF
algorithm is exact only in linear Gaussian settings.  The issue of non-uniqueness is manifested in the
different types of EnKF algorithms reported in literature.  Some of
these EnKF types are discussed as part of the literature survey
(in~\Sec{sec:lit-survey}) and in the main body
of the paper (in~\Sec{sec:non-uniqueness-subsec}).

\subsection{Contributions of this paper}  

The following  contributions are made in this paper:
\begin{enumerate}
	\item {\bf Non-uniqueness issue:} For the linear Gaussian problem, an error process is introduced to
	help explain the non-uniqueness issue in the selection of the
	control law in~\eqref{eq:u-k}.  The error process helps clarify the 
	relationship between the different types of control laws leading to
	the different types of EnKF algorithms that have appeared over the
	years in the literature.  
	\item {\bf Optimal transport FPF:}  To select a unique
	control law, an optimization problem is proposed in the form of a
	time-stepping procedure.  The optimality concept is motivated by the optimal
	transportation theory~\cite{evans,villani2003}. 
	The solution of the time-stepping procedure yields a unique optimal control law.
	The resulting filter is referred to as
	the optimal transport FPF.  The procedure is suitably adapted
	to handle the case, important in Monte-Carlo 
	implementations with finitely many $N$ particles, where the covariance is singular.  In this case,
	the optimal (deterministic) transport maps are replaced by optimal
	(stochastic) couplings.  The general form of the optimal FPF
	includes stochastic terms which are zero when the covariance is
	non-singular.   

	\item  {\bf Error analysis:} A detailed error analysis is carried out
	for the deterministic form of the optimal FPF for the finite but
	large $N$ limit.  For the purposes of error analysis, it is 
	assumed that the linear system is controllable and observable, and
	the initial empirical covariance matrix is non-singular.  The main
	results are as follows:  
	\begin{romannum}
		\item Empirical mean and covariance of particles
		is shown to converge almost surely to exact mean and covariance
		as $t\rightarrow\infty$ even for finite $N$
		(Prop.~\ref{prop:conv_error}-(i)); 
		\item Mean-squared error is shown to be bounded by 
		$\frac{Ce^{-\lambda t}}{\sqrt{N}}$ where the constant $C$ 
		has polynomial dependence on the problem dimension (Prop.~\ref{prop:conv_error}-(ii)); 
		\item A propagation of chaos analysis is carried out to show that
		empirical distribution of the particles converges in a weak sense to the exact filter
		posterior distribution (Cor.~\ref{cor:prop-chaos}).
	\end{romannum}
	\item  {\bf Comparison to importance sampling:}  For a certain
	simplified filtering problem, a comparison of the m.s.e. between the importance
	sampling and control-based filters is described.  The main result is
	to show that using an important sampling approach, the number of
	particles $N$ must grow exponentially with the dimension $d$.  In contrast, with a control-based approach, $N$ scales at  most as
	order $d^2$ in order to maintain the same error (Prop.~\ref{prop:importance-sampling}).  The conclusions are also
	verified numerically (Fig.~\ref{fig:error-PF-FPF}).    

\end{enumerate}

This paper extends and completes the preliminary results reported in our prior conference papers~\cite{AmirACC2016,AmirACC2018}. 
The optimal transport formulation of the FPF and the time-stepping
procedure was originally introduced in~\cite{AmirACC2016}.  However,
its extension to the singular covariance matrix case,
in~\Sec{sec:singular-case}, is original and has not appeared
before. Preliminary error analysis of the deterministic form of
optimal FPF appeared in our prior work~\cite{AmirACC2018}.  The
current paper extends the results in~\cite{AmirACC2018} in two key
aspects: (i) The error bounds for the convergence of the empirical
mean and covariance, in Prop.~\ref{prop:conv_error}-(ii), reveal the
scaling with the problem dimension~(see
Remark~\ref{rem:scaling-dim}-(ii)), whereas previous results did not;
(ii) The propagation of chaos analysis  is carried out for the vector
case~(Cor.~\ref{cor:prop-chaos}), whereas previous result was only
valid for the scalar case. These improvements became  possible with a
proof approach that is entirely different than the one used
in~\cite{AmirACC2018}. Finally, the analytical comparison with the importance
sampling particle filter, in Sec.~\ref{sec:PF}, is new.

\subsection{Literature survey}\label{sec:lit-survey}

Two examples of the controlled interacting particle systems are the
classical ensemble Kalman filter
(EnKF)~\cite{evensen1994sequential,evensen2003ensemble,whitaker2002ensemble,Reich-ensemble}
and the more recently developed feedback particle filter
(FPF)~\cite{taoyang_TAC12,yang2016}.  The EnKF algorithm is the
workhorse in applications (such as weather prediction) where the state
dimension $d$ is very high; cf.,~\cite{Reich-ensemble,houtekamer01}.  The high-dimension of the
state-space provides a significant computational challenge {\em even}
in linear Gaussian settings.  For such problems, an EnKF
implementation may require less computational resources (memory and
FLOPS) than a Kalman filter~\cite{houtekamer01,evensen2006}.  This is
because the particle-based algorithm avoids the need to store and
propagate the error covariance matrix (whose size scales as $d^2$).

An expository review of the continuous-time filters including the
progression from the Kalman filter (1960s) to the ensemble Kalman
filter (1990s) to the feedback particle filter (2010s) appears
in~\cite{TaghvaeiASME2017}.  In continuous-time settings, the first
interacting particle representation of the nonlinear filter appears in
the work of Crisan and Xiong~\cite{crisan10}.  
Also in continuous-time settings, Reich and collaborators have derived
deterministic forms of the EnKF~\cite{reich11,Reich-ensemble}.  
In discrete-time settings, Daum and collaborators have pioneered the
development of closely related particle flow
algorithms~\cite{DaumHuang08,daum2017generalized}.  

The technical approach of this paper has its roots
in the optimal transportation theory.  These methods have been widely
applied for uncertainty propagation and Bayesian inference: The
ensemble transform particle filter is based upon computing an optimal
coupling by solving a linear optimization
problem~\cite{reich2013nonparametric}; Polynomial approximations of
the Rosenblatt transport maps for Bayesian inference appears
in~\cite{MarzoukBayesian,mesa2019distributed}; Solution of such
problems using the 
Gibbs flow is the subject of~\cite{heng2015gibbs}.  The time stepping
procedure of this paper is inspired by the J-K-O construction
in~\cite{jordan1998variational}.  Its extension to the filtering
problem appears
in~\cite{laugesen15,halder2017gradient,halder2018gradient,
	halderproximal}.

Closely related to error analysis of this paper is the recent
literature on stability and convergence of the EnKF algorithm.  For
the discrete-time EnKF algorithm, these results appear
in~\cite{gland2009,mandel2011convergence,tong2016nonlinear,stuart2014stability,kwiatkowski2015convergence}.  
The analysis for continuous-time EnKF is more recent. 
For continuous-time EnKF with perturbed observation, under additional
assumptions (stable and fully observable), it has been shown that the
empirical distribution of the ensemble converges to the mean-field
distribution uniformly for all time with the rate
$O(\frac{1}{\sqrt{N}})$~\cite{delmoral2016stability}. This result has
been extended to the nonlinear setting for the case with Langevin type
dynamics with a strongly convex potential and full linear
observation~\cite{delmoral2017stability}. The stability assumption is
recently relaxed in~\cite{bishop2018stability}. Under certain
conditions, convergence and long term stability results appear in~\cite{jana2016stability}. 

In independent numerical evaluations and comparisons, it has been
observed that EnKF and FPF exhibit smaller simulation
variance and better scaling properties -- with respect to the problem
dimension -- when compared to the traditional
methods~\cite{stano2014,stano2013nonlinear,berntorp2015,surace_SIAM_Review,adamtilton_fusion13}.   The error
analysis (in~\Sec{sec:mean_covariance}) together with the
analytical bounds on comparison with the importance sampling
(in~\Sec{sec:PF}) provide the first such rigorous justification for
the performance improvement reported in literature.  The analysis of
this paper is likely to spur wider adoption of the
control-based algorithms for the purposes of sampling and simulation.    

\subsection{Paper outline}
\Sec{sec:non-uniqueness} includes the preliminaries along
with a discussion of the non-uniqueness issue.  Its resolution is
provided in~\Sec{sec:Opt_FPF} where the optimal FPF is
derived.  The error analysis appears in~\Sec{sec:mean_covariance} and
comparison with importance sampling particle filter is given in~\Sec{sec:PF}.  The
proofs appear in the Appendix. 

\medskip

 \newP{Notation}
For a vector $m$, $\|m\|_2$
denotes the Euclidean norm.  For a square matrix $\Sigma$,
$\Fnorm{\Sigma}$
denotes the Frobenius norm,
$\|\Sigma\|_2$ 
is the spectral norm,
$\trace(\Sigma)$ is the
matrix-trace, $\text{Ker}(\Sigma)$ denotes the null-space, $\text{Range}(\Sigma)$ denotes the range space, and $\text{Spec}(\Sigma)$ denotes the spectrum. For a symmetric matrix $\Sigma$, $\lambda_{\text{max}}(\Sigma)$ and $\lambda_{\text{min}}(\Sigma)$ denote the maximum and minimum eigenvalues of $\Sigma$ respectively. The partial order of positive definite matrices is denoted by $\succ$ such that $A \succ B$ means $A-B$ is positive definite.
$\NN(m,\Sigma)$ is a Gaussian probability distribution with mean
$m$ and covariance $\Sigma$.

\section{The Non-Uniqueness Issue}
\label{sec:non-uniqueness}
\subsection{Preliminaries}
The linear Gaussian filtering problem is described by the linear
stochastic differential equations (sde-s):
\begin{subequations}
\begin{align}
\ud \X_t &= A \X_t \ud t + \sigma_B\ud B_t, 
\label{eqn:Signal_Process}
\\
\ud Z_t &= H \X_t\ud t + \ud W_t,
\label{eqn:Obs_Process}
\end{align}
\end{subequations}
where $\X_t\in\Re^d$ and $Z_t \in\Re^m$ are the state and observation
at time $t$, $\{B_t\}_{t\ge 0},\{W_t\}_{t\ge 0}$ are mutually
independent standard Wiener
processes taking values in $\Re^{q}$ and $\Re^m$, respectively, and $A$, $H$, $\sigma_B$ are matrices of appropriate dimension.
The initial condition $X_0$ is assumed to have a Gaussian distribution
$\NN(m_0,\Sigma_0)$. 
The filtering problem is to compute the posterior distribution, 
\begin{equation}\label{eq:posterior}
\pi_t(\cdot):=\PP(\X_t \in \cdot |\clZ_t),
\end{equation}
where $\clZ_t:=\sigma(Z_s; \; 0 \leq s \leq t)$.



\medskip

\newP{Kalman-Bucy filter} 
In this linear Gaussian case, the posterior distribution
$\pi_t$ is Gaussian $\NN(m_t,\Sigma_t)$, whose mean
$m_t$ and variance $\Sigma_t$ evolve according to the Kalman-Bucy filter~\cite{kalman-bucy}:
\begin{subequations}
\begin{align}
\ud m_t &= A m_t \ud t + \k_t(\ud Z_t - Hm_t \ud t),
\label{eq:kalman-mean} \\
\frac{\ud}{\ud t} \Sigma_t &= \Ricc(\Sigma_t) \eqdef A\Sigma_t + \Sigma_t A^\top + \Sigma_B - \Sigma H^\top H \Sigma_t,
\label{eq:kalman-variance}
\end{align}
\end{subequations}
where $\k_t \eqdef\Sigma_tH^\top$ is the Kalman gain and $\Sigma_B \eqdef \sigma_B\sigma_B^\top$.

\medskip

\newP{Feedback particle filter} 
The stochastic linear FPF~\cite[Eq. (26)]{yang2016} (and also the square-root form of the EnKBF~\cite[Eq (3.3)]{Reich-ensemble}) is described by the Mckean-Vlasov sde:
\begin{equation}
\begin{aligned}
\ud \Xbar_t= A\Xbar_t \ud t + \sigma_B\ud \bar{B}_t + 
 \underbrace{\Kbar_t \big( \ud Z_t - \frac{H \Xbar_t + H \mbar_t}{2} \ud t
   \big)}_{\text{FPF control law}}, 
\end{aligned}
\label{eq:FPF-lin}
\end{equation}
where $\Kbar_t \eqdef \Sigmabar_tH^\top$ is the Kalman gain, $\bar{B}_t$ is a standard Wiener process, $\mbar_t \eqdef \E[ \Xbar_t |\clZ_t]$, $\Sigmabar_t
\eqdef \E[ (\Xbar_t - \mbar_t)(\Xbar_t - \mbar_t)^\top |\clZ_t]$ are the mean-field terms, 
and $\Xbar_0\sim\NN(m_0,\Sigma_0)$.  We use 
\begin{equation}\label{eq:posterior-mean-field}
\bar{\pi}_t(\cdot):=\P(\Xbar_t \in \cdot |\clZ_t )
\end{equation} to denote the conditional distribution of mean-field process $\Xbar_t$.  

\medskip

The FPF control law is exact.  The exactness result appears in the
following theorem which is a special case of 
the~\cite[Thm. 1]{yang2016} that describes the exactness result for
the general nonlinear non-Gaussian case.   
A proof is included in Appendix~\ref{apdx:exactness-proof}.  The proof
is useful for studying the non-uniqueness issue described in Sec.~\ref{sec:non-uniqueness-subsec}.  
\medskip
\begin{theorem}
\label{thm:exactness} {\bf (Exactness of linear FPF)} Consider the linear
Gaussian filtering
problem~\eqref{eqn:Signal_Process}-\eqref{eqn:Obs_Process} and the linear
FPF~\eqref{eq:FPF-lin}. If $\pi_0 = \bar{\pi}_0$ then
\begin{equation*}
\pi_t = \bar{\pi}_t,\quad \forall t \geq 0.
\end{equation*}
\end{theorem}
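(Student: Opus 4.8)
The plan is to exploit the linear--Gaussian structure of~\eqref{eq:FPF-lin}. Since that equation is a \emph{linear} McKean--Vlasov sde driven by the exogenous observation path $Z$ and by the independent Wiener process $\bar B$, with a Gaussian initial law, one expects the conditional law $\bar{\pi}_t$ to remain Gaussian for all $t$; it then suffices to show that its conditional mean $\mbar_t$ and conditional covariance $\Sigmabar_t$ solve exactly the Kalman--Bucy equations~\eqref{eq:kalman-mean}--\eqref{eq:kalman-variance}, and to invoke uniqueness of the solution of that system together with the matching initial data $\mbar_0 = m_0$, $\Sigmabar_0 = \Sigma_0$, which follows from $\pi_0 = \bar{\pi}_0$.

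First I would introduce the error process $\bar{e}_t := \Xbar_t - \mbar_t$. Taking the conditional expectation $\E[\,\cdot\mid\clZ_t]$ of~\eqref{eq:FPF-lin}, and using that the increments of $\bar B$ are independent of $\clZ_t$, that $\Kbar_t = \Sigmabar_t H^\top$ is $\clZ_t$-measurable, and that $\E[\tfrac12(H\Xbar_t + H\mbar_t)\mid\clZ_t] = H\mbar_t$, yields $\ud\mbar_t = A\mbar_t\,\ud t + \Sigmabar_t H^\top(\ud Z_t - H\mbar_t\,\ud t)$, i.e.\ the Kalman mean equation with $\Sigmabar_t$ in place of $\Sigma_t$. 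Subtracting this from~\eqref{eq:FPF-lin} gives the closed linear sde $\ud\bar{e}_t = (A - \tfrac12\Kbar_t H)\bar{e}_t\,\ud t + \sigma_B\,\ud\bar{B}_t$. Applying It\^o's product rule to $\bar{e}_t\bar{e}_t^\top$ and taking the conditional expectation produces a Lyapunov-type equation for $\Sigmabar_t$; substituting $\Kbar_t = \Sigmabar_t H^\top$ collapses the two cross terms $\tfrac12\Kbar_t H\Sigmabar_t$ and $\tfrac12\Sigmabar_t H^\top\Kbar_t^\top$ into $\Sigmabar_t H^\top H\Sigmabar_t$, so that $\tfrac{\ud}{\ud t}\Sigmabar_t = \Ricc(\Sigmabar_t)$. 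Thus $(\mbar_t,\Sigmabar_t)$ and $(m_t,\Sigma_t)$ solve the same system from the same initial condition, and uniqueness — the Riccati equation has a unique solution, and the mean equation is then a linear sde — forces $\mbar_t = m_t$, $\Sigmabar_t = \Sigma_t$ for all $t\ge 0$.

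It remains to justify that $\bar{\pi}_t$ is indeed Gaussian, so that equality of the first two conditional moments yields $\bar{\pi}_t = \pi_t$. Conditionally on $\clZ_t$, the process $\bar{e}_t$ obeys a linear sde with $\clZ_t$-measurable coefficients driven only by $\bar B$, which is independent of $\clZ$; hence $\bar{e}_t$ is conditionally Gaussian with mean zero and covariance $\Sigmabar_t$, and $\Xbar_t = \mbar_t + \bar{e}_t$ is conditionally $\NN(\mbar_t,\Sigmabar_t)$. Equivalently, one verifies that $\bar{\pi}_t = \NN(\mbar_t,\Sigmabar_t)$ is a consistent — and, by well-posedness of the McKean--Vlasov sde, the only — solution of~\eqref{eq:FPF-lin} coupled to its own conditional mean and covariance.

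I expect the main obstacle to be the rigorous handling of the conditional It\^o calculus: because the diffusion coefficient $\Kbar_t$ in~\eqref{eq:FPF-lin} is itself a functional of the conditional law, one must be careful in commuting $\E[\,\cdot\mid\clZ_t]$ with stochastic integration and in asserting conditional Gaussianity. The martingale/independence structure that makes this go through ($\bar B \perp W$, and $\Sigmabar_t$ adapted to $\clZ_t$) is exactly the point that should be spelled out; the remaining steps — the algebraic cancellation producing $\Ricc(\cdot)$ and the uniqueness argument for the Riccati flow and the linear mean equation — are routine.
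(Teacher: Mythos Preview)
Your proposal is correct and follows essentially the same route as the paper: derive the conditional-mean equation by taking $\E[\cdot\mid\clZ_t]$ of~\eqref{eq:FPF-lin}, subtract to obtain a linear sde for the error $\xi_t:=\Xbar_t-\mbar_t$, apply It\^o's rule to $\xi_t\xi_t^\top$ and take conditional expectations to get the Riccati equation for $\Sigmabar_t$, invoke uniqueness to conclude $(\mbar_t,\Sigmabar_t)=(m_t,\Sigma_t)$, and finally use the resulting linear/OU structure to deduce conditional Gaussianity. The technical point you flag---commuting conditional expectation with the stochastic integrals---is exactly the one the paper isolates, handling it via adaptedness of $\Xbar_t$ to $\clZ_t$ (citing \cite[Lemma~5.4]{xiong2008}).
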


\medskip

The notation nomenclature is tabulated in Table~\ref{tab:symbols-states}. 
 
 \begin{table}[t]
 \centering
 \begin{tabular}{|c|c|c|}
 \hline
 Variable & Notation & Equation \\ \hline\hline 
 State of the hidden process & $X_t$ & Eq.~\eqref{eqn:Signal_Process}\\
State of the $i^{\text{th}}$ particle in finite-$N$ sys.& $X_t^i$
&Eq.~\eqref{eq:Xit-s},~\eqref{eq:Xit-d} 
 \\
State of the mean-field model  & $\bar{X}_t$ & Eq.~\eqref{eq:FPF-lin},~\eqref{eq:opt-sde} \\ 
\hline
  \vspace*{-0.1in}\\
  Kalman filter mean and covariance & ${m}_t,{\Sigma}_t$ & Eq.~\eqref{eq:kalman-mean}-\eqref{eq:kalman-variance} 
   \\
 Empirical mean and covariance & $\mN_t,\SigN_t$ & Eq.~\eqref{eq:empr_app_mean_var}
 \\
 Mean-field mean and covariance & $\bar{m}_t,\bar{\Sigma}_t$ & Eq.~\eqref{eq:FPF-lin}-\eqref{eq:opt-sde} 
 \\ \hline
   Conditional distribution of $X_t$ & $\pi_t$ & Eq.~\eqref{eq:posterior} 
 \\
  Conditional distribution of $\Xbar_t$  & $\bar{\pi}_t$ & Eq.~\eqref{eq:posterior-mean-field}
 \\
 Empirical distribution of particles $\{X_t^i\}$ & $\pi^{(N)}_t$ & Eq.~\eqref{eq:posterior-particles} \\
 \hline 
 \end{tabular}
 \caption{Nomenclature.}  
 \label{tab:symbols-states}
 \end{table}

\subsection{The non-uniqueness issue}\label{sec:non-uniqueness-subsec}

In the proof of Thm.~\ref{thm:exactness} (given in
Appendix~\ref{apdx:exactness-proof}), it is shown that (i) the
conditional mean
process $\{\bar{m}_t\}_{t\ge 0}$ evolves according
to~\eqref{eq:kalman-mean}; and (ii) the
conditional variance
process $\{\bar{\Sigma}_t\}_{t\ge 0}$ evolves according
to~\eqref{eq:kalman-variance}.

Define an error process $\xi_t := \Xbar_t -
\mbar_t$ for $t\ge 0$.  The equation for $\xi_t$ is obtained by 
subtracting the equation for the mean,~\eqref{eq:Xii-mean} in
Appendix~\ref{apdx:exactness-proof}, from~\eqref{eq:FPF-lin}:
\begin{equation*}
\ud \xi_t = (A - \frac{1}{2} \Sigmabar_t H^TH)\xi_t + \sigma_B \ud \bar{B}_t
\end{equation*}
This is a linear system and therefore, the variance of $\xi_t$, easily seen to be given by $\bar{\Sigma}_t$,
evolves according to the Lyapunov equation
\begin{align*}
\frac{\ud}{\ud t} \Sigmabar_t &=  (A - \frac{1}{2} \Sigmabar_t H^TH) \Sigmabar_t + \Sigmabar_t(A - \frac{1}{2}\Sigmabar_t H^TH) ^\top + \Sigma_B \\&= \Ricc(\Sigmabar_t)
\end{align*}
which is identical to \eqref{eq:kalman-variance}. 

These arguments suggest the following general procedure to
construct an exact $\Xbar_t$ process:
Express $\Xbar_t$ as a sum of two terms: \[ \Xbar_t =\mbar_t + \xi_t,\] 
where $\mbar_t$ evolves according to the Kalman-Bucy 
equation~\eqref{eq:kalman-mean} and the evolution of $\xi_t$ is defined by the sde:
\begin{equation*}
\ud \xi_t = G_t \xi_t \ud t + \sigma_t \ud 
\bar{B}_t   +\sigma'_t \ud \bar{W}_t\label{eq:EG}
\end{equation*} 
%
where $\{\bar{W}\}_{t\geq 0}$ and $\{\bar{B}\}_{t\geq 0}$ are independent Brownian motions, and  $G_t$, $\sigma_t$, and $\sigma'_t$ are solutions to the matrix equation
\begin{equation}
G_t \Sigmabar_t + \Sigmabar_t G_t^T + \sigma_t \sigma_t^\top  + \sigma'_t( \sigma'_t)^\top= \Ricc(\Sigmabar_t)
\label{eq:G-gen}
\end{equation}
By construction, the equation for the variance is given by the Riccati equation~\eqref{eq:kalman-variance}.

In general, there are infinitely many solutions for \eqref{eq:G-gen}. Below, we describe three solutions that lead to three established form of EnKF and linear FPF:
\begin{enumerate}
	\item EnKF with perturbed observation~\cite[Eq. (27)]{reich11}: 
	\begin{align*}
	G_t = A - \Sigmabar_t H^\top H,\quad \sigma_t = 
	\Sigmabar_t H^\top,\quad \sigma'_t= \sigma_B
	\end{align*}
	\item Stochastic linear FPF~\cite[Eq. (26)]{yang2016} or square-root form of the EnKF~\cite[Eq (3.3)]{Reich-ensemble} : 
	\begin{align*}
	G_t = A - \frac{1}{2}\Sigmabar_t H^\top H,\quad \sigma_t = 
      \sigma_B,\quad \sigma'_t=0
	\end{align*}
	\item Deterministic linear FPF~\cite{AmirACC2018}: 
	\begin{align*}
	G_t = A - \frac{1}{2}\Sigmabar_t H^\top H + \frac{1}{2}\Sigmabar_t^{-1}\Sigma_B,\quad \sigma_t = 0,\quad \sigma'_t=0
	\end{align*}
\end{enumerate}

Given a particular solution $G_t$, one can construct a
family of solutions $G_t + \Sigmabar_t^{-1} \Omega_t$, where
$\Omega_t$ is {\em any} skew-symmetric matrix. 
For the linear Gaussian problem, the non-uniqueness issue has been discussed in literature:  At least
two forms of EnKF, the perturbed observation form~\cite{reich11} and the
square-root form~\cite{Reich-ensemble}, are well known.  
A homotopy of exact deterministic and stochastic filters is given in~\cite{kim2018derivation}. An explanation for the non-uniqueness in
terms of the Gauge transformation appears in~\cite{abedi2019gauge}.

\subsection{Finite-$N$ implementation}
  	In a numerical implementation, one simulates $N$
  	stochastic processes (particles) $\{X_t^i:1\le i\le N\}_{t\ge 0}$, where
  	$X_t^i$ is the state of the
  	$i^{\text{th}}$-particle at time $t$.  The evolution of $X_t^i$ is
  	obtained upon empirically approximating the mean-field terms.  The
  	finite-$N$ filter for the linear FPF~\eqref{eq:FPF-lin} is an
  	interacting particle system:
  	\begin{align}
  	\ud X^i_t &= A X^i_t\ud t + \sigma_B \ud B_t^i + \k^{(N)}_t(\ud Z_t -
  	\frac{HX^i_t + Hm^{(N)}_t}{2}\ud t)
  	\label{eq:Xit-s}
  	\end{align}
  	where $\k^{(N)}_t :=\Sigma^{(N)}_tH^\top$; $\{B^i_t\}_{i=1}^N$ are
  	independent copies of $B_t$; $X^i_0
  	\stackrel {\text{i.i.d}}{\sim} \mathcal{N}(m_0,\Sigma_0)$ for $i=1,2,\ldots,N$; and the
  	empirical approximations of the two mean-field terms are as follows:
  	\begin{equation}
  	\begin{aligned}
  	m^{(N)}_t&:=\frac{1}{N}\sum_{j=1}^N X^i_t\\
  	\Sigma^{(N)}_t
  	&:=\frac{1}{N-1}\sum_{j=1}^N (X^i_t-m^{(N)}_t)(X^i_t-m^{(N)}_t)^\top
  	\end{aligned}
  	\label{eq:empr_app_mean_var}
  	\end{equation}
  	We use the notation
  	\begin{equation}\label{eq:posterior-particles}
  	\pi^{(N)}_t := \frac{1}{N}\sum_{i=1}^N \delta_{X^i_t}
  	\end{equation}
  	to denote the empirical distribution of the particles.  Here,
        $\delta_x$ denotes the Dirac delta distribution at $x$.  
  

\section{Optimal Transport FPF}
\label{sec:Opt_FPF}

The problem is to uniquely identify an exact stochastic process
$\Xbar_{t}$.  The solution is based on an optimality concept motivated by the optimal
transportation theory~\cite{evans,villani2003}.  The background on
this theory is briefly reviewed next.   

\subsection{Background on optimal transportation}
 \label{apdx:opt-transp}
 Let $\mu_X$ and $\mu_Y$ be two probability measures on $\Re^d$ with finite second moments.
 The (Monge) optimal transportation problem with quadratic cost is to minimize
 \begin{equation}
 \begin{aligned}
 \min_{T}~ \expect[ (T(X)-X)^2 ]
 \end{aligned}
 \label{eq:opt-transp}
 \end{equation}   
 over all measurable maps $T:\Re^d \to \Re^d$ such that 
 \begin{equation}
 X\sim \mu_X, ~T(X) \sim \mu_Y
 \label{eq:constraint}
 \end{equation}
 If it exists, the minimizer $T$ is called the {\em optimal transport map} between $\mu_X$ and $\mu_Y$. 
The optimal cost is referred to as $L^2$-Wasserstein distance between $\mu_X$ and $\mu_Y$\cite{villani2003}.
 \begin{theorem}{(Optimal map between Gaussians~\cite[Prop.~7]{givens84})}  
 	\label{thm:opt-map-Gaussians}
 Consider the optimization problem \eqref{eq:opt-transp}-\eqref{eq:constraint}. Suppose $\mu_X$ and $\mu_Y$ are Gaussian distributions, $\NN(m_X,\Sigma_X)$ and $\NN(m_Y,\Sigma_Y)$, with $\Sigma_X,\Sigma_Y\succ 0$.
 Then the optimal transport map between $\mu_X$ and $\mu_Y$ is given by
 \begin{equation}
 T(x) = m_Y + F(x-m_X)
 \label{eq:optmapgauss}
 \end{equation}
 where 
 $
 F = \Sigma_Y^{\half}(\Sigma_Y^{\half}\Sigma_X\Sigma_Y^\half)^{-\half}\Sigma_Y^\half
 $.
\end{theorem}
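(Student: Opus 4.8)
The plan is to obtain \eqref{eq:optmapgauss} from Brenier's theorem (polar factorization), which for the quadratic cost characterizes the optimal transport map. Since $\mu_X = \NN(m_X,\Sigma_X)$ with $\Sigma_X \succ 0$ is absolutely continuous with respect to Lebesgue measure, and both $\mu_X$ and $\mu_Y$ are Gaussian (hence have finite second moments), Brenier's theorem \cite{villani2003,evans} guarantees that the minimizer $T$ of \eqref{eq:opt-transp}--\eqref{eq:constraint} exists, is $\mu_X$-a.e.\ unique, and equals the ($\mu_X$-a.e.\ defined) gradient $\nabla\varphi$ of a lower semicontinuous convex function $\varphi$ with $(\nabla\varphi)_\#\mu_X = \mu_Y$. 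Consequently it is enough to exhibit one map of the asserted affine form that (a) is the gradient of a convex function and (b) pushes $\mu_X$ forward to $\mu_Y$; the uniqueness statement then forces that map to be the optimal one.

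First I would check the pushforward constraint. An affine map $x \mapsto m_Y + F(x - m_X)$ sends $\NN(m_X,\Sigma_X)$ to $\NN(m_Y, F\Sigma_X F^\top)$, so \eqref{eq:constraint} holds iff $F\Sigma_X F^\top = \Sigma_Y$. Set $M := \Sigma_Y^{\half}\Sigma_X\Sigma_Y^{\half}$, which is symmetric positive definite because $\Sigma_X,\Sigma_Y\succ0$; then the claimed matrix is $F = \Sigma_Y^{\half} M^{-\half}\Sigma_Y^{\half}$. It is symmetric, since $\Sigma_Y^{\half}$ and $M^{-\half}$ are, and positive definite, being congruent via the invertible $\Sigma_Y^{\half}$ to $M^{-\half}\succ0$. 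A one-line computation gives $F\Sigma_X F = \Sigma_Y^{\half} M^{-\half}\big(\Sigma_Y^{\half}\Sigma_X\Sigma_Y^{\half}\big) M^{-\half}\Sigma_Y^{\half} = \Sigma_Y^{\half} M^{-\half} M M^{-\half}\Sigma_Y^{\half} = \Sigma_Y$, and since $F=F^\top$ this is exactly $F\Sigma_X F^\top = \Sigma_Y$, so (b) holds.

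For (a), observe that because $F$ is symmetric positive semidefinite, the quadratic $\varphi(x) := \tfrac12 (x - m_X)^\top F (x - m_X) + m_Y^\top x$ is convex with $\nabla\varphi(x) = F(x - m_X) + m_Y = T(x)$. Hence $T$ is the gradient of a convex function that transports $\mu_X$ to $\mu_Y$, and by the uniqueness part of Brenier's theorem $T$ is the optimal transport map, giving \eqref{eq:optmapgauss}.

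The only delicate point is the appeal to Brenier's theorem: one must verify its hypotheses, namely absolute continuity of the source (which is where $\Sigma_X\succ0$ is used) and finite second moments (automatic here), and one leans on the fact that for the quadratic cost a map is optimal precisely when it is the gradient of a convex function with the right pushforward. A self-contained alternative is to restrict \eqref{eq:opt-transp} to affine maps, write the cost as $\|m_Y - m_X\|^2 + \trace(\Sigma_X) - 2\trace(F\Sigma_X) + \trace(\Sigma_Y)$ and maximize $\trace(F\Sigma_X)$ subject to $F\Sigma_X F^\top = \Sigma_Y$ (a Lagrange-multiplier computation, or the substitution $F = \Sigma_Y^{\half} O \Sigma_X^{-\half}$ with $O$ orthogonal, yields the symmetric $F$ above), and then argue via Kantorovich duality with the potential $\varphi$ that no nonlinear map does better. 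I expect this last step — ruling out all nonlinear competitors without Brenier — to be the main obstacle, which is why the Brenier-based route is the one I would write up.
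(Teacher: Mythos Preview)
Your argument is correct: the Brenier route is the standard modern proof, and your verification that the stated $F$ is symmetric positive definite and satisfies $F\Sigma_X F^\top=\Sigma_Y$, together with the convex potential $\varphi$, is clean and complete.

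There is nothing to compare against here: the paper does not prove Theorem~\ref{thm:opt-map-Gaussians} at all but simply quotes it from~\cite{givens84}. Your write-up therefore supplies strictly more than the paper does. If anything, the cited reference predates Brenier and proceeds instead by computing the Wasserstein distance between Gaussians directly (essentially your ``self-contained alternative'': parametrize couplings by their cross-covariance, reduce to maximizing $\trace(F\Sigma_X)$ over $F$ with $F\Sigma_X F^\top=\Sigma_Y$, and identify the maximizer). The Brenier-based argument you chose is shorter and conceptually cleaner, at the cost of importing a deeper theorem; the direct computation is more elementary but longer. Either is fine for the paper's purposes, since the result is only used as a black box in the time-stepping construction of Prop.~\ref{prop:opt-sde-vector}.
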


\subsection{The time-stepping optimization procedure} 
\label{subsec:stepping}

To uniquely identify an exact stochastic process $\Xbar_{t}$, the following time stepping
optimization procedure is proposed:

\begin{enumerate}
\item Divide the time interval $[0,T]$ into $n \in \mathbb{N}$ equal time steps with the time instants $t_0=0 < t_1 < \ldots < t_n = T$. 
\item Initialize a discrete time random process $\{\Xbar_{t_k}\}_{k=1}^n$ according to the initial distribution (prior) of $X_0$,
\begin{equation*}
\Xbar_{t_0} \sim \pi_0
\end{equation*}
\item For each time step $[t_k,t_{k+1}]$, evolve the process $\Xbar_{t_k}$ according to 
\begin{equation}
\Xbar_{t_{\ii+1}}=T_\ii(\Xbar_{t_\ii}),\quad \text{for}\quad \ii=0,\ldots,n-1
\label{eq:updatestep}
\end{equation}
where the map $T_k$ is the optimal transport map between the
probability measures $\pi_{t_k}$ and
$\pi_{t_{k+1}}$. 

\item Take the limit as $n\to \infty$ to obtain the continuous-time process $\Xbar_t$ and the sde:
\begin{equation}
\ud \Xbar_t = u_t(\Xbar_t)\ud t + \k_t(\Xbar_t)\ud Z_t
\label{eq:opt-sde-uk}
\end{equation} 
\end{enumerate}

The procedure leads to the control laws $u_t$ and $\k_t$ that depend
upon $\pi_t$.  Since $\pi_t$ is unknown, one simply replaces it with
$\bar{\pi}_t$ -- the two are meant to be identical by construction. 
The resulting sde~\eqref{eq:opt-sde-uk} is referred to as the optimal
transport FPF or simply the optimal FPF.  A definition is needed
to state the main result.

\begin{definition}
	For  a given positive-definite matrix $Q\succ 0$, define
        $\sRicc(Q)$ as the (unique such) symmetric solution to the matrix equation:
	\begin{equation}
	\sRicc(Q)Q + Q\sRicc(Q) = \Ricc(Q) 
	\label{eq:sRicc-def}
	\end{equation} 
\end{definition}

\begin{remark}
	The symmetric solution to the matrix
	equation~\eqref{eq:sRicc-def} is explicitly given by:
	\begin{equation*}
\sRicc(Q) = \int_0^\infty e^{-sQ}\Ricc(Q)e^{-sQ}\ud s
	\end{equation*}
	The solution can also be expressed as:
	\begin{equation}\label{eq:G-def-with-Omega}
	\sRicc(Q) = A - \frac{1}{2}Q H^\top H + \frac{1}{2}\Sigma_BQ^{-1} + \Omega Q^{-1}
	\end{equation}
	where $\Omega$ is the (unique such) skew-symmetric matrix that solves the matrix equation
	\begin{equation}
	\begin{aligned}
	\Omega Q^{-1} + Q^{-1}\Omega =
	(A^\top & - A)  
	+ 
	\half(QH^\top H - H^\top HQ) \\+  & \;\;\half (\Sigma_B Q^{-1} -Q^{-1}\Sigma_B)
	\end{aligned}
	\label{eq:Omega}
	\end{equation}
\end{remark}

The main result is as follows.  Its proof appears in the Appendix~\ref{proof:opt-sde}.

\medskip

\begin{proposition}
Consider the linear Gaussian filtering problem~\eqref{eqn:Signal_Process}-\eqref{eqn:Obs_Process}. 
Assume the initial covariance $\Sigma_0 \succ 0$.  
Then the optimal transport FPF is given by
\begin{equation}
\begin{aligned}
\ud \bar{X}_t = 
&A \bar{m}_t\ud t +
\bar{\k}_t(\ud Z_t - H{\bar{m_t}}\ud t) + G_t(\bar{X}_t-\bar{m}_t)\ud t
\end{aligned}
\label{eq:opt-sde}
\end{equation}
where $\Kbar_t:=\Sigmabar_tH^T$, $\mbar_t=\expect [\Xbar_t|\clZ_t]$, $\bar{\Sigma}_t=\expect [(\Xbar_t-\mbar_t)(\Xbar_t-\mbar_t)^\top|\clZ_t]$, $\bar{X}_0 \sim N(m_0,\Sigma_0)$,
and $G_t = \sRicc(\Sigmabar_t)$

The filter is exact: That is, the conditional distribution of
$\bar{X}_t$ is Gaussian $N(\bar{m}_t,\bar{\Sigma}_t)$  with
$\bar{m}_t=m_t$ and $\bar{\Sigma}_t=\Sigma_t$. 

\label{prop:opt-sde-vector}
\end{proposition}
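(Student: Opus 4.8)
The plan is to carry out the time-stepping procedure of Sec.~\ref{subsec:stepping} explicitly and pass to the limit. The key observation is that for the linear Gaussian problem the posterior $\pi_{t_k} = \NN(m_{t_k},\Sigma_{t_k})$ remains Gaussian, so the optimal transport map $T_k$ between the successive Gaussians $\pi_{t_k}$ and $\pi_{t_{k+1}}$ is the affine map furnished by Theorem~\ref{thm:opt-map-Gaussians}. First I would write one Euler-Maruyama increment of the Kalman-Bucy filter: over $[t_k,t_{k+1}]$ with $\Delta t = t_{k+1}-t_k$, the mean updates as $m_{t_{k+1}} = m_{t_k} + A m_{t_k}\Delta t + \Kbar_{t_k}(\Delta Z_{t_k} - H m_{t_k}\Delta t) + o(\Delta t)$ and the covariance as $\Sigma_{t_{k+1}} = \Sigma_{t_k} + \Ricc(\Sigma_{t_k})\Delta t + o(\Delta t)$. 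Then I would apply Theorem~\ref{thm:opt-map-Gaussians} to get $T_k(x) = m_{t_{k+1}} + F_k(x - m_{t_k})$ with $F_k = \Sigma_{t_{k+1}}^{1/2}(\Sigma_{t_{k+1}}^{1/2}\Sigma_{t_k}\Sigma_{t_{k+1}}^{1/2})^{-1/2}\Sigma_{t_{k+1}}^{1/2}$.

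The heart of the argument is to expand $F_k$ to first order in $\Delta t$. Writing $\Sigma_{t_{k+1}} = \Sigma_{t_k} + \Ricc(\Sigma_{t_k})\Delta t + o(\Delta t)$, one shows $F_k = I + M_k \Delta t + o(\Delta t)$, where $M_k$ is the symmetric matrix satisfying $M_k\Sigma_{t_k} + \Sigma_{t_k} M_k = \Ricc(\Sigma_{t_k})$ — this is exactly the defining relation~\eqref{eq:sRicc-def} for $\sRicc(\Sigma_{t_k})$, so $M_k = \sRicc(\Sigma_{t_k})$. Concretely, this identification comes from differentiating the matrix square-root relation: if $\Sigma(t)$ solves $\dot\Sigma = \Ricc(\Sigma)$ and we look for the family of transport maps pushing $\NN(0,\Sigma(s))$ to $\NN(0,\Sigma(t))$, its generator at $t=s$ is precisely the symmetric solution of the Sylvester/Lyapunov equation~\eqref{eq:sRicc-def}; the integral formula $\sRicc(Q) = \int_0^\infty e^{-sQ}\Ricc(Q)e^{-sQ}\,\ud s$ from the Remark gives the well-definedness and uniqueness. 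Substituting back, the increment of $\Xbar$ becomes
\begin{equation*}
\Xbar_{t_{k+1}} - \Xbar_{t_k} = (m_{t_{k+1}} - m_{t_k}) + (F_k - I)(\Xbar_{t_k} - m_{t_k}) = A m_{t_k}\Delta t + \Kbar_{t_k}(\Delta Z_{t_k} - H m_{t_k}\Delta t) + \sRicc(\Sigma_{t_k})(\Xbar_{t_k} - m_{t_k})\Delta t + o(\Delta t),
\end{equation*}
which is exactly one Euler step of~\eqref{eq:opt-sde}. Letting $n\to\infty$ yields the claimed sde, with $G_t = \sRicc(\Sigmabar_t)$ after replacing $\pi_t$ by $\bar\pi_t$ as prescribed.

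For exactness, I would argue directly on~\eqref{eq:opt-sde}: taking conditional expectation given $\clZ_t$ shows $\mbar_t$ satisfies the Kalman mean equation~\eqref{eq:kalman-mean} (the $G_t(\Xbar_t-\mbar_t)$ term has zero conditional mean), and forming the equation for $\xi_t = \Xbar_t - \mbar_t$ gives $\ud\xi_t = G_t\xi_t\,\ud t$, so $\Sigmabar_t = \var(\xi_t|\clZ_t)$ obeys $\frac{\ud}{\ud t}\Sigmabar_t = G_t\Sigmabar_t + \Sigmabar_t G_t^\top = \Ricc(\Sigmabar_t)$ by the definition of $\sRicc$; hence $(\mbar_t,\Sigmabar_t)$ and $(m_t,\Sigma_t)$ solve the same ODEs with the same initial data, giving $\mbar_t = m_t$, $\Sigmabar_t = \Sigma_t$, and since the process is a linear sde driven by Gaussian data the conditional law is Gaussian, hence equals $\pi_t$. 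Here I would need to invoke $\Sigma_0\succ 0$ (and, implicitly, controllability/observability or at least that $\Sigmabar_t\succ 0$ persists — noting $\frac{\ud}{\ud t}\Sigmabar_t \succeq -\Sigmabar_t H^\top H\Sigmabar_t$ type bounds keep it nonsingular on $[0,T]$) so that all matrix square roots and inverses in $\sRicc$ and in Theorem~\ref{thm:opt-map-Gaussians} are legitimate.

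The main obstacle I anticipate is the rigorous first-order expansion of $F_k$: the map $Q \mapsto Q^{1/2}$ is differentiable on positive-definite matrices but its derivative is itself the solution of a Lyapunov equation, and one must carefully track that the cross terms $\Sigma_{t_{k+1}}^{1/2}\Sigma_{t_k}\Sigma_{t_{k+1}}^{1/2}$ combine so that the $O(\Delta t)$ coefficient of $F_k$ is the \emph{symmetric} solution of~\eqref{eq:sRicc-def} and not merely some solution of $G\Sigma + \Sigma G^\top = \Ricc(\Sigma)$ (the optimal map is the one with symmetric generator — this is where optimal transport pins down the gauge $\Omega$). Establishing this cleanly, together with a uniform-in-$k$ control of the $o(\Delta t)$ remainders to justify the limit $n\to\infty$, is the crux; everything else is bookkeeping.
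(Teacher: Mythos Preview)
Your proposal is correct and follows essentially the same route as the paper: apply Theorem~\ref{thm:opt-map-Gaussians} at each step, expand $F_k$ to first order in $\Delta t$ to identify the generator as $\sRicc(\Sigma_{t_k})$, sum and pass to the limit, and verify exactness by checking the mean and covariance equations. The paper streamlines the expansion of $F_k$ (your anticipated ``main obstacle'') via the identity $F(s)\Sigma_t F(s) = \Sigma_{t+s}$, so that differentiating at $s=0$ immediately gives $\dot F(0)\Sigma_t + \Sigma_t\dot F(0) = \Ricc(\Sigma_t)$ and uniqueness of the symmetric Lyapunov solution pins down $\dot F(0) = \sRicc(\Sigma_t)$---this avoids differentiating matrix square roots directly.
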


\medskip


Using  the form of the solution~\eqref{eq:G-def-with-Omega} for $G_t=\sRicc(\Sigmabar_t)$, the optimal transport sde~\eqref{eq:opt-sde} is expressed as
	\begin{equation}
	\begin{aligned}
	\ud \bar{X}_t = &A \bar{X}_t \ud t + \frac{1}{2}\Sigma_B\Sigmabar_t^{-1}\ (\bar{X}_t - \bar{m}_t)\ud t \\
	&+\frac{1}{2}\bar{\k}_t(\ud Z_t - \frac{H \bar{X}_t+H\bar{m}_t}{2}\ud t) + \Omega_t\Sigmabar_t^{-1}(\bar{X}_t - \bar{m}_t)\ud t 
	\end{aligned}\label{eq:opt-sde-Omega}
\end{equation}
Compared to the original (linear Gaussian) FPF~\eqref{eq:FPF-lin}, the
optimal transport FPF~\eqref{eq:opt-sde-Omega} has two differences:
\begin{enumerate}
\item The stochastic term $\sigma_B\ud \bar{B}_t$ is replaced with the deterministic term $\frac{1}{2}\Sigma_B \bar{\Sigma}_t^{-1}(\bar{X}_t-\bar{m}_t)\ud t$. Given a Gaussian prior, the two terms yield the same posterior.
However, in a finite-$N$ implementation, the difference becomes
significant. The stochastic term serves to introduce an additional error of order $O(\frac{1}{\sqrt{N}})$.  
\item The sde~\eqref{eq:opt-sde-Omega} has an extra term involving the
  skew-symmetric matrix $\Omega_t$. The extra term does not effect the
  posterior distribution. This term is viewed as a correction term
  that serves to make the dynamics symmetric and hence optimal in the
  optimal transportation sense.  It is noted that for the scalar
  ($d=1$) case, the skew-symmetric term is zero.  Therefore, in the
  scalar case, the update formula in the original FPF~\eqref{eq:FPF-lin} is optimal.  In
  the vector case, it is optimal iff $\Omega_t\equiv 0$.  
\end{enumerate} 

\medskip
\subsection{Finite-$N$ implementation in non-singular covariance case}
 \label{rem:singular-covariance}
The finite-$N$ implementation of the optimal transport sde~\eqref{eq:opt-sde} is given by the following system of $N$ equations:
 \begin{equation}
 \begin{aligned}
 \ud {X}_t^i = & A\mN_t \ud t + 
 \K^{(N)}_t (\ud Z_t - H m^{(N)}_t \ud t) \\+ &\sRicc(\SigN_t)({X}_t^i - m^{(N)}_t
 )\ud t
 \end{aligned}
 \label{eq:Xit-d}
 \end{equation}
for $i=1,\ldots,N$, where  $\K^{(N)}_t :=\Sigma^{(N)}_tH^\top$; $X^i_0
\stackrel {\text{i.i.d}}{\sim} \mathcal{N}(m_0,\Sigma_0)$; and empirical
approximations of mean and variance are defined
in~\eqref{eq:empr_app_mean_var}.

The matrix $\sRicc(\SigN_t)$ is not well-defined if $\SigN_t$ is a singular matrix.  This
is a problem because in a finite-$N$ implementation, it is possible
that $\SigN_t \nsucc 0$ even though $\Sigma_t\succ 0$. 
In particular,  when $N<d$, the empirical covariance matrix is of rank
at most $N$ and hence singular. 
Note that this problem {\em only} arises for the optimal and deterministic forms
of the FPF.  In particular, the
stochastic FPF~\eqref{eq:Xit-s} does not suffer from this
issue. It can be simulated for any choice of
$N$. In~\Sec{sec:singular-case}, we extend the optimal transportation
formulation 
to handle also the singular forms of the covariance matrix.


\subsection{The singular covariance case}\label{sec:singular-case}

The derivation of the optimal FPF~\eqref{eq:opt-sde}
crucially relies on the assumption that $\Sigmabar_{0}\succ 0$ which in
turn implies that, in the time-stepping procedure,
$\Sigmabar_{t_k}\succ 0$ for $k=0,1,\ldots,n-1$.  In the proof of~\Proposition{prop:opt-sde-vector}, the assumption is used to derive the
optimal transport map $T_k$.  In general, when the
covariance of Gaussian random variables $\Xbar_{t_k}$ or $\Xbar_{t_{k+1}}$ is
singular, the optimal transport map $T_k$ may not exist.  

In the
singular case, a relaxed form of the optimal transportation problem,
first introduced by Kantorovich, is used to search for optimal
(stochastic) couplings instead of (deterministic) transport
maps~\cite{villani2003}.  The following example helps illuminate the
issue:

\begin{example}\label{example1}
Consider Gaussian random variable $X$ and $Y$ with distributions,
$\NN(m_X,\Sigma_X)$ and $\NN(m_Y,\Sigma_Y)$, respectively.
Suppose \[m_X=m_Y=\begin{bmatrix} 0 \\ 0 \end{bmatrix},\quad \Sigma_X = \begin{bmatrix}
1 & 0\\0 & \epsilon 
\end{bmatrix}
,\quad 
\Sigma_Y = \begin{bmatrix}
1 & 0\\0 & 1
\end{bmatrix}
\] where $\epsilon\geq0$ is small. 
If $\epsilon >0$, the optimal transportation map exists, and is given by \[Y =  \begin{bmatrix}
1 & 0\\0 & \frac{1}{\sqrt{\epsilon}}
\end{bmatrix}X\] 
If $\epsilon=0$ then there is no transport map that satisfies the
constraints of the optimal transportation problem.

The Kantorovich relaxation of the optimal transportation
problem~\eqref{eq:opt-transp} is the following optimization problem:
\begin{equation}
 \min_{\mu}~\Expect_{(X,Y)\sim \mu}[|X-Y|^2]\label{eq:Kantorovich}
 \end{equation}
 where $\mu$ is a joint distribution on $\Re^d\times \Re^d$ with
 marginals fixed to $\mu_X$ and $\mu_Y$. 
 
Although a deterministic map does not exist for the $\epsilon=0$ problem, a
(stochastic) coupling that solves the Kantorovich
problem~\eqref{eq:Kantorovich} exists and is given by \[Y = X + \begin{bmatrix}
0 \\ 1
\end{bmatrix} B\] where $B \sim \mathcal{N}(0,1)$ is independent of $X$. 
\end{example}

\medskip

In Appendix~\ref{apdx:singular-case}, the Kantorovich relaxation is
used to motivate an optimization problem whose solution yields the
following extension of the optimal FPF:  
\begin{equation}
	\label{eq:opt-sde-singular}
	\ud \Xbar_t = A\mbar_t + \Kbar_t(\ud Z_t - H\mbar_t   \ud t) + G_t(\Xbar_t-\mbar_t)\ud t + \sigma_t\ud \Bbar_t
	\end{equation}
with $\sigma_t := 
P_K\sigma_B$ where
$P_K$ is
the projection matrix into the kernel of the matrix $\bar{\Sigma}_t$, and $G_t$ is any symmetric solution of the matrix equation
\begin{align}
	G_t \Sigmabar_t + \Sigmabar_t G_t& = \Ricc(\Sigmabar_t) - \sigma_t (\sigma_t)^\top  \label{eq:Lyapunov-G-e}
	\end{align}

\begin{remark}
	When $\bar{\Sigma}_t$ is singular, the solution to the matrix
        equation~\eqref{eq:Lyapunov-G-e} is not unique. 
It is shown in Appendix~\ref{apdx:singular-case} that the solution 
is of the following
general form:
        \begin{equation}
	\begin{aligned}
	G_t = &A - \frac{1}{2}\bar{\Sigma}_t H H^\top  +
                \frac{1}{2}P_R \Sigma_B  \Sigmabar_t^+  + P_K \Sigma_B \Sigmabar_t^+ + \Sigmabar_t^+ \Sigma_B P_K  \\&+  P_R\Omega^{(1)}\Sigmabar_t^+ + P_K(\Omega^{(0)}  - A)P_K
	\end{aligned}
	\label{eq:singular-G-Omega}
\end{equation}
	where $\Sigmabar_t^+$ is the pseudo inverse\footnote{The
          pseudo inverse of matrix $Q$ is the unique matrix $Q^+$ that
          satisfies $Q^+QQ^+=Q^+$, $QQ^+Q=Q$, $Q^+Q$ is symmetric, and
          $QQ^+$ is symmetric~\cite{ben2003generalized}.} of
        $\Sigmabar_t$, $P_R$ is projection matrix onto the range of
        $\Sigmabar_t$, and $\Omega^{(1)} \in \Re^{d\times d}$ is a
        skew-symmetric matrix that solves a certain matrix
        equation~\eqref{eq:singular-Omega} and $\Omega^{(0)}$ is an arbitrary symmetric matrix.

Using the formula~\eqref{eq:singular-G-Omega}, the optimal
FPF~\eqref{eq:opt-sde-singular} is expressed as follows:
	\begin{equation}
	\begin{aligned}
	\ud \Xbar_t = &A\Xbar_t \ud t + \frac{1}{2}(\sigma_B + \sigma_t) \sigma_B^\top \Sigmabar_t^+(\Xbar_t-\mbar_t)\ud t + \sigma_t \ud \Bbar_t \\&+ \Kbar_t(\ud Z_t - \frac{H\Xbar_t + H\mbar_t}{2}\ud t) 
	+ \text{[additional terms]}
	\end{aligned}
	\label{eq:opt-sde-singular-Omega}
	\end{equation}
The formula~\eqref{eq:opt-sde-singular-Omega} allows one to clearly
see the relationship between the deterministic and stochastic forms of
the optimal FPF.  In particular, when the covariance matrix is non-singular,
 $\Sigmabar_t^+ = \Sigmabar_t^{-1}$, 
and
$\sigma_t = P_K\sigma_B=0$.  This results in the deterministic form of
the optimal transport
FPF~\eqref{eq:opt-sde-Omega}. When the covariance matrix is singular, then the effect of the linear term $\frac{1}{2}\Sigma_B \Sigmabar_t^{-1}(\Xbar_t-\mbar_t) \ud t$ in~\eqref{eq:opt-sde-Omega} is now simulated with the two terms $ \frac{1}{2}(\sigma_B + \sigma_t) \sigma_B^\top \Sigmabar_t^+(\Xbar_t-\mbar_t)\ud t + \sigma_t \ud \bar{B}_t$ in~\eqref{eq:opt-sde-singular-Omega}. This is conceptually similar to the Example~\ref{example1}, where the deterministic optimal transport map is replaced with a stochastic coupling.        
The [additional terms] in~\eqref{eq:opt-sde-singular-Omega} do not
affect the distribution. 
\end{remark} 


\subsection{Finite-$N$ implementation in the singular covariance case}
%

The finite-$N$ implementation of the optimal transport
sde~\eqref{eq:opt-sde-singular} is given by the following system of
$N$ equations:
\begin{equation}
\begin{aligned}
\ud X^i_t = &A\mN_t \ud t + 
\k^{(N)}_t (\ud Z_t - H m^{(N)}_t \ud t)  + G^{(N)}_t({X}_t^i - m^{(N)}_t)\ud t+\\& \sigma_t^{(N)} \ud B^i_t,\quad X^i_0 \sim \calN(m_0,\Sigma_0)
\end{aligned}
\label{eq:Xi-d-singular}
\end{equation}
where $\K^{(N)}_t :=\Sigma^{(N)}_tH^\top$; $\{B^i_t\}_{i=1}^N$ are
independent copies of $B_t$, $\sigma_t^{(N)} = \sigma_B -\SigN_t (\SigN_t)^{+}\sigma_B$, $G^{(N)}_t$ is a symmetric matrix solution to the matrix equation 
\begin{align*}
G^{(N)}_t \SigN_t + \SigN_t G^{(N)}_t& = \Ricc(\SigN_t) - \sigma_t^{(N)} \sigma_t^{(N)^\top} 
\end{align*}
and $\mN_t$ and $\SigN_t$ are empirical mean and covariance  defined
in~\eqref{eq:empr_app_mean_var}. Note that the stochastic term is zero
when $\sigma_B \in \text{Range}(\SigN_t)$, which is true, e.g., when $\sigma_B \in \text{span}\{X^1_t,\ldots,X^N_t\}$. 

\section{Error analysis}
\label{sec:mean_covariance}
 
This section is concerned with error bounds in the large but finite
$N$ regime.  Given that $N$ is large, we restrict ourselves to the
non-singular case.  For the purposes of the error analysis, the
following assumptions are made:

\newP{Assumption (I)} The system $(A,H)$ is detectable and
$(A,\sigma_B)$ is stabilizable.     

\newP{Assumption (II)} 
Assume $N>d$ and the initial empirical covariance matrix $\SigN_0
\succ 0$ almost surely. 

\medskip

The main result for the finite-$N$ deterministic FPF~\eqref{eq:Xit-d} is as follows with
the proof given in Appendix~\ref{apdx:mean-var-error}.  

\medskip

\begin{proposition} \label{prop:conv_error}
Consider the Kalman filter~\eqref{eq:kalman-mean}-\eqref{eq:kalman-variance}
initialized with the prior ${\cal N}(m_0,\Sigma_0)$ and the
finite-$N$ deterministic form of the optimal FPF~\eqref{eq:Xit-d} initialized with
$X_0^i\stackrel{\text{i.i.d}}{\sim} {\cal N}(m_0,\Sigma_0)$ for $i=1,2,\ldots,N$.  Under
Assumption (I) and (II), the following characterizes the convergence and
error properties of the empirical mean and covariance
$(m_t^{(N)},\Sigma_t^{(N)})$ obtained from the finite-$N$ filter
relative to the mean and
covariance $(m_t,\Sigma_t)$ obtained from the Kalman filter:
\begin{romannum}
\item{Convergence:} For any finite $N>1$, as $t\rightarrow\infty$:
\begin{align*}
\lim_{t \to \infty} e^{\lambda t}\|\mN_t - m_t\|_2 &= 0\quad \text{a.s}\\
\lim_{t \to \infty} e^{2\lambda t}\Fnorm{\SigN_t - \Sigma_t} &= 0\quad \text{a.s}
\end{align*} 
for all $\lambda\in (0,\lambda_0)$ where $\lambda_0$ is a fixed positive
constant (see~\eqref{eq:lambda0_defn} in Appendix~\ref{apdx:mean-var-error}).  
\item{Mean-squared error:} For any $t>0$, as $N\rightarrow\infty$:
\begin{subequations}
	\begin{align}
	\Expect[\|\mN_t-m_t\|_2^2]&\leq \text{(const.)}e^{-2\lambda t}\frac{\trace(\Sigma_0)+\trace(\Sigma_0)^2}{N} \label{eq:mean-estimate}\\
	\Expect[\Fnorm{\SigN_t-\Sigma_t}^2]&\leq \text{(const.)}e^{-4\lambda t}\frac{\trace(\Sigma_0)^2}{N} \label{eq:var-estimate}
	\end{align} 
\end{subequations}
for all $\lambda \in (0,\lambda_0)$. The constant depends on $\lambda$, and spectral norms $\|\Sigma_0\|$, $\|\Sigma_\infty\|_2$ and $\|H\|_2$, where $\Sigma_\infty$ is the solution to the algebraic Riccati equation (see Lemma~\ref{lem:KF-stability}).  

\end{romannum}

\end{proposition}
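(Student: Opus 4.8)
The plan is to reduce everything to classical Kalman--Bucy stability theory via one structural fact: for the deterministic optimal FPF~\eqref{eq:Xit-d}, the empirical mean $\mN_t$ and covariance $\SigN_t$ obey \emph{exactly} the Kalman--Bucy equations, only started from the random empirical initial data. First I would average~\eqref{eq:Xit-d} over $i$ and use $\tfrac1N\sum_i(X^i_t-\mN_t)=0$ to get $\ud\mN_t=A\mN_t\,\ud t+\K^{(N)}_t(\ud Z_t-H\mN_t\,\ud t)$, i.e.~\eqref{eq:kalman-mean} with gain $\K^{(N)}_t=\SigN_tH^\top$ in place of $\k_t$; then, subtracting this from~\eqref{eq:Xit-d}, the centred particles $\xi^i_t:=X^i_t-\mN_t$ solve $\tfrac{\ud}{\ud t}\xi^i_t=\sRicc(\SigN_t)\xi^i_t$, so $\tfrac{\ud}{\ud t}\SigN_t=\sRicc(\SigN_t)\SigN_t+\SigN_t\sRicc(\SigN_t)^\top=\Ricc(\SigN_t)$ by the defining property~\eqref{eq:sRicc-def} of $\sRicc$. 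Thus $\SigN_t$ is the Riccati flow launched from $\SigN_0$ --- a \emph{deterministic} function of $\SigN_0$ alone, independent of the observation path; in particular $\SigN_t$ and $\K^{(N)}_t$ are $\sigma(X^1_0,\dots,X^N_0)$-measurable, hence independent of the innovation process, which I will exploit in the It\^o isometry.

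For the covariance statements I would invoke the Kalman--Bucy stability theory (Lemma~\ref{lem:KF-stability}): under Assumption~(I) the algebraic Riccati equation has the stabilizing solution $\Sigma_\infty$, the closed loop $A-\Sigma_\infty H^\top H$ is Hurwitz, and the Riccati flow is a \emph{uniform-in-time} exponential contraction towards $\Sigma_\infty$ at a rate $\lambda_0>0$ fixed by a quadratic Lyapunov function for that closed loop. Applying this to $\SigN_t$ (legitimate since $\SigN_0\succ0$ a.s.\ by Assumption~(II)) and to $\Sigma_t$, I expect $\Fnorm{\SigN_t-\Sigma_t}\le\text{(const.)}\,e^{-2\lambda_0 t}\Fnorm{\SigN_0-\Sigma_0}$ with a constant controlled by $\|\Sigma_0\|_2$, $\|\Sigma_\infty\|_2$, $\|H\|_2$ (and moments of $\SigN_0$). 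This yields the a.s.\ statement for $\SigN_t-\Sigma_t$ in~(i) for each $\lambda<\lambda_0$; taking expectations and inserting the standard sample-covariance bound $\Expect\Fnorm{\SigN_0-\Sigma_0}^2\le\frac{1}{N-1}(\trace(\Sigma_0)^2+\trace(\Sigma_0^2))\le\frac{2}{N-1}\trace(\Sigma_0)^2$ for i.i.d.\ Gaussians then produces~\eqref{eq:var-estimate}.

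For the mean, subtracting~\eqref{eq:kalman-mean} from the $\mN_t$ equation and regrouping shows $e_t:=\mN_t-m_t$ solves the linear SDE $\ud e_t=(A-\K^{(N)}_tH)e_t\,\ud t+(\K^{(N)}_t-\k_t)\,\ud I_t$, with $\ud I_t:=\ud Z_t-Hm_t\,\ud t$ the innovation (a $\clZ_t$-Brownian motion) and $e_0=\mN_0-m_0$. Since $\K^{(N)}_t\to\Sigma_\infty H^\top$ exponentially, $A-\K^{(N)}_tH$ is a Hurwitz matrix plus an exponentially small perturbation, so its transition matrix satisfies $\|\Phi^{(N)}(t,s)\|_2\le\text{(const.)}\,e^{-\lambda_0(t-s)}$, while $\|\K^{(N)}_t-\k_t\|_F\le\|H\|_2\Fnorm{\SigN_t-\Sigma_t}$ decays at rate $2\lambda_0$ by the previous step. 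Because $\Phi^{(N)}$ and $\K^{(N)}$ are deterministic given $\sigma(X^1_0,\dots,X^N_0)$, the It\^o isometry applied to $e_t=\Phi^{(N)}(t,0)e_0+\int_0^t\Phi^{(N)}(t,s)(\K^{(N)}_s-\k_s)\,\ud I_s$ gives
\[
\Expect\|e_t\|_2^2\le\text{(const.)}\,e^{-2\lambda_0 t}\Expect\|e_0\|_2^2+\text{(const.)}\,\|H\|_2^2\int_0^t e^{-2\lambda_0(t-s)}\Expect\Fnorm{\SigN_s-\Sigma_s}^2\,\ud s,
\]
and substituting $\Expect\|e_0\|_2^2=\trace(\Sigma_0)/N$ together with the covariance bound yields~\eqref{eq:mean-estimate}, the two summands contributing the $\trace(\Sigma_0)$ and $\trace(\Sigma_0)^2$ terms.

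It remains to prove the a.s.\ statement for the mean in~(i) and to identify the main obstacle. The term $e^{\lambda t}\Phi^{(N)}(t,0)e_0\to0$ a.s.\ for $\lambda<\lambda_0$ since $e_0$ is a fixed random vector; for the stochastic-integral part $J_t$ I would fix $\lambda'\in(\lambda,\lambda_0)$ and apply It\^o to $V_t:=e^{2\lambda't}\,J_t^\top PJ_t$ with $P\succ0$ a Lyapunov matrix for the closed loop, so that the drift of $V_t$ is the sum of a nonpositive term and an exponentially decaying (integrable) term and the martingale part has integrable quadratic variation thanks to the exponential $L^2$-decay of $J_t$ established above; hence $V_t$ converges a.s.\ to a finite limit and $e^{2\lambda t}\|J_t\|_2^2\le e^{-2(\lambda'-\lambda)t}V_t/\lambda_{\text{min}}(P)\to0$ a.s. The hard part throughout is the \emph{uniform-in-time} exponential contraction of the Riccati flow and the companion uniform exponential stability of $A-\K^{(N)}_tH$, with constants made explicit in $\|\Sigma_0\|_2$, $\|\Sigma_\infty\|_2$, $\|H\|_2$ so that the dimension enters only polynomially through $\trace(\Sigma_0)$ and $\trace(\Sigma_0)^2$: this needs the Lyapunov inequality for the Hurwitz closed loop, a Gr\"onwall estimate over a finite ``burn-in'' window before the trajectories enter the region where the linearization around $\Sigma_\infty$ dominates, and the monotonicity/comparison properties of the Riccati operator to control the transient uniformly in the random start $\SigN_0$ --- precisely the bookkeeping encapsulated in Lemma~\ref{lem:KF-stability}.
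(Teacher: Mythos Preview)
Your proposal is correct and follows the paper's approach: show that $(\mN_t,\SigN_t)$ satisfies the Kalman--Bucy equations exactly, then invoke Riccati/Kalman stability (Lemma~\ref{lem:KF-stability}) for part~(i) and combine transition-matrix decay with the It\^o isometry on the innovation for part~(ii). The only noteworthy simplifications in the paper are that the a.s.\ mean convergence in part~(i) is read off directly from Lemma~\ref{lem:KF-stability}(iv) (so your Lyapunov--It\^o argument for the stochastic integral, while correct, is not needed), and the covariance error is handled via the exact identity $\SigN_t-\Sigma_t=\Phi_{t,0}(\SigN_0-\Sigma_0)\bigl(\Phi^{(N)}_{t,0}\bigr)^{\!\top}$, obtained by observing that $\tfrac{\ud}{\ud t}(\SigN_t-\Sigma_t)=(A-\Sigma_tH^\top H)(\SigN_t-\Sigma_t)+(\SigN_t-\Sigma_t)(A-\SigN_tH^\top H)^\top$ is already linear in the difference, which gives the $\Fnorm{\SigN_0-\Sigma_0}$ dependence without a separate contraction argument.
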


\medskip

\begin{remark}\label{rem:scaling-dim} Two remarks are in order:
	\begin{enumerate}
		\item Asymptotically (as $t \to \infty$) the empirical mean and variance of
		the finite-$N$ filter becomes exact.  This is because of the stability
		of the Kalman filter whereby the filter forgets the initial
		condition. In fact, for any (not necessarily i.i.d Gaussian)
		$\{X_0^i\}_{i=1}^N$, that satisfy the assumption $\SigN_0\succ 0$, the result holds.
		
		\item 	(Scaling with dimension) If the parameters of
                  the linear Gaussian filtering
                  problem~\eqref{eqn:Signal_Process}-\eqref{eqn:Obs_Process}
                  scale with the dimension in a way that the spectral
                  norms $\|\Sigma_0\|_2$, $\|\Sigma_\infty\|_2$,
                  $\|H\|_2$, and $\lambda_0$ do not change, then the
                  constant in the error
                  bounds~\eqref{eq:mean-estimate}-\eqref{eq:var-estimate}
                  do not change. The only term that scales with the
                  dimension is  $\trace(\Sigma_0)$. For example, with
                  $\Sigma_0 = \sigma_0^2 I_{d \times d}$, $\trace(\Sigma_0)= d \sigma_0^2$. Therefore,
                  the bound on the error typically scales as $d^2$ in
                  problem dimension.

	\end{enumerate}

\end{remark}

\subsection{Propagation of chaos}
\label{sec:poa}

In this section, we study the convergence of the empirical
distribution of the particles $\pi^{(N)}_t$ for the finite-$N$
system~\eqref{eq:Xit-d} to the exact posterior distribution $\pi_t$. 
Derivation of error estimates involve construction of $N$ independent
copies of the exact process~\eqref{eq:opt-sde}.  Consistent with
the convention to denote mean-field variables with a bar, the
stochastic processes are denoted as $\{\bar{X}_t^i:1\le i \le N\}$
where $\bar{X}^i_t$ denotes the state of the $i^{\text{th}}$ particle
at time $t$.  The particle evolves according to the mean-field
equation~\eqref{eq:opt-sde} as
\begin{align}
\ud \bar{X}^i_t & = A \bar{m}_t \ud t + 
\bar{\k}_t(\ud Z_t - H{\bar{m}_t}\ud t) + \sRicc(\Sigmabar_t)(\bar{X}_t^i -\bar{m}_t)\
 \label{eq:barXit}
\end{align} 
where $\bar{\k}_t =\bar{\Sigma}_tH^\top$ is the Kalman
gain and the initial condition $\bar{X}^i_0=X^i_0$, the right-hand side
being the initial condition of
the $i^{\text{th}}$ particle in the finite-$N$ FPF~\eqref{eq:Xit-d}. The mean-field process $\bar{X}^i_t$ is thus coupled to $X^i_t$
through the initial condition.

 In order to carry out the error analysis,  the following event is
 defined for an arbitrary choice of a fixed matrix $\Lambda_0 \succ 0$:
\begin{equation}
\label{eq:S-def}
S_{\Lambda_0}\eqdef\{\SigN_0 \succ \Lambda_0\}
\end{equation}
The following proposition characterizes the error between $X^i_t$ and
$\bar{X}^i_t$, when the event $S_{\Lambda_0}$ is true (the estimate is key to the propagation of
chaos analysis). The proof appears in the
Appendix~\ref{apdx:prop-chaos}.

\begin{proposition} \label{prop:prop-chaos}
Consider the stochastic processes $X^i_t$ and $\bar{X}^i_t$ whose
evolution is defined 
according to the optimal transport FPF~\eqref{eq:Xit-d} and its mean-field
model~\eqref{eq:barXit}, respectively. The initial condition $X_0^i\stackrel{\text{i.i.d}}{\sim} {\cal N}(m_0,\Sigma_0)$ for $i=1,2,\ldots,N$ 
 Then, under
Assumptions~(I) and (II):  
\begin{equation}
\Expect[\|X^i_t-\bar{X}^i_t\|^2_2\mathds{1}_{S_{\Lambda_0}}]^{1/2} \leq
\frac{\text{(const.)}}{{\sqrt{N}}}
\label{eq:estimate_POA}
\end{equation}  

\end{proposition}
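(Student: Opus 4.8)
\emph{Proof plan.} The starting observation is that, for the \emph{deterministic} optimal FPF, neither~\eqref{eq:Xit-d} nor its mean-field model~\eqref{eq:barXit} carries a per-particle Brownian forcing, so given $(\clZ_t,\{X^j_0\}_{j=1}^N)$ the whole ensemble path is deterministic, and given $(\clZ_t,X^i_0)$ the path $\bar X^i_\cdot$ is deterministic. First I split the error. Averaging~\eqref{eq:Xit-d} over $i$ shows that $(\mN_t,\SigN_t)$ is itself a Kalman--Bucy filter, driven by the same $Z$, initialized from $(\mN_0,\SigN_0)$; in particular, by~\eqref{eq:sRicc-def}, $\SigN_t$ solves the \emph{deterministic} Riccati ODE~\eqref{eq:kalman-variance} from $\SigN_0$. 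Set $\eta_t:=\mN_t-m_t$, $\zeta^i_t:=X^i_t-\mN_t$, and (using exactness, Prop.~\ref{prop:opt-sde-vector}, so that $\mbar_t=m_t$ and $\Sigmabar_t=\Sigma_t$) $\bar\zeta^i_t:=\bar X^i_t-m_t$, so that $X^i_t-\bar X^i_t=\eta_t+(\zeta^i_t-\bar\zeta^i_t)$. Subtracting the $\mN_t$-equation from~\eqref{eq:Xit-d} and the Kalman mean equation~\eqref{eq:kalman-mean} from~\eqref{eq:barXit} --- in each case the observation-dependent and mean-drift terms cancel --- leaves the linear ODEs $\dot\zeta^i_t=\sRicc(\SigN_t)\zeta^i_t$ and $\dot{\bar\zeta}^i_t=\sRicc(\Sigma_t)\bar\zeta^i_t$; since $\Sigma_t$ is deterministic, $\bar\zeta^i_t=\Psi_t(X^i_0-m_0)$ for a deterministic transition matrix $\Psi_t$.

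Next, $d^i_t:=\zeta^i_t-\bar\zeta^i_t$ solves $\dot d^i_t=\sRicc(\SigN_t)d^i_t+f^i_t$ with forcing $f^i_t:=\big(\sRicc(\SigN_t)-\sRicc(\Sigma_t)\big)\bar\zeta^i_t$ and $d^i_0=m_0-\mN_0$. The key device is the time-varying Lyapunov function $V_t:=(d^i_t)^\top(\SigN_t)^{-1}d^i_t$: because $\SigN_t$ solves~\eqref{eq:kalman-variance} and $\sRicc(\SigN_t)$ is symmetric, the contribution $(\SigN_t)^{-1}\sRicc(\SigN_t)+\sRicc(\SigN_t)(\SigN_t)^{-1}=(\SigN_t)^{-1}\Ricc(\SigN_t)(\SigN_t)^{-1}$ of the homogeneous drift is cancelled exactly by $\tfrac{\ud}{\ud t}(\SigN_t)^{-1}=-(\SigN_t)^{-1}\Ricc(\SigN_t)(\SigN_t)^{-1}$, so $\dot V_t=2(d^i_t)^\top(\SigN_t)^{-1}f^i_t$ and, by Cauchy--Schwarz in the $(\SigN_t)^{-1}$ inner product, $\tfrac{\ud}{\ud t}\sqrt{V_t}\le\|f^i_t\|_{(\SigN_t)^{-1}}$ --- the homogeneous flow is an isometry in this metric, so nothing grows in $t$. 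On the event $S_{\Lambda_0}$, Riccati comparison together with the filter-stability estimates of Lemma~\ref{lem:KF-stability} furnish uniform-in-$t$ two-sided Loewner bounds $c\,I\preceq\SigN_t\preceq C(1+\|\SigN_0\|_2)I$ (with $c$ depending on $\Lambda_0$), which convert the $V_t$ inequality into $\|d^i_t\|_2\mathds{1}_{S_{\Lambda_0}}\le C'\big(\|d^i_0\|_2+\int_0^t\|f^i_s\|_2\mathds{1}_{S_{\Lambda_0}}\ud s\big)$ and make $\sRicc(\cdot)$ Lipschitz on the relevant range, so $\|f^i_s\|_2\le L\,\Fnorm{\SigN_s-\Sigma_s}\,\|\bar\zeta^i_s\|_2$.

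Taking $L^2$ norms and using Minkowski's integral inequality, it remains to bound $\Expect[\|d^i_0\|_2^2]^{1/2}$, which equals $\sqrt{\trace(\Sigma_0)/N}$, and $\int_0^t\Expect[\Fnorm{\SigN_s-\Sigma_s}^2\|\bar\zeta^i_s\|_2^2\mathds{1}_{S_{\Lambda_0}}]^{1/2}\ud s$. For the latter, since $\bar\zeta^i_s=\Psi_s(X^i_0-m_0)$ with $\Psi_s$ deterministic and uniformly bounded (Lemma~\ref{lem:KF-stability}), Cauchy--Schwarz gives $\Expect[\Fnorm{\SigN_s-\Sigma_s}^2\|\bar\zeta^i_s\|_2^2\mathds{1}_{S_{\Lambda_0}}]\lesssim\Expect[\Fnorm{\SigN_s-\Sigma_s}^4\mathds{1}_{S_{\Lambda_0}}]^{1/2}\,\Expect[\|X^i_0-m_0\|_2^4]^{1/2}$; the Gaussian fourth moment is $O(1)$, and a fourth-moment version of Prop.~\ref{prop:conv_error}-(ii) --- obtained by the same mechanism (the difference of two Riccati flows from $\SigN_0$ and $\Sigma_0$ contracts like $e^{-2\lambda s}\Fnorm{\SigN_0-\Sigma_0}$, and the sample covariance of $N$ i.i.d.\ Gaussians concentrates with all centred moments of order $N^{-1}$) --- yields $\Expect[\Fnorm{\SigN_s-\Sigma_s}^4\mathds{1}_{S_{\Lambda_0}}]\lesssim e^{-8\lambda s}/N^2$. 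Hence the integral is $\lesssim N^{-1/2}\int_0^t e^{-2\lambda s}\ud s$, which is $O(N^{-1/2})$ uniformly in $t$; adding $\Expect[\|\eta_t\|_2^2]^{1/2}\lesssim e^{-\lambda t}N^{-1/2}$ from Prop.~\ref{prop:conv_error}-(ii) proves~\eqref{eq:estimate_POA}.

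The main obstacle is the input drawn from Lemma~\ref{lem:KF-stability}: establishing, on $S_{\Lambda_0}$, the uniform-in-$t$ two-sided bounds on $\SigN_t$ and the exponential forgetting of the differential Riccati flow, and checking that the Lyapunov cancellation truly rules out any $e^{ct}$ growth of $d^i_t$, so that the final constant is $t$-uniform. A lesser technical point is upgrading Prop.~\ref{prop:conv_error}-(ii) to a fourth moment, which is needed to decouple $\Fnorm{\SigN_s-\Sigma_s}$ from $\bar\zeta^i_s$ inside the time integral.
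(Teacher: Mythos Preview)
Your proof is correct and follows the same overall skeleton as the paper: split $X^i_t-\bar X^i_t$ into the mean error $\mN_t-m_t$ (handled by Prop.~\ref{prop:conv_error}) and the centered error $\zeta^i_t-\bar\zeta^i_t$, write the latter via a variation-of-constants formula driven by $(\sRicc(\SigN_s)-\sRicc(\Sigma_s))\bar\zeta^i_s$, bound the transition matrix uniformly on $S_{\Lambda_0}$, and integrate using Minkowski.

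The genuine difference is in how you bound the transition matrix. The paper fixes the \emph{constant} Lyapunov function $V(x)=x^\top\Sigma_\infty x$, for which the $\sRicc(\Sigma_\infty)$-flow is exactly an isometry (because $\Ricc(\Sigma_\infty)=0$), and then reaches $\sRicc(\SigN_t)$ by a Gr\"onwall perturbation using $\|\sRicc(\SigN_t)-\sRicc(\Sigma_\infty)\|_2\le c\,e^{-2\lambda t}$. Your time-varying choice $V_t=(d^i_t)^\top(\SigN_t)^{-1}d^i_t$ is cleaner: the identity $Q^{-1}\sRicc(Q)+\sRicc(Q)Q^{-1}=Q^{-1}\Ricc(Q)Q^{-1}$ (obtained by conjugating~\eqref{eq:sRicc-def}) makes the homogeneous drift cancel $\tfrac{\ud}{\ud t}(\SigN_t)^{-1}$ exactly, with no perturbation step and no appeal to $\Sigma_\infty$. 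The price is that converting between $\sqrt{V_t}$ and $\|d^i_t\|_2$ picks up random factors bounded by powers of $\|\SigN_0\|_2$; these have all moments (sample covariance of Gaussians), so a H\"older step absorbs them. A second minor difference: to decouple $\Fnorm{\SigN_s-\Sigma_s}$ from $\bar\zeta^i_s$ (which both depend on $X^i_0$) you invoke Cauchy--Schwarz and a fourth-moment version of~\eqref{eq:var-estimate}, which is indeed available from the pointwise contraction $\Fnorm{\SigN_s-\Sigma_s}\le c\,e^{-2\lambda s}\Fnorm{\SigN_0-\Sigma_0}$ together with $\Expect[\Fnorm{\SigN_0-\Sigma_0}^4]=O(N^{-2})$. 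The paper's proof writes this factorization more tersely; your version is the more careful of the two. One small correction: the uniform bound on the deterministic $\Psi_s$ (the $\sRicc(\Sigma_t)$-transition matrix) does not come from Lemma~\ref{lem:KF-stability} directly, but from the very same Lyapunov argument applied with weight $\Sigma_t^{-1}$, combined with the two-sided bounds on $\Sigma_t$ that the lemma does provide.
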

\medskip

The estimate~\eqref{eq:estimate_POA} is used to prove the following
important result that the empirical distribution of the particles in
the linear FPF converges weakly to the true posterior distribution.
Its proof appears in the Appendix~\ref{apdx:prop-chaos}.

\begin{corollary} \label{cor:prop-chaos}
Consider the linear filtering problem~\eqref{eqn:Signal_Process}-\eqref{eqn:Obs_Process}
and the finite-$N$ deterministic FPF~\eqref{eq:Xit-d}. The initial condition $X_0^i\stackrel{\text{i.i.d}}{\sim} {\cal N}(m_0,\Sigma_0)$ for $i=1,2,\ldots,N$. Under Assumptions (I) and (II), for any bounded and Lipschitz 
function $f:\Re^d \to \Re$,
\begin{equation*}
\Expect\left[\left|\frac{1}{N}\sum_{i=1}^N
    f(X^i_t)-\Expect[f(X_t)|\clZ_t]\right|^2\right]^{1/2} \leq \frac{\text{(const.)}}{{\sqrt{N}}} 
\end{equation*}
\end{corollary}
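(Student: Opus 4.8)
\emph{Proof proposal.} The plan is to interpose the $N$ independent mean-field copies $\{\bar X^i_t\}_{i=1}^N$ defined by \eqref{eq:barXit} between the particle empirical average and the conditional expectation. Writing $L$ for the Lipschitz constant of $f$ and using the triangle inequality in $L^2(\Prob)$,
\[
\Expect\!\Big[\Big|\tfrac1N\!\sum_{i=1}^N f(X^i_t)-\Expect[f(X_t)\mid\clZ_t]\Big|^2\Big]^{1/2}\le E_1+E_2,
\]
where $E_1:=\Expect[\,|\tfrac1N\sum_i (f(X^i_t)-f(\bar X^i_t))|^2\,]^{1/2}$ is a propagation-of-chaos term and $E_2:=\Expect[\,|\tfrac1N\sum_i f(\bar X^i_t)-\Expect[f(X_t)\mid\clZ_t]|^2\,]^{1/2}$ is a law-of-large-numbers term. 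It remains to bound each by $\text{(const.)}/\sqrt N$.

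For $E_2$: by exactness (Prop.~\ref{prop:opt-sde-vector}), $\bar m_t=m_t$ and $\bar\Sigma_t=\Sigma_t$ are deterministic given $\clZ_t$, so \eqref{eq:barXit} is, conditionally on $\clZ_t$, a linear (affine) equation with $\clZ_t$-measurable coefficients; hence $\bar X^i_t=\Phi_t X^i_0+c_t$ with $\Phi_t$ (a matrix) and $c_t$ (a vector) both $\clZ_t$-measurable and \emph{not} depending on $i$. Since $\{X^i_0\}$ are i.i.d.\ $\NN(m_0,\Sigma_0)$ and independent of the observation path, the $\{\bar X^i_t\}$ are conditionally i.i.d.\ given $\clZ_t$, and by exactness each has conditional law $\bar\pi_t=\pi_t=\NN(m_t,\Sigma_t)$, so $\Expect[f(\bar X^i_t)\mid\clZ_t]=\Expect[f(X_t)\mid\clZ_t]$. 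Conditioning on $\clZ_t$ then gives $\Expect[\,|\tfrac1N\sum_i f(\bar X^i_t)-\Expect[f(X_t)\mid\clZ_t]|^2\mid\clZ_t]=\tfrac1N\var(f(\bar X^1_t)\mid\clZ_t)\le \|f\|_\infty^2/N$, whence $E_2\le \|f\|_\infty/\sqrt N$ after taking expectations.

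For $E_1$: fix once and for all a constant matrix $\Lambda_0$ with $0\prec\Lambda_0\prec\Sigma_0$ (e.g.\ $\Lambda_0=\tfrac12\Sigma_0$) and split on the event $S_{\Lambda_0}=\{\SigN_0\succ\Lambda_0\}$. On $S_{\Lambda_0}$, the Lipschitz bound and convexity of $x\mapsto x^2$ give $|\tfrac1N\sum_i(f(X^i_t)-f(\bar X^i_t))|^2\le \tfrac{L^2}{N}\sum_i\|X^i_t-\bar X^i_t\|_2^2$, so Prop.~\ref{prop:prop-chaos} yields $\Expect[\,|\cdots|^2\mathds{1}_{S_{\Lambda_0}}]\le \tfrac{L^2}{N}\sum_i\Expect[\|X^i_t-\bar X^i_t\|_2^2\mathds{1}_{S_{\Lambda_0}}]\le L^2\,\text{(const.)}^2/N$. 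On $S_{\Lambda_0}^c$ the integrand is at most $4\|f\|_\infty^2$, so its contribution is $\le 4\|f\|_\infty^2\,\Prob(S_{\Lambda_0}^c)$; since $S_{\Lambda_0}^c\subseteq\{\|\SigN_0-\Sigma_0\|_2\ge \lambda_{\min}(\Sigma_0-\Lambda_0)\}$, the standard concentration bound $\Expect[\Fnorm{\SigN_0-\Sigma_0}^2]=O(\trace(\Sigma_0)^2/N)$ for the sample covariance of i.i.d.\ Gaussians together with Chebyshev's inequality gives $\Prob(S_{\Lambda_0}^c)=O(1/N)$. Hence $E_1^2=O(1/N)$, i.e.\ $E_1\le\text{(const.)}/\sqrt N$, and combining with the bound on $E_2$ proves the corollary.

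The heavy lifting is already done in Prop.~\ref{prop:prop-chaos}; the main remaining obstacle is the treatment of the bad event $S_{\Lambda_0}^c$. Because the propagation-of-chaos estimate is only available on $S_{\Lambda_0}$, one must choose $\Lambda_0$ strictly below $\Sigma_0$ and verify that $\SigN_0$ escapes a fixed neighbourhood of $\Sigma_0$ with probability $O(1/N)$, so that the $S_{\Lambda_0}^c$ contribution does not spoil the $1/\sqrt N$ rate; the rest is routine, and the $t$-dependence of the constant (with exponential decay) is inherited from Propositions~\ref{prop:conv_error} and~\ref{prop:prop-chaos}.
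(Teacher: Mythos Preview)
Your proposal is correct and follows essentially the same approach as the paper's proof: both interpose the mean-field copies $\bar X^i_t$, use the conditional i.i.d.\ structure for the law-of-large-numbers term, invoke Prop.~\ref{prop:prop-chaos} together with the Lipschitz bound on the good event $S_{\Lambda_0}$ (with $\Lambda_0=\tfrac12\Sigma_0$), and control the bad event via boundedness of $f$ and a Chebyshev bound giving $\Prob(S_{\Lambda_0}^c)=O(1/N)$. The only difference is organizational---you apply the triangle inequality first and then split $E_1$ on $S_{\Lambda_0}$, whereas the paper splits the full error on $S_{\Lambda_0}$ first and then applies the triangle inequality on the $S_{\Lambda_0}$ part---but the ingredients and estimates are identical.
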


%

%
\section{Comparison to Importance Sampling}
\label{sec:PF}

For the purposes of the comparison of the optimal FPF with the
importance sampling-based particle filter, we restrict to the static
filtering example with a fully observed observation model:
\begin{equation}
\begin{aligned}
\ud X_t &= 0\quad\quad X_0 \sim \mathcal{N}(0,\sigma_0^2 I_d)\\
\ud Z_t &= X_t \ud t +  \sigma_w \ud W_t
\end{aligned}
\label{eq:filter-example}
\end{equation}
for $t\in[0,1]$, where $\sigma_W,\sigma_0>0$. 
The posterior distribution at time $t=1$, denoted as $\pi_{\text{EXACT}}$, is a
Gaussian $\NN(m_1,\Sigma_1)$ with $m_1= \frac{\sigma_0^2}{\sigma_0^2
  + \sigma_W^2}Z_1$ and $\Sigma_1
=\frac{\sigma_0^2\sigma_w^2}{\sigma_0^2 + \sigma_w^2} I_d$.

Let $\{X^i_0\}_{i=1}^N$ be $N$ i.i.d samples of $X_0$.  The importance
sampling-based particle filter yields an empirical approximation of the
posterior distribution $\pi_{\text{EXACT}}$ as follows: 
\begin{equation}\label{eq:PF-Est} 
\pi^{(N)}_\text{PF} = \sum_{i=1}^N w_i\delta_{X^i_0},\quad w_i = \frac{e^{-\frac{\|Z_1-X^i_0\|_2^2}{2\sigma_w^2}}}{\sum_{i=1}^N e^{-\frac{\|Z_1-X^i_0\|_2^2}{2\sigma_w^2}}}
\end{equation}
In contrast, given the initial samples $\{X^i_0\}_{i=1}^N$, the FPF approximates
the posterior by implementing a feedback control law as follows:
\begin{equation}\label{eq:EnKF-est}
\pi^{(N)}_{\text{FPF}} = \frac{1}{N}\sum_{i=1}^N \delta_{X^i_1},\quad \ud X^i_t = \frac{1}{\sigma_w^2}\SigN_t (\ud Z_t - \frac{X^i_t + \mN_t}{2} \ud t) 
\end{equation}
where the empirical mean $\mN_t$ and covariance $\SigN_t$ are
approximated as~\eqref{eq:empr_app_mean_var}.

The m.s.e in estimating the conditional expectation of a given
function $f$ is defined as follows:
\begin{equation*}
\text{m.s.e}_*(f) = \Expect[\|\pi_*^{(N)}(f) - \pi_{\text{EXACT}}(f)\|_2^2]
\end{equation*}
where the subscript $*$ is either the $\text{PF}$ or the
$\text{FPF}$.  

For $f(x) = \frac{1}{\sqrt{d}}1^\top x$, a numerically computed plot
of the level-sets of m.s.e, as a function of $N$ and $d$, is depicted in Figure~\ref{fig:error-PF-FPF}-(a)-(b).
The expectation is approximated by averaging over $M=1000$ independent
simulations.  It is observed that, in order to have the same error,
the importance sampling-based approach requires the number of samples
$N$ to grow exponentially with the dimension $d$, whereas the growth
using the FPF for this numerical example is $O(d^\half)$.  This conclusion is consistent
with other numerical studies reported in the literature~\cite{surace_SIAM_Review}.

\begin{figure*}
	\centering
	\begin{tabular}{cc}
		\subfigure[]{\includegraphics[width = 0.4\hsize]{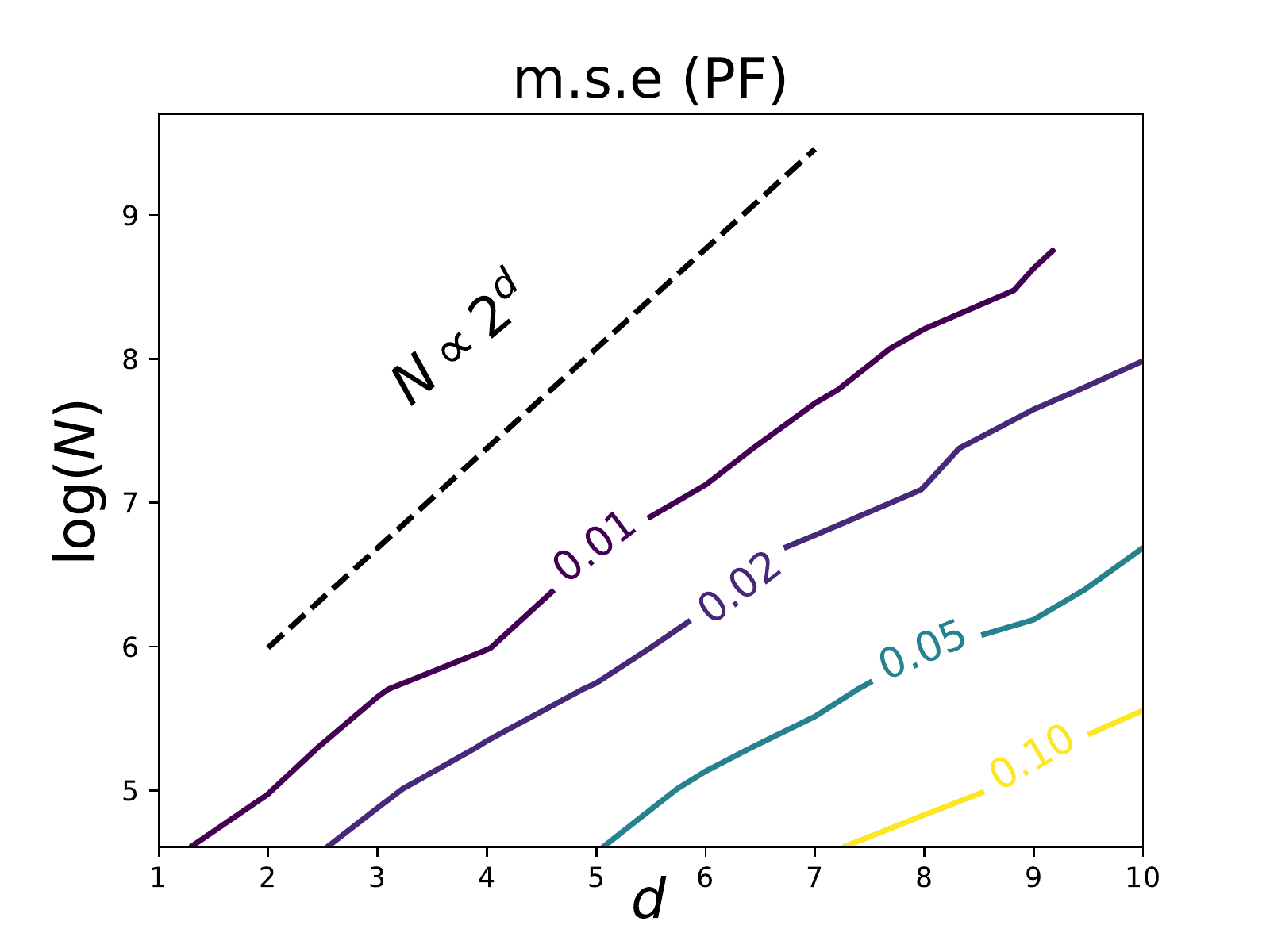}\label{fig:error-PF}} &\subfigure[]{\includegraphics[width = 0.4\hsize]{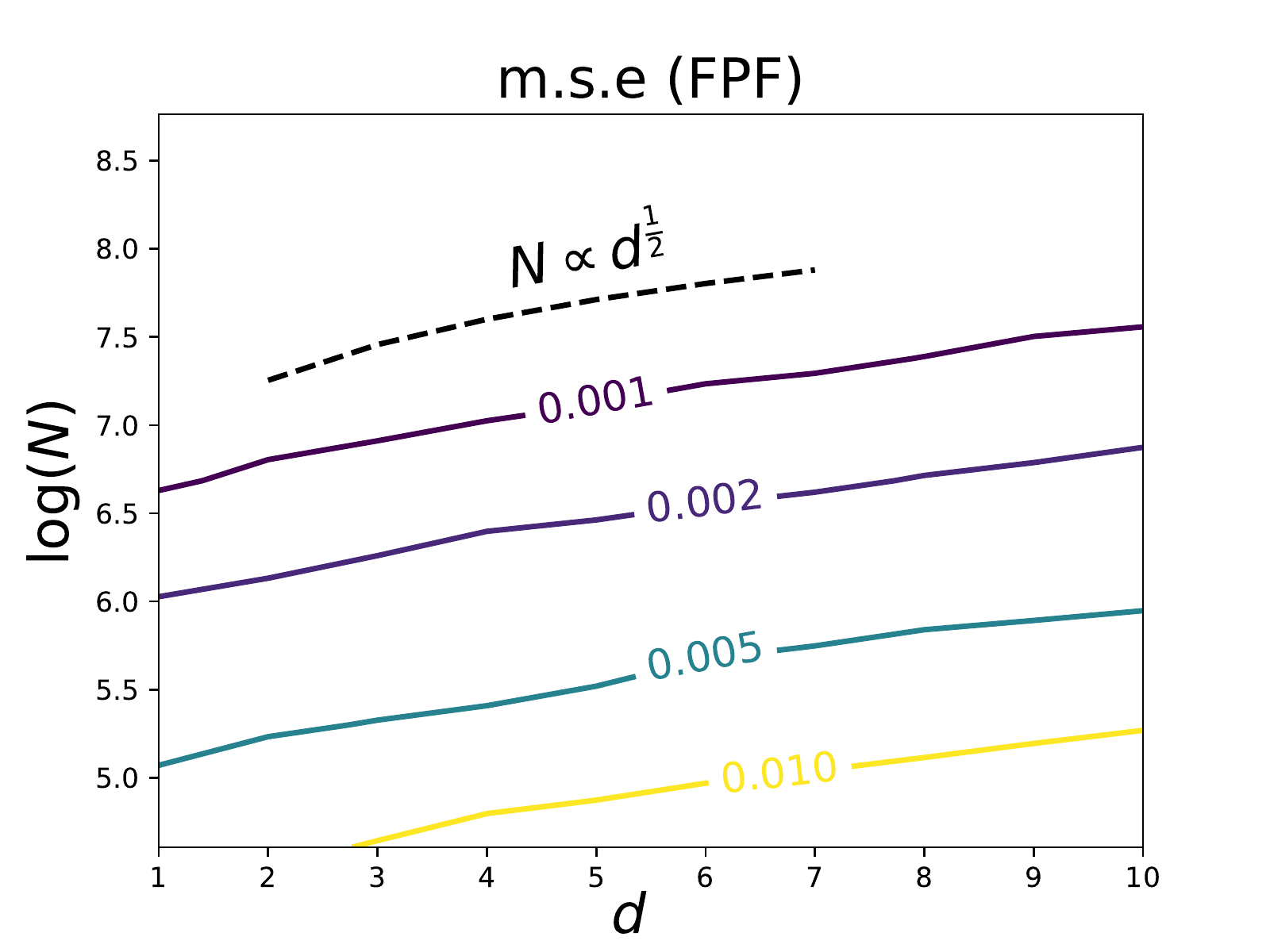}\label{fig:error-FPF}}
	\end{tabular}
	\caption{Level sets of the m.s.e. for the filtering
          problem~\eqref{eq:filter-example}: (a) importance
          sampling algorithm~\eqref{eq:PF-Est} and (b) the FPF~\eqref{eq:EnKF-est} algorithm.} 
	\label{fig:error-PF-FPF}
\end{figure*}

For the purposes of the analysis, a modified form of the particle
filter is considered whereby the denominator is replaced by its exact
form: 
\begin{equation}
\pi^{(N)}_{\overline{\text{PF}}} =  \frac{1}{N} \sum_{i=1}^N \bar{w}_i\delta_{X^i_0},\quad \bar{w}_i = \frac{e^{-\frac{\|Z_1-X^i_0\|_2^2}{2\sigma_w^2}}}{\Expect[e^{-\frac{\|Z_1-X^i_0\|_2^2}{2\sigma_w^2}}|\clZ_1]}
\label{eq:PF-Est-bar}
\end{equation}

The proof of the following result on the scaling of the m.s.e. with
the state dimension $d$ appears in the
Appendix~\ref{apdx:importance-sampling}:

\medskip

\begin{proposition}\label{prop:importance-sampling}
	Consider the filtering problem~\eqref{eq:filter-example} with
        state dimension $d$.  
        Suppose
        $\sigma_0=\sigma_w=\sigma>0$ and $f(x)=a^\top x$ where
        $a\in\Re^d$ with $\|a\|_2=1$.  Then:
	\begin{romannum}
		\item The m.s.e. for the modified form of the
                  importance sampling estimator~\eqref{eq:PF-Est-bar} is given by 
		\begin{equation}\label{eq:PF-mse-bound}
		\text{m.s.e}_{\overline{\text{PF}}}(f) = \frac{\sigma^2}{N}\left(3(2^d) - \frac{1}{2}\right) \geq \frac{\sigma^2}{N}2^{d+1}
		\end{equation}
		\item The m.s.e for the FPF estimator~\eqref{eq:EnKF-est} is bounded as
		\begin{equation}\label{eq:FPF-mse-bound}
		\text{m.s.e}_{{\text{FPF}}}(f) \leq  \frac{\sigma^2}{N} (3d^2+2d)
		\end{equation}
	\end{romannum}
\end{proposition}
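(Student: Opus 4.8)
\emph{Part (i).} The key observation is that the modified estimator~\eqref{eq:PF-Est-bar} is conditionally unbiased given $\clZ_1$: up to an $x$-independent factor, $e^{-\|Z_1-x\|_2^2/(2\sigma_w^2)}$ is the Girsanov likelihood of the observation path on $[0,1]$ given $X_0=x$, so $\bar w_i$ is the $\clZ_1$-normalized density ratio $\ud\pi_{\text{EXACT}}/\ud\pi_0$ evaluated at $X^i_0$, whence $\Expect[\pi^{(N)}_{\overline{\text{PF}}}(f)\mid\clZ_1]=\Expect[f(X_0)\mid\clZ_1]=a^\top m_1=\pi_{\text{EXACT}}(f)$. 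Decomposing the mean-squared error conditionally on $\clZ_1$ into squared bias (zero) plus variance, and using that the $X^i_0$ are i.i.d., gives
\[
\text{m.s.e}_{\overline{\text{PF}}}(f)=\frac1N\Bigl(\Expect\bigl[\bar w_1^2\,(a^\top X^1_0)^2\bigr]-\Expect\bigl[(a^\top m_1)^2\bigr]\Bigr).
\]
The subtracted term equals $\half\sigma^2$, since $m_1=\half Z_1$ and $\Expect[Z_1 Z_1^\top]=2\sigma^2 I$.

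For the remaining term I would use that the prior $\NN(0,\sigma^2 I_d)$ and the likelihood both factor over the $d$ coordinates, so $\bar w_1=\prod_{k=1}^d\bar w_{1,k}$ with $\bar w_{1,k}$ a function of the mutually independent pairs $(X^1_{0,k},Z_{1,k})$. Expanding $(a^\top X^1_0)^2=\sum_{k,l}a_k a_l\,X^1_{0,k}X^1_{0,l}$, the $k\neq l$ contributions carry the factor $\Expect[\bar w_{1,k}^2\,X^1_{0,k}]$, which vanishes because after integrating out $X^1_{0,k}$ the resulting conditional expectation is an odd function of $Z_{1,k}$ while the marginal law $\NN(0,2\sigma^2)$ of $Z_{1,k}$ is symmetric. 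Hence only the diagonal survives, and with $\|a\|_2=1$,
\[
\Expect\bigl[\bar w_1^2\,(a^\top X^1_0)^2\bigr]=\alpha^{\,d-1}\gamma,\qquad \alpha:=\Expect[\bar w_{1,k}^2],\qquad \gamma:=\Expect\bigl[\bar w_{1,k}^2\,(X^1_{0,k})^2\bigr],
\]
two one-dimensional Gaussian integrals. Evaluating them (the per-coordinate normalizer of $\bar w_{1,k}$ is $\tfrac1{\sqrt2}e^{-z^2/(4\sigma^2)}$) yields $\alpha=2$ and $\gamma=6\sigma^2$, so $\Expect[\bar w_1^2(a^\top X^1_0)^2]=3\cdot 2^d\sigma^2$ and $\text{m.s.e}_{\overline{\text{PF}}}(f)=\tfrac{\sigma^2}{N}\bigl(3\cdot 2^d-\half\bigr)$; the claimed lower bound follows from $3\cdot 2^d-\half\ge 2^{d+1}$ for $d\ge1$.

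\emph{Part (ii).} Since $f$ is linear, $\pi^{(N)}_{\text{FPF}}(f)=a^\top\mN_1$ and $\pi_{\text{EXACT}}(f)=a^\top m_1$, so $\text{m.s.e}_{\text{FPF}}(f)=\Expect[(a^\top(\mN_1-m_1))^2]$. A direct computation (as in Section~\ref{sec:mean_covariance}) shows that $(\mN_t,\SigN_t)$ solves the same Kalman--Bucy equations as $(m_t,\Sigma_t)$ but started from the random data $(\mN_0,\SigN_0)$; for the static problem~\eqref{eq:filter-example} this integrates to the closed form $\mN_1=\mN_0+\SigN_0(\SigN_0+\sigma^2 I)^{-1}(Z_1-\mN_0)$, while $m_1=\half Z_1$. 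Subtracting and simplifying gives
\[
e_1:=\mN_1-m_1=\tfrac12 D_0 R\,Z_1+\sigma^2 R\,\mN_0,\qquad D_0:=\SigN_0-\sigma^2 I,\quad R:=(\SigN_0+\sigma^2 I)^{-1},
\]
where $R$ commutes with $D_0$, $0\prec\sigma^2 R\preceq I$, and $D_0,R,\mN_0$ depend only on the particle initializations $\{X^i_0\}$ and are therefore independent of the observation $Z$ (hence of $Z_1$, for which $\Expect[Z_1]=0$, $\Expect[Z_1Z_1^\top]=2\sigma^2 I$).

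The cross term therefore vanishes and $\text{m.s.e}_{\text{FPF}}(f)=\tfrac14\Expect[(a^\top D_0 R\,Z_1)^2]+\sigma^4\Expect[(a^\top R\,\mN_0)^2]$. Using independence, commutativity, and $\|R\|_2\le\sigma^{-2}$, the first term is $\le\tfrac1{2\sigma^2}\Expect[\Fnorm{D_0}^2]$, and the Wishart second-moment identity $\var\bigl((\SigN_0)_{kl}\bigr)=\tfrac{\sigma^4}{N-1}(1+\delta_{kl})$ gives $\Expect[\Fnorm{\SigN_0-\sigma^2 I}^2]=\tfrac{\sigma^4}{N-1}(d^2+d)$; the second term is bounded by $\Expect[\|\mN_0\|_2^2]=\tfrac{\sigma^2 d}{N}$. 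Combining and using $N-1\ge N/2$ gives $\text{m.s.e}_{\text{FPF}}(f)\le\tfrac{\sigma^2}{N}(d^2+2d)\le\tfrac{\sigma^2}{N}(3d^2+2d)$, as claimed. I expect the main work to be (a) the one-dimensional Gaussian integrals for $\alpha$ and $\gamma$ in Part~(i), and (b) in Part~(ii), the closed-form solution of the finite-$N$ filter and the Wishart moments of $\SigN_0$; with these in hand the remaining steps are routine operator-norm estimates.
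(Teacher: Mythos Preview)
Your argument is correct. Part~(i) follows the paper's variance decomposition exactly; your coordinate factorization for $\Expect[\bar w_1^2(a^\top X^1_0)^2]$ is a clean alternative to the paper's direct $d$-dimensional Gaussian completion of squares, and both give $3\cdot 2^d\sigma^2$.

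Part~(ii), however, takes a genuinely different route. The paper stays at the SDE level: it represents $\mN_1-m_1$ via the transition matrix $\Phi^{(N)}_{t,s}$ for $\dot x=-\KN_t x$, uses $\|\Phi^{(N)}_{t,s}\|_2\le 1$ (from $\KN_t\succeq 0$), applies It\^o isometry to the stochastic integral $\int_0^1 \Phi^{(N)}_{1,t}(\KN_t-\K_t)\,\ud I_t$ against the innovation, and bounds $\|\SigN_t-\Sigma_t\|_2$ pathwise by $\|\SigN_0-\Sigma_0\|_2$ via the Riccati comparison. You instead exploit the exact integrability of this particular model: since $\frac{\ud}{\ud t}(\SigN_t)^{-1}=\sigma^{-2}I$, one has $\ud\bigl((\SigN_t)^{-1}\mN_t\bigr)=\sigma^{-2}\ud Z_t$, so $\mN_1$ depends on the observation path only through $Z_1$, and the error collapses to a static expression in $(\mN_0,\SigN_0,Z_1)$ handled by Wishart second moments and the crude bound $\|R\|_2\le\sigma^{-2}$. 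Your route is more elementary (no stochastic-integral estimates), makes the independence of $(\mN_0,\SigN_0)$ from $Z_1$ completely explicit, and even yields the sharper constant $d^2+2d$; it does implicitly assume $\SigN_0\succ 0$ to justify the closed form, which holds a.s.\ when $N>d$. The paper's route is the natural specialization of the Prop.~\ref{prop:conv_error} machinery, requires no invertibility of $\SigN_0$, and would generalize more readily beyond the static, fully observed case.
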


\medskip

\begin{remark}
In the limit as $d\to\infty$, the performance of the importance
sampling-based particle filters has been studied in the
literature~\cite{bickel2008sharp,bengtsson08,snyder2008obstacles,rebeschini2015can}. The
main result in these papers is concerned with the particle degeneracy
(or the weight collapse) issue:  it
is shown that if $\frac{\log N \log d}{d}\to 0$ then the largest
weight $\max_{1\le i \le N} w^i \to 1$ in probability.  Consequently, in order to prevent the
weight collapse,
the number of particles should grow exponentially with the
dimension. This phenomenon is referred to as the curse of
dimensionality for the particle filters. 

\end{remark}

\begin{remark}
The scaling with dimension depicted in
Figure~\ref{fig:error-PF-FPF}~(b) suggests that the $O(d^2)$ bound for
the m.s.e in the linear FPF is loose.  This is the case because, in deriving
the bound, the inequality 
$\|\cdot\|_2\leq \|\cdot\|_F$ was used.  The inequality is loose
particularly so as the dimension grows.  Also, it is observed that the m.s.e for the particle filter grows slightly slower than the lower-bound $2^d$. This is because the lower-bound is obtained for the modified particle filter~\eqref{eq:PF-Est-bar}, while the m.s.e is numerically evaluated for the standard particle filter~\eqref{eq:PF-Est}. The correlation between the numerator and denominator in~\eqref{eq:PF-Est} reduces the m.s.e error. 
\end{remark}

\section{Conclusions \& Directions for Future Work}

In this paper, a principled approach is presented for design of the EnKF
algorithm.  The approach is based upon a reformulation of the
filtering problem into an optimal transportation problem.  Its
solution is referred to as the optimal transport FPF.  Empirical
approximation of the mean-field terms in the control law yield a
finite-$N$ form of the algorithm as a controlled interacting particle
system.    
Detailed error
analysis is presented for the finite-$N$ algorithm including a
comparison with the importance sampling-based approach.  Taken together with numerical
comparisons in recent literature, the analysis of this paper is likely
to spur research and application of controlled interacting particle
algorithms for filtering and data assimilation.  
There are many directions for future work: 
\begin{enumerate}
\item Apart from the optimal transportation formulation stressed in
  this paper, one may consider alternative approaches for control
  design.  One possible direction is based on the
  Schr\"odinger bridge problem~\cite{chen2016relation,reich2019data}.
  It is an interesting question whether such an approach can lead to
  stochastic forms of FPF, in contrast to the deterministic form
  obtained using the optimal transport formulation.
\item An important research direction is to extend the stability and error
  analysis to the class of finite-$N$ stochastic EnKF and FPF
  algorithms.  The results in this paper serve as baseline, in
  terms of assumptions and order scalings, for the analysis of the
  stochastic algorithm.
\item It will be of interest to construct optimization formulations
  that directly yield a finite-$N$ algorithm without the need for
  empirical approximation as an intermediate step.  Such constructions
  may lead to better error properties by design.
\item Finally, it is extremely important to understand the curse of
  dimensionality (CoD) for general types of controlled interacting particle
  systems.  The result in Prop.~\ref{prop:importance-sampling} is
  very exciting because it suggests that CoD is avoided in the linear
  Gaussian case.  Whether such a property also holds for the
  non-Gaussian case remains an open question.  
\end{enumerate}

\appendices

\section{Proof of Thm.~\ref{thm:exactness}}\label{apdx:exactness-proof}
\begin{proof}
	It is first shown that the conditional mean and variance of $\Xbar_t$ evolve according to Kalman filtering equations. 
	Express the sde~\eqref{eq:FPF-lin} in its integral form,
	\begin{equation*}
	\Xbar_t = \Xbar_0 + \int_0^t A \Xbar_s \ud s + \int_0^t \sigma_B \ud \bar{B}_s + \int_0^t \Kbar_s(\ud Z_s - \frac{H\Xbar_s + H\mbar_s}{2}\ud s)
	\end{equation*}
	Upon taking the conditional expectation of both sides 
	\begin{align*}\mbar_t &= \Expect[\Xbar_0|\clZ_t] +  \E[ \int_0^t A\Xbar_s \ud s|\clZ_t] + \E[\int_0^t \sigma_b \ud \bar{B}_s|\clZ_t]  \\&+\Expect[\int_0^t \Kbar_s(\ud Z_s - \frac{H\Xbar_s + H\mbar_s}{2}\ud s)|\clZ_t]\\&= \Expect[\Xbar_0|\clZ_0] +  \int_0^t \Expect[\Kbar_s|\clZ_s]\ud Z_s\\ &+ \int_0^t\Expect[A\Xbar_s - \Kbar_s\frac{H\Xbar_s + H\mbar_s}{2}|\clZ_s]\ud s
	\\&=\mbar_0 + \int_0^t A\mbar_s \ud s +  \int_0^t \Kbar_s(\ud Z_s - H\mbar_s\ud s)
	\end{align*}
	where we used the fact that $\Xbar_t$ is adapted to the
	filteration $\clZ_t$ to obtain the second identity
	(see~\cite[Lemma 5.4]{xiong2008}). As a result, the sde for
	the conditional mean is 
	\begin{equation}
	\ud \mbar_t = A \mbar_t \ud t + \Kbar_t(\ud Z_t - H\mbar_t\ud t)
	\label{eq:Xii-mean}
	\end{equation}
	Define the error $\xi_t$ according to
	$\xi_t := \Xbar_t - \mbar_t$. 
	The equation for $\xi_t$ is obtained by simply
	subtracting~\eqref{eq:Xii-mean} from~\eqref{eq:FPF-lin}:
	\begin{equation*}
	\ud \xi_t = (A - \frac{\Sigmabar_t H^TH}{2}) \xi_t + \sigma_B\ud \bar{B}_t \label{eq:E}
	\end{equation*}
	By application of the It\^o's  rule 
	\begin{align*}
	\ud ({\xi}_t{\xi}_t^\top) = &(A -\frac{1}{2}\bar{\Sigma}_t H^\top H) {\xi}_t{\xi}^\top_t \ud t + \sigma_B\ud \bar{B}_t {\xi}^\top_t \\&+ {\xi}_t{\xi}^\top_t(A^\top -\frac{1}{2}H^\top H\bar{\Sigma}_t)\ud t  + {\xi}_t (\sigma_B\ud \bar{B}_t)^\top + \Sigma_B \ud t
	\end{align*}
	which, following the same procedure as for the conditional
	mean, leads to the sde for the conditional covariance
	$\bar{\Sigma}_t=\Expect[\xi_t\xi^\top_t|\clZ_t]$.  It is given by
	\begin{equation*}
	\ddt \Sigmabar_t = A \Sigmabar_t + \Sigmabar_tA^T + \Sigma_B - \Sigmabar_tH^TH\Sigmabar_t
	\label{eq:Xii-variance}
	\end{equation*}
	identical to the Ricatti equation~\eqref{eq:kalman-variance}.
	Hence, because $\Sigmabar_0 = \Sigma_0$, $\Sigmabar_t =\Sigma_t$ for all $t\geq0$. 
	This also implies $\Kbar_t=\k_t$, which in turn implies that the sde for conditional mean \eqref{eq:Xii-mean} is identical to the Kalman filter equation~\eqref{eq:kalman-mean}. 
	Therefore, again because $\bar{m}_0 = m_0$, $\bar{m}_t = m_t$ for all $t\geq0$. 
	
	Given $\bar{\Sigma}_t= \Sigma_t$ and $\bar{m}_t = m_t$, the mean-field terms in the McKean-Vlasov sde~\eqref{eq:FPF-lin} can be treated as exogenous processes. Therefore, the McKean-Vlasov sde~\eqref{eq:FPF-lin} simplifies to a Ornstein-Uhlenbeck sde. Because the distribution of the initial condition $\bar{X}_0$ is Gaussian, the distribution of $\bar{X}_t$ is also Gaussian given by $\mathcal{N}(m_t,\Sigma_t)$ which is equal to the posterior distribution given by Kalman filter and concludes the proof. 

\end{proof}

\section{Proof of Prop.~\ref{prop:opt-sde-vector}}
\label{proof:opt-sde}
The key step in the proof is the following Lemma:
\begin{lemma}
	Consider the ode \eqref{eq:kalman-variance}. Let $\Sigma_t$ be
	its solution for $t\in [0,T]$.  
	Then 
	\begin{equation}
	\Sigma_{t + \Delta t}^{\half}
	(\Sigma_{t+ \Delta t}^{\half}
	\Sigma_t\Sigma_{t+\Delta t}^\half)^{-\half}
	\Sigma_{t+\Delta t}^\half =
	I + G_t\Delta t + O(\Delta t^2)
	\label{eq:approx-Sigma-vector}
	\end{equation}
	where $G_t$ is the solution to the matrix equation
	\begin{equation}
	G_t\Sigma_t + \Sigma_tG_t = A\Sigma_t+ \Sigma_t A^T + \Sigma_B - \Sigma_tH^TH\Sigma_t
	\label{eq:G-lemma}
	\end{equation}
	and the $O(\Delta t^2)$ in~\eqref{eq:approx-Sigma-vector} is
	uniformly bounded for all $t \in [0,T]$.
	\label{lemma:approx-Sigma-vector}
\end{lemma}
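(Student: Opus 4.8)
The plan is to reduce the claim to an implicit characterization of the matrix appearing on the left of \eqref{eq:approx-Sigma-vector}. By \Theorem{thm:opt-map-Gaussians}, that matrix,
\[
F_{\Delta t}\eqdef\Sigma_{t + \Delta t}^{\half}\big(\Sigma_{t+ \Delta t}^{\half}\Sigma_t\Sigma_{t+\Delta t}^{\half}\big)^{-\half}\Sigma_{t+\Delta t}^{\half},
\]
is the linear part of the optimal transport map between the Gaussians $\calN(\cdot,\Sigma_t)$ and $\calN(\cdot,\Sigma_{t+\Delta t})$. Substituting this closed form into the product $F_{\Delta t}\Sigma_t F_{\Delta t}$ and cancelling shows that $F_{\Delta t}$ is the unique symmetric positive-definite solution of $F_{\Delta t}\Sigma_t F_{\Delta t}=\Sigma_{t+\Delta t}$. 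Since $\Sigma_0\succ0$ and $\Sigma_t$ solves the Riccati ode \eqref{eq:kalman-variance}, $\Sigma_t$ stays positive definite, and on the compact interval $[0,T]$ one has $\Sigma_t\succeq cI$ for some $c>0$. Because $Q\mapsto Q^{\half}$ and $Q\mapsto Q^{-\half}$ are $C^\infty$ on the open cone of positive-definite matrices and $t\mapsto\Sigma_t$ is $C^\infty$, for each fixed $t$ the curve $\Delta t\mapsto F_{\Delta t}$ is $C^\infty$ near $\Delta t=0$, with $F_0=\Sigma_t^{\half}\Sigma_t^{-1}\Sigma_t^{\half}=I$.

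Next I would write the Taylor expansion $F_{\Delta t}=I+G_t\Delta t+O(\Delta t^2)$, where $G_t\eqdef\frac{\ud}{\ud\Delta t}F_{\Delta t}\big|_{\Delta t=0}$ is symmetric, being the velocity of a curve of symmetric matrices. Plugging this into $F_{\Delta t}\Sigma_t F_{\Delta t}=\Sigma_{t+\Delta t}$, expanding the left side as $\Sigma_t+(G_t\Sigma_t+\Sigma_tG_t)\Delta t+O(\Delta t^2)$ and the right side, via \eqref{eq:kalman-variance}, as $\Sigma_t+\Ricc(\Sigma_t)\Delta t+O(\Delta t^2)$, and matching the first-order terms yields
\[
G_t\Sigma_t+\Sigma_tG_t=\Ricc(\Sigma_t),
\]
which is precisely \eqref{eq:G-lemma}. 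This Lyapunov-type equation has a unique solution, because the operator $X\mapsto X\Sigma_t+\Sigma_tX$ has spectrum contained in $\{\lambda+\mu:\lambda,\mu\in\text{Spec}(\Sigma_t)\}\subset(0,\infty)$ and hence is invertible; comparison with \eqref{eq:sRicc-def} identifies $G_t=\sRicc(\Sigma_t)$.

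Finally, to obtain the uniformity of the remainder over $t\in[0,T]$, I would observe that the second $\Delta t$-derivative of $F_{\Delta t}$ is a fixed smooth function of $\Sigma_{t+\Delta t}$ together with the first two time-derivatives of $\Sigma$ at $t+\Delta t$; as $(t,\Delta t)$ ranges over a compact set $[0,T]\times[-\delta,\delta]$, these arguments stay in a compact subset of the positive-definite cone times two copies of the matrix space, so this derivative is uniformly bounded, and Taylor's theorem with the Lagrange form of the remainder gives the claimed uniform $O(\Delta t^2)$. I expect the main obstacle to be precisely this bookkeeping — the uniform positive-definiteness of $\Sigma_t$ on $[0,T]$ and the smoothness of the matrix square roots needed to legitimize the expansion and the uniform remainder — since the algebraic core, matching the first-order terms of $F_{\Delta t}\Sigma_t F_{\Delta t}=\Sigma_{t+\Delta t}$, is elementary.
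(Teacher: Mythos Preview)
Your proposal is correct and follows essentially the same route as the paper: define $F(s)=\Sigma_{t+s}^{\half}(\Sigma_{t+s}^{\half}\Sigma_t\Sigma_{t+s}^{\half})^{-\half}\Sigma_{t+s}^{\half}$, use the identity $F(s)\Sigma_t F(s)=\Sigma_{t+s}$, Taylor-expand and match the linear terms to obtain the Lyapunov equation \eqref{eq:G-lemma}, and invoke positive-definiteness and boundedness of $\Sigma_t$ on $[0,T]$ for the uniform $O(\Delta t^2)$ remainder. Your write-up is in fact a bit more careful than the paper's in spelling out the smoothness of $Q\mapsto Q^{\pm\half}$ and the compactness argument for the uniform bound on the second derivative.
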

\begin{proof}
	From the theory of dynamic Riccati equations, the solution is 
	bounded over any finite time
	horizon~\cite{bishop2017stability}. Moreover, because
        $\Sigma_0\succ 0$, $\Sigma_t \succ 0$.   
	Fix $t \in [0,T]$, and define 
	\begin{equation*}
	F(s):=\Sigma_{t + s}^{\half}
	(\Sigma_{t+ s}^{\half}
	\Sigma_t\Sigma_{t+s}^\half)^{-\half}
	\Sigma_{t+s}^\half
	\end{equation*}
	Equation~\eqref{eq:approx-Sigma-vector} is obtained by
	considering the Taylor series of $F(s)$ at $s=0$
	\begin{equation*}
	F(\Delta t) = I + \dot{F}(0)\Delta t + \frac{1}{2}\ddot{F}(\tau)\Delta t^2
	\end{equation*} 
	and showing that
	$\dot{F}(0)=G_t$; here $\tau \in [0,\Delta t]$.  The second order term is uniformly bounded
	for all $t \in [0,T]$ because $\Sigma_t$ is positive definite
	and bounded.   In order to verify $\dot{F}(0)=G_t$, express
	\begin{equation*}
	F(s)\Sigma_tF(s)= \Sigma_{t+s}
	\end{equation*}  
	Evaluating the derivative with respect to $s$ at $s=0$
	\begin{equation*}
	\dot{F}(0)\Sigma_t + \Sigma_t\dot{F}(0) = A\Sigma_t+ \Sigma_t A^T + \Sigma_B - \Sigma_tH^\top H\Sigma_t
	\end{equation*}
	By the uniqueness property of the solution to the Lyapunov
	equation \eqref{eq:G-lemma}, $\dot{F}(0)=G_t$.
\end{proof}

\medskip

\begin{proof}(Prop.~\ref{prop:opt-sde-vector}) 
	The proof of exactness is similar to the proof of Thm.~\ref{thm:exactness} and is omitted. 
	In order to obtain the optimal transport sde, the time stepping procedure is used.
	The key step in the procedure is to obtain the optimal transport map $T_k$.
	The optimal map is between two Gaussians, $\NN(m_{t_\ii},\Sigma_{t_\ii})$ and $\NN(m_{t_{\ii+1}},\Sigma_{t_{\ii+1}})$.
	By Thm.~\ref{thm:opt-map-Gaussians}, the optimal map is,
	\begin{equation*}
	\Xbar_{t_{\ii+1}}=m_{t_{\ii+1}} + F_k(\Xbar_{t_\ii}-m_{t_{\ii}})
	\label{eq:XiiTempRd}
	\end{equation*} 
	where $F_\ii=\Sigma_{t_{\ii+1}}^{\half}(\Sigma_{t_{\ii+1}}^{\half}\Sigma_{t_{\ii}}\Sigma_{t_{\ii+1}}^\half)^{-\half}\Sigma_{t_{\ii+1}}^\half$.
	Using Lemma \ref{lemma:approx-Sigma-vector}, 
	\begin{equation*}
	\begin{aligned}
	\Xbar_{t_{\ii+1}}
	&=m_{t_{\ii+1}} + (\Xbar_{t_\ii} - m_{t_\ii}) +  G_{\ii}(\Xbar_{t_\ii} - m_{t_\ii})\Delta t + O(\Delta t^2)
	\end{aligned}
	\end{equation*}
	To obtain the sde, take a sum over $k=0,1,\ldots,n-1$, 
	\begin{equation*}
	\begin{aligned}
	\Xbar_{t_n}&=\Xbar_{t_0} + m_{t_n}-m_{t_0} + \sum_{\ii=0}^{n-1}\big[G_\ii(\Xbar_{t_\ii} - m_{t_\ii})\Delta t + O(\Delta t^2)\big]
	\end{aligned}
	\end{equation*}
	In the limit as $\Delta t \to 0$,
	\begin{equation*}
	\begin{aligned}
	\Xbar_{t_n}&= \Xbar_{t_0} + m_{t_n}-m_{t_0}+\int_{0}^{t}G_s(\Xbar_s-m_s)\ud s
	\end{aligned}
	\end{equation*}
	where the uniform boundedness of the second order term is used. 
	The associated sde is,
	\begin{equation*}
	\ud \Xbar_t = \ud m_t + G_t(\Xbar_t - m_t)\ud t
	\end{equation*}
	where $\ud m_t$ is given by
	\eqref{eq:kalman-mean}. 
	Finally one obtains \eqref{eq:opt-sde} by replacing
	$m_t$ and $\Sigma_t$ with $\mbar_t$ and $\Sigmabar_t$
	respectively.
\end{proof}

\section{Derivation of optimal FPF in singular covariance case
}\label{apdx:singular-case}

Consider the following general form of the controlled process:
\begin{equation}\label{eq:xbar_appdx}
\ud \Xbar_t =  G_t(\Xbar_t-\mbar_t) \ud t + \ud v_t+ \sigma_t \ud \bar{B}_t
\end{equation}  
The problem is to choose $G_t$, $v_t$  and $\sigma_t$ such  that the
stochastic map $\Xbar_t \to \Xbar_{t+\Delta t}$ is optimal in the
limit as $\Delta t \to  0$.  The optimality criterion
is the Kantorovich form~\eqref{eq:Kantorovich} of the optimal
transportation problem.  The particular choice~\eqref{eq:xbar_appdx} of
the sde for $\{\Xbar_t\}_{t\ge 0}$ is motivated by the optimal
transport sde~\eqref{eq:opt-sde} derived in
Prop.~\ref{prop:opt-sde-vector}.  We expect to recover the
deterministic form of the filter ($\sigma_t=0$) for the special
case when the covariance is non-singular.

The stochastic map $\Xbar_t \to \Xbar_{t+\Delta t}$ is given by
\begin{align*}
\Xbar_{t+\Delta t} &= \Xbar_t + \int_t^{t+\Delta t} G_s(\Xbar_s-\mbar_s)\ud s +(v_{t+\Delta t}-v_t) + \int_t^{t+\Delta t}\sigma_s\ud B_s\\&
=\Xbar_t + \Delta t G_t (\Xbar_t - \mbar _t) + v_{t+\Delta t} - v_t + \sqrt{\Delta t} \sigma_t \zeta  + o(\Delta t)
\end{align*}
where $\zeta \sim\calN(0,1)$. The stochastic map is optimal if (i) the
marginals $\Xbar_t \sim\calN(m_t,\Sigma_t)$ and $\Xbar_{t+\Delta t}
\sim\calN(m_{t+\Delta t}, \Sigma_{t+\Delta t})$, and (ii) the transport
cost $\Expect[|\Xbar_{t+\Delta t}-\Xbar_t|^2]$ is minimized. 

Now, given $\Xbar_t \sim\calN(m_t,\Sigma_t)$, the marginal constraint
is satisfied by the following choice:
\begin{align*}
&m_t + v_{t+\Delta t} - v_t = m_{t+\Delta t} \\
&(I + \Delta t G_t)\Sigma_t(I+\Delta t G_t) + \Delta t \sigma_t \sigma_t^\top + o(\Delta t) = \Sigma_{t+\Delta t}
\end{align*}
The first constraint simply means that the increment of $\nu_t$ must be chosen according to
\[
\ud v_t = \ud m_t = Am_t \ud t+ \K_t(\ud Z_t - H m_t\ud t)
\] 
Dividing the second constraint by $\Delta t$ and taking the limit as
$\Delta t \to 0$ gives 
\begin{equation}\label{eq:OT-FPF=G-sigma-constaint}
G_t \Sigma_t + \Sigma_t  G_t^\top + \sigma_t \sigma_t^\top = \Ricc(\Sigma_t)
\end{equation}
which means that, in the limit as $\Delta t\rightarrow 0$, $G_t$ and $\sigma_t$ must satisfy the
constraint~\eqref{eq:OT-FPF=G-sigma-constaint}.  Clearly, there are
infinitely many possible choices for $G_t$ and $\sigma_t$ which
accounts for the non-uniqueness of the control law as 
discussed in \Sec{sec:non-uniqueness}.  

A unique choice is obtained by minimizing the optimal transportation cost
\begin{align*}
\Expect[|\Xbar_{t+\Delta t}-\Xbar_t|^2] &= |m_{t+\Delta t} - m_t|^2 +\Delta t \underbrace{\trace(\sigma_t\sigma_t^\top)}_{f_1(\sigma_t)}\\&+ (\Delta t)^2 \underbrace{\trace(G_t\Sigma_tG_t^\top)}_{f_2(G_t)} + o(\Delta t^2)
\end{align*}
Taking the limit as $\Delta t \to 0$ suggests the following sequence
of problems: (i) Choose $\sigma_t$ to first minimize $f_1(\sigma_t)$;
and (ii) Choose $G_t$ to next minimize $f_2(G_t)$.  
These formal considerations lead to the following optimization problem:

%
%


\medskip

\newP{Optimization problem} Define
$f_1(\sigma):=\trace(\sigma\sigma^\top)$, $f_2(G):=\trace(G \Sigma
G^\top)$ together with the sets 
\begin{align*}
&\mathcal{D}_\Sigma \eqdef\{(\sigma,G) \in \Re^{d\times d_B}\times \Re^{d\times d} ;~G\Sigma + \Sigma G^\top  + \sigma \sigma^\top = \Ricc(\Sigma)\}\\
&\mathcal{D}_\Sigma\vert_{\sigma^*} \eqdef\{G \in \Re^{d\times d};~G\Sigma + \Sigma G^\top  + \sigma^*{\sigma^*}^\top = \Ricc(\Sigma)\}
\end{align*} 
The pair $(\sigma^*,G^*) \in \mathcal{D}_\Sigma$ are said to be {\em optimal} if 
\begin{equation}
\begin{aligned}
\trace(\sigma^*(\sigma^*)^\top) &= \min_{(\sigma,G)\in
  \mathcal{D}_\Sigma} f_1(\sigma),\\ \trace(G^* \Sigma G^*)&=
\min_{G\in \mathcal{D}_\Sigma\vert_{\sigma^\star}} f_2(G)
\end{aligned}
\label{eq:singular-opt-objective}
\end{equation}

Let $P_R$ and $P_K$ be  the orthogonal projection matrices onto the range and kernel space of $\Sigma$. 

\begin{proposition}\label{prop:OT-FPF-opt-sde-singular}
	Consider the optimization
        problem~\eqref{eq:singular-opt-objective}.  Its optimal
        solution  $(\sigma^*,G^*)$ is as follows: $\sigma^*=P_K \sigma_B$ and $G^*$ is the (unique such) symmetric solution of the matrix equation
\begin{align}
	G^* \Sigma + \Sigma G^*& = \Ricc(\Sigma) - \sigma^* (\sigma^*)^\top  \label{eq:Lyapunov-G-e-appdx}
	\end{align}
	solve the optimization problem~\eqref{eq:singular-opt-objective}. 
\end{proposition}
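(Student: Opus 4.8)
The plan is to diagonalize $\Sigma$, i.e.\ to work in an orthonormal basis in which $\Sigma=\mathrm{diag}(\Lambda,0)$ with $\Lambda\succ0$ of size $r=\mathrm{rank}(\Sigma)$ (equivalently, to split every matrix through the projections $P_R,P_K$), and to analyze the defining constraint $G\Sigma+\Sigma G^\top+\sigma\sigma^\top=\Ricc(\Sigma)$ block by block. The single fact that drives everything is $\Sigma P_K=P_K\Sigma=0$.

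For the $\sigma$-minimization, compress the constraint with $P_K$ on both sides: the two $\Sigma$-terms vanish and there remains $(P_K\sigma)(P_K\sigma)^\top=P_K\Ricc(\Sigma)P_K$. Expanding $\Ricc(\Sigma)=A\Sigma+\Sigma A^\top+\Sigma_B-\Sigma H^\top H\Sigma$ and again using $\Sigma P_K=0$ collapses the right-hand side to $P_K\Sigma_BP_K=(P_K\sigma_B)(P_K\sigma_B)^\top$. Hence every feasible $\sigma$ has $(P_K\sigma)(P_K\sigma)^\top=(P_K\sigma_B)(P_K\sigma_B)^\top$, so with the orthogonal Frobenius splitting $\|\sigma\|_F^2=\|P_R\sigma\|_F^2+\|P_K\sigma\|_F^2$ one obtains $f_1(\sigma)\ge\trace\!\big((P_K\sigma_B)(P_K\sigma_B)^\top\big)$, with equality forcing $P_R\sigma=0$ and $\sigma\sigma^\top=(P_K\sigma_B)(P_K\sigma_B)^\top$. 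The choice $\sigma^*=P_K\sigma_B$ attains this bound, and any other minimizer produces the same $\sigma\sigma^\top$ (hence the same dynamics); what is left is to exhibit one feasible $G$ for $\sigma^*$.

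The next step is the solvability of the singular Lyapunov-type equation $G\Sigma+\Sigma G^\top=M$ for symmetric $M$: the block computation shows it is solvable iff $P_KMP_K=0$, in which case the range part of $G$ is pinned down by an ordinary Lyapunov equation with coefficient $\Lambda\succ0$ together with one linear equation for the off-diagonal block, while the $P_KGP_K$ block is entirely free. Applying this with $M=M^*:=\Ricc(\Sigma)-\sigma^*(\sigma^*)^\top$ and $P_KM^*P_K=P_K\Sigma_BP_K-P_K\sigma_B\sigma_B^\top P_K=0$ proves $\mathcal D_\Sigma|_{\sigma^*}\neq\emptyset$, which certifies $\sigma^*$ as a minimizer of $f_1$ over $\mathcal D_\Sigma$ and identifies $M^*$ as the right-hand side in~\eqref{eq:Lyapunov-G-e-appdx}. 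Finally, for the $G$-minimization, write a feasible $G$ as $G=G^{\mathrm{sym}}+\Delta$, with $G^{\mathrm{sym}}$ a symmetric solution of $G\Sigma+\Sigma G=M^*$ and $\Delta$ a homogeneous solution of $\Delta\Sigma+\Sigma\Delta^\top=0$. Since $f_2(G)=\trace(G\Sigma G^\top)=\|G\Sigma^{1/2}\|_F^2$ depends only on the range part of $G$, and there the homogeneous solutions are exactly $\Omega\Lambda^{-1}$ with $\Omega$ skew-symmetric, a one-line trace computation (the cross term cancels by cyclicity and skewness, the leftover equals $\trace(\Omega\Lambda^{-1}\Omega^\top)\ge0$) yields $f_2(G)=f_2(G^{\mathrm{sym}})+\trace(\Omega\Lambda^{-1}\Omega^\top)\ge f_2(G^{\mathrm{sym}})$, with equality iff $\Omega=0$. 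Thus the minimizer may be taken to be $G^{\mathrm{sym}}$, the symmetric solution of~\eqref{eq:Lyapunov-G-e-appdx}, and the residual $P_KGP_K$ freedom (the term $P_K(\Omega^{(0)}-A)P_K$ in~\eqref{eq:singular-G-Omega}) affects neither $f_2$ nor the law of $\Xbar_t$.

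I expect the two projection identities and the Frobenius bound to be routine; the delicate point is the last step — correctly parametrizing the non-unique solution set of the singular Lyapunov-type equation and checking that $f_2$ penalizes precisely the skew part of the range block of $G$, so that the optimal $G$ is the symmetric solution. One must also keep track throughout of the genuinely free $P_KGP_K$ block, so that ``the symmetric solution'' is understood modulo this harmless ambiguity.
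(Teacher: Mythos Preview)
Your proposal is correct and follows essentially the same strategy as the paper's proof: compress the constraint with $P_K$ to fix $(P_K\sigma)(P_K\sigma)^\top$ and hence a lower bound on $f_1$, verify feasibility of $\sigma^*=P_K\sigma_B$, then for the $G$-step write an arbitrary feasible $G$ as a symmetric solution plus a homogeneous perturbation and show the cross term vanishes by symmetry/skewness. The differences are in execution rather than idea: where you diagonalize $\Sigma$ and work block by block (parametrizing the range part of the homogeneous solution as $\Omega\Lambda^{-1}$), the paper stays coordinate-free --- it establishes existence of the symmetric $G^*$ via the integral representation $G^*=\int_0^\infty e^{-t\Sigma}M^*e^{-t\Sigma}\,\ud t+P_K\Omega^{(0)}P_K$ (checking directly that the integral converges because $P_KM^*P_K=0$), and for optimality observes in one line that $\trace(G^*\Sigma V^\top)+\trace(V\Sigma G^*)=\trace\big(G^*(V\Sigma+\Sigma V^\top)\big)=0$ for \emph{any} $V$ in the homogeneous set, without ever parametrizing $V$. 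Your route has the advantage of making the free parameters in $G^*$ explicit (and meshes naturally with your later remark about the $P_KGP_K$ block); the paper's route is shorter and avoids tracking the off-diagonal block $G_{21}$, which is pinned by the constraint and therefore contributes identically to $f_2$ for every feasible $G$ --- a point you handle correctly but leave implicit.
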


\medskip

These choices yield the formula~\eqref{eq:opt-sde-singular} for the
optimal FPF described in \Sec{sec:singular-case}.  It remains to prove
the Proposition.  

\medskip

\begin{proof} (of Prop.~\ref{prop:OT-FPF-opt-sde-singular}) 
For any $(\sigma,G) \in \mathcal{D}_\Sigma$, multiply both sides of
constraint~\eqref{eq:OT-FPF=G-sigma-constaint} from left and right by
$P_K$ to obtain
\begin{equation*}
P_K \sigma \sigma^\top P_K = P_K \sigma_B \sigma_B P_K 
\end{equation*}
where $P_K\Sigma=\Sigma P_K = 0$ is used. 
Therefore,
\begin{align*}
f_1(\sigma)=\trace(\sigma \sigma^\top) &= \trace(P_K \sigma \sigma^\top P_K) + \trace(P_R \sigma \sigma^\top P_R)  \\&=
\trace(P_K \sigma_B \sigma_B^\top P_K) + \trace(P_R \sigma \sigma^\top P_R) \\&=
\trace(\sigma^* (\sigma^*)^\top) + \trace(P_R \sigma \sigma^\top P_R)
\end{align*}
The second term is non-negative and zero iff $\sigma=\sigma^*$.
Therefore, $\sigma^*$ minimizes $f_1(\sigma)$.  

It remains to show that $G^*$ minimizes $f_2(G)$ over all $G \in
 \mathcal{D}_\Sigma\vert_{\sigma^*}$.  We begin by showing that any
 symmetric solution of the Lyapunov
 equation~\eqref{eq:Lyapunov-G-e-appdx} exists and is well-defined.
 The formula for the solution is given by
\begin{equation*}
G^* = \int_0^\infty  e^{-t\Sigma} ( \Ricc(\Sigma) - P_K\sigma_B \sigma_B P_K )e^{-t\Sigma} \ud t + P_K \Omega^{(0)}P_K
\end{equation*}
where $\Omega^{(0)}$ is any symmetric matrix. 
The integral is well-defined because
\begin{align*}
G^*&-P_K \Omega^{(0)}P_K =(P_K + P_R)(G^*-P_K \Omega^{(0)}P_K)(P_K + P_R) \\&=   \int_0^\infty (P_K + P_R)
                                    e^{-t\Sigma} ( \Ricc(\Sigma) -
                                    P_K\sigma_B \sigma_B P_K
                                    )e^{-t\Sigma} (P_K + P_R) \ud t \\
&=\int_0^\infty e^{-t\Sigma} P_R ( \Ricc(\Sigma) - P_K\sigma_B \sigma_B P_K )e^{-t\Sigma}   \ud t  \\
&\quad + \int_0^\infty e^{-t\Sigma} P_K ( \Ricc(\Sigma) - P_K\sigma_B \sigma_B P_K ) P_Re^{-t\Sigma}   \ud t 
\end{align*}
where  $P_K ( \Ricc(\Sigma) - P_K\sigma_B \sigma_B P_K ) P_K=0$ is
used. The integral is bounded because $\|e^{-t\Sigma} P_R\|_2 =
\|P_Re^{-t\Sigma} \|_2 = e^{-t\mu}$ where $\mu>0$ is the smallest
non-zero eigenvalue of $\Sigma$. 

It remains to show that $G^*$ thus defined is optimal.  Express an arbitrary 
$G\in\mathcal{D}_\Sigma\vert_{\sigma^*}$ as $G=G^* + V$.  Since $G, G^*
\in \mathcal{D}_\Sigma\vert_{\sigma^*}$, it is an easy calculation to
see that 
\begin{equation*}
V\Sigma + \Sigma V^\top = 0
\end{equation*}
Now, 
\begin{align*}
f_2(G) &= \trace(G\Sigma G^\top) = \trace((G^* + V) \Sigma (G^* +V)^\top) \\&= 
 \trace( G^* \Sigma G^* ) +  \trace( G^* \Sigma V^\top  ) +  \trace( V \Sigma G^*  ) +  \trace( V \Sigma V^\top  ) \\
 &=  \trace( G^* \Sigma G^* )  +  \trace( G^* (V \Sigma + \Sigma V^\top))  +  \trace( V \Sigma V^\top  ) \\
 &=  \trace( G^* \Sigma G^* ) +  \trace( V \Sigma V^\top  ) 
\end{align*}
Therefore, the choice $G=G^*$ minimizes $f_2(G)$. 
 
\end{proof}

\newP{Justification for formula~\eqref{eq:singular-G-Omega}} The goal is  to show that any symmetric solution to the matrix equation~\eqref{eq:Lyapunov-G-e} is of the form~\eqref{eq:singular-G-Omega}. Without loss of generality, express  the solution as
\begin{equation*}
G_t = A - \frac{1}{2}\Sigmabar_t H H^\top + \frac{1}{2}P_R \Sigma_B \Sigmabar_t^+ + P_K\Sigma_B  \Sigmabar_t^+ + \Sigmabar_t^+ \Sigma_B P_K + \tilde{G}_t
\end{equation*} 
where  $\tilde{G}_t$ is the new variable. Because $G_t$ is symmetric, $\tilde{G}_t$ should satisfy 
\begin{equation}
\begin{aligned}
\tilde{G}_t -\tilde{G}_t^\top = &A^\top - A  +  \frac{1}{2}(\Sigmabar_t H H^\top -H^\top H \Sigmabar_t) \\&- \frac{1}{2}(P_R \Sigma_B \Sigmabar_t^+ - \Sigmabar_t^+ \Sigma_B P_R)
\end{aligned}
\label{eq:tilde-G-constraint}
\end{equation} 
Inserting this form of the solution for $G_t$ to the matrix equation~\eqref{eq:Lyapunov-G-e} yields:
\begin{align*}
&A  \Sigmabar_t  + \Sigmabar_t A^\top - \Sigmabar_t H H^\top \Sigmabar_t + \frac{1}{2}P_R \Sigma_B\Sigmabar_t^+\Sigmabar_t + \frac{1}{2}\Sigmabar_t  \Sigmabar_t^+  \Sigma_BP_R \\
&+P_K \Sigma_B \Sigmabar_t^+ \Sigmabar_t +    \Sigmabar_t \Sigmabar_t^+ \Sigma_B P_K + \tilde{G}_t \Sigmabar_t + \Sigmabar_t \tilde{G}_t^\top = \Ricc(\Sigmabar_t) - \sigma_t\sigma_t^\top
\end{align*} 
Using $\Sigmabar_t^+ \Sigmabar_t = \Sigmabar_t \Sigmabar_t^+ = P_R$, $\sigma_t = P_K \sigma_B$ and $P_K + P_R =I$ concludes:
\begin{align*}
 \tilde{G}_t \Sigmabar_t + \Sigmabar_t \tilde{G}_t^\top =0
\end{align*} 
All the solutions to this linear matrix equation can be expressed as:
\begin{equation*}
\tilde{G}_t = P_K\Omega^{(0)}P_K + P_R\Omega^{(1)}\Sigmabar_t^+ 
\end{equation*}
where $\Omega^{(0)} \in \Re^{d\times d}$ is arbitrary and $\Omega^{(1)} \in \Re^{d\times d}$ is skew-symmetric. Next, conditions for $\Omega^{(0)}$ and $\Omega^{(1)}$ are obtained so that the constraint~\eqref{eq:tilde-G-constraint} is true.  For $\Omega^{(0)}$, multiply~\eqref{eq:tilde-G-constraint} by $P_K$ from left and right to obtain 
\begin{equation*}
P_K \Omega^{(0)} P_K - P_K\Omega^{(0)}  P_K  = P_K (A^\top - A )P_K 
\end{equation*} 
This condition is satisfied when $\Omega^{(0)} + A$ is symmetric.   For $\Omega^{(1)}$, multiply~\eqref{eq:tilde-G-constraint} by $P_R$ from left and right to obtain 
\begin{equation}
\begin{aligned}
&P_R \Omega^{(1)} \Sigmabar_t^+  + \Sigmabar_t^+ \Omega^{(1)}_t P_R= P_R(A^\top - A)P_R \\&+  \frac{1}{2}P_R(\Sigmabar_t H H^\top  - H^\top H \Sigmabar_t)P_R \- \frac{1}{2}P_R(P_R \Sigma_B \Sigmabar_t^+ - \Sigmabar_t^+ \Sigma_B, P_R)P_R\label{eq:singular-Omega}
\end{aligned}
\end{equation}
This is the condition that is satisfied by $\Omega^{(1)}$.

\section{Proof of the
	Prop.~\ref{prop:conv_error}}\label{apdx:mean-var-error}




The proof of the Prop.~\ref{prop:conv_error} relies on the stability
theory of the Kalman filter.  The following results are quoted without proof:

\begin{lemma}
	Consider the Kalman filter~\eqref{eq:kalman-mean}-\eqref{eq:kalman-variance}
	with initial condition $(m_0,\Sigma_0)$.  Then, under Assumption (I):
	\begin{romannum}
		\item (\cite[Thm. 4.11]{kwakernaak1972linear}) There exists a solution $\Sigma_{\infty}\succ 0$ to the
		algebraic Riccati equation (ARE)
		\begin{align}
		A \Sigma_\infty +  \Sigma_\infty A^\top + \sigma_B\sigma_B^{\top} -
		\Sigma_\infty H^\top H\Sigma_\infty = 0\label{eq:are}
		\end{align}
		such that $A - \Sigma_\infty H^\top H$ is Hurwitz.  Let 
		\begin{equation}
		\begin{aligned}
		0&<\lambda_0 = \text{min} \{ -\text{Real} \; \lambda : \lambda \in \text{Spec}( A - \Sigma_\infty H^\top H)\}
		\label{eq:lambda0_defn}
		\end{aligned}
		\end{equation}
		\item(\cite[Thm. 1.1]{bishop2017stability}) If the initial covariance matrix $\Sigma_0 \succ 0$, then there exists matrices 
		$\Lambda_{\text{min}}, \Lambda_{\text{max}} \succ 0$ such that the solution $\Sigma_t$ satisfies
		\begin{equation*}
		\Lambda_{\text{min}} \preceq \Sigma_t \preceq \Lambda_{\text{max}}
		\end{equation*}
		\item(\cite[Lem. 2.2]{ocone1996}) The error covariance $\Sigma_t \rightarrow \Sigma_{\infty}$
		exponentially fast for {\em any} initial condition
		${\Sigma}_0$ (not necessarily the prior): for all $\lambda\in (0,\lambda_0)$, there exists a constant $c_\lambda$ such that 
		\[
		\lim_{t\rightarrow \infty}  \Fnorm{\Sigma_t -
			\Sigma_{\infty}} \le  c_\lambda e^{-2\lambda t} \rightarrow 0
		\]
		\item\cite[Thm. 2.3]{ocone1996} Starting from two initial conditions $(m_0,\Sigma_0)$ and $(\tilde{m_0},\tilde{\Sigma}_0)$, the means converge in
		the following senses:
		\begin{align*}
		\lim_{t\rightarrow \infty}  {\sf E} [\|m_t - \tilde{m}_t\|_2^2] &\; \le \; \text{(const.)} \; e^{-2\lambda t} \rightarrow 0
		\\
		\lim_{t\rightarrow \infty} \|m_t - \tilde{m}_t\|_2 e^{\lambda
			t} &\; = \; 0 \quad {\text{a.s.}}
		\end{align*}
		for all $\lambda\in (0,\lambda_0)$. 
	\end{romannum}
	\label{lem:KF-stability}
\end{lemma}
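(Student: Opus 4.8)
The plan is to import all four items from the classical stability theory of the Kalman--Bucy filter, since each is a cited and well-established fact; below I indicate the mechanism behind each and flag the one place where I would not want to reprove the statement from scratch. Throughout, Assumption~(I) is precisely the hypothesis driving the arguments. Item~(i) is the existence theorem for the filtering algebraic Riccati equation: by the standard control--estimation duality, \eqref{eq:are} is the Riccati equation of the deterministic linear--quadratic problem for $(A^\top,H^\top)$ with input direction $\sigma_B$, and detectability of $(A,H)$ together with stabilizability of $(A,\sigma_B)$ is exactly the condition under which \cite[Thm.~4.11]{kwakernaak1972linear} produces the stabilizing solution $\Sigma_\infty\succ0$ --- read off from the stable invariant subspace of the associated Hamiltonian matrix --- for which $A-\Sigma_\infty H^\top H$ is Hurwitz. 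The constant $\lambda_0$ in \eqref{eq:lambda0_defn} is then just the distance from the spectrum of $A-\Sigma_\infty H^\top H$ to the imaginary axis, hence strictly positive since that matrix is Hurwitz. For item~(ii) I would combine the monotonicity and comparison properties of the Riccati flow \eqref{eq:kalman-variance} with the stabilizing solution of item~(i): these confine $\Sigma_t$ to a fixed compact subset of the positive-definite cone, with explicit bounding matrices $\Lambda_{\text{min}},\Lambda_{\text{max}}$ supplied by \cite[Thm.~1.1]{bishop2017stability}.

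For item~(iii) I would set $E_t:=\Sigma_t-\Sigma_\infty$, subtract \eqref{eq:are} from \eqref{eq:kalman-variance}, and regroup the quadratic term as $\Sigma_tH^\top H\Sigma_t-\Sigma_\infty H^\top H\Sigma_\infty=\Sigma_tH^\top H E_t+E_tH^\top H\Sigma_\infty$, which turns the difference into the time-varying Lyapunov equation
\begin{equation*}
\ddt E_t=(A-\Sigma_tH^\top H)\,E_t+E_t\,(A-\Sigma_\infty H^\top H)^\top .
\end{equation*}
Since $A-\Sigma_\infty H^\top H$ is Hurwitz by item~(i) and $\Sigma_t$ stays compact by item~(ii), the coefficient $A-\Sigma_tH^\top H$ is a bounded, asymptotically-Hurwitz perturbation, and a transition-matrix plus Gronwall estimate yields $\Fnorm{E_t}\le c_\lambda e^{-2\lambda t}$ for every $\lambda<\lambda_0$ --- this is \cite[Lem.~2.2]{ocone1996}. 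The version for an arbitrary (not-necessarily-prior) initial covariance is the identical computation, since the fixed point $\Sigma_\infty$ and the Hurwitz property are intrinsic to the problem data.

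For item~(iv) I would subtract the mean equation \eqref{eq:kalman-mean} written for the two initial data and eliminate the cross terms exactly as in the derivation of \eqref{eq:Xii-mean}, obtaining for $\delta_t:=m_t-\tilde m_t$
\begin{equation*}
\ud\delta_t=(A-\Sigma_tH^\top H)\,\delta_t\,\ud t+(\Sigma_t-\tilde\Sigma_t)H^\top\big(\ud Z_t-H\tilde m_t\,\ud t\big).
\end{equation*}
The key point is that $\ud\tilde\nu_t:=\ud Z_t-H\tilde m_t\,\ud t$ is the innovation process of the second filter, hence a standard Brownian motion, so the forcing is a stochastic integral whose coefficient $\Sigma_t-\tilde\Sigma_t$ is already $O(e^{-2\lambda t})$ by item~(iii); in particular no difficulty arises from the possible growth of $Z_t$ itself. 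Writing $\Phi_{t,s}$ for the transition matrix of $A-\Sigma_tH^\top H$ and using the uniform bound $\|\Phi_{t,s}\|\le C_{\lambda'}e^{-\lambda'(t-s)}$ for $0\le s\le t$, $\lambda'<\lambda_0$, the It\^o isometry and a convolution estimate produce the mean-square rate, and a Burkholder--Davis--Gundy bound plus a Borel--Cantelli argument along a discrete grid of times upgrades it to the almost-sure rate; this is \cite[Thm.~2.3]{ocone1996}.

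The one genuinely non-routine ingredient --- and the reason I would quote \cite{ocone1996} rather than reprove item~(iv) --- is the uniform exponential bound on the transition matrix $\Phi_{t,s}$: the closed-loop generator $A-\Sigma_tH^\top H$ is only asymptotically Hurwitz, so one must pass from the frozen-time Hurwitz property of $A-\Sigma_\infty H^\top H$ to a genuine exponential dichotomy for the time-varying flow, and it is precisely here that the exponential convergence $\Sigma_t\to\Sigma_\infty$ from item~(iii) gets consumed. Everything else is bookkeeping.
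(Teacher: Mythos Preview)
Your proposal is correct, but note that the paper does not prove this lemma at all: the sentence immediately preceding it reads ``The following results are quoted without proof,'' and each item simply carries its citation. So there is no ``paper's own proof'' to compare against --- the authors treat Lemma~\ref{lem:KF-stability} as a black-box compilation of classical Kalman--Bucy stability facts.

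Your sketch goes well beyond that and is accurate in its outline: the duality argument for item~(i), the Riccati comparison/monotonicity for item~(ii), the Lyapunov-difference equation for $E_t=\Sigma_t-\Sigma_\infty$ in item~(iii), and the innovation-process representation plus It\^o isometry for item~(iv) are exactly the standard routes. Your identification of the one subtle point --- bootstrapping the Hurwitz property of $A-\Sigma_\infty H^\top H$ into a uniform exponential bound on the time-varying transition matrix of $A-\Sigma_t H^\top H$ --- is also correct, and in fact the paper itself carries out precisely this Gronwall-on-the-transition-matrix maneuver later, in step~2 of the proof of Proposition~\ref{prop:conv_error}. So your instinct that this is the only non-bookkeeping ingredient is vindicated by the paper's own use of it downstream.
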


\medskip

In the remainder of this paper, the notation $\Sigma_\infty$ is used to denote
the positive definite solution of the ARE~\eqref{eq:are} and $\lambda_0$ is used to
denote the spectral bound as defined in~\eqref{eq:lambda0_defn}. 

\medskip

\newP{Proof of Prop.~\ref{prop:conv_error}} 
Consider the finite-$N$ filter~\eqref{eq:Xit-d} for the deterministic FPF.  The empirical mean and covariance are defined in
Eq.~\eqref{eq:empr_app_mean_var}.  The error is defined as
\[
\xi^i_t := X^i_t - m^{(N)}_t \quad \text{for}\;\;i=1,2,\ldots,N
\]   
The evolution equations for the mean, covariance, and the error are as follows:  
\begin{subequations}
	\begin{align}
	\ud \mN_t &= A\mN_t \ud t + \k^{(N)}_t(\ud Z_t - H\mN_t\ud t)
	\label{eq:mean-evolution-d}\\
	\frac{\ud \SigN_t}{\ud t} &= \Ricc(\SigN_t)\label{eq:var-evolution-d}\\
	\frac{\ud \xi^i_t}{\ud t} &= \sRicc(\SigN_t)\xi^i_t \label{eq:e-evolution-d}
	\end{align}
\end{subequations}

Eq.~\eqref{eq:mean-evolution-d} is
obtained by summing up Eq.~\eqref{eq:Xit-d} for the
$i^{\text{th}}$ particle from $i=1,\ldots,N$.
Equation~\eqref{eq:var-evolution-d} is obtained by application of It\"o rule
\begin{align*}
\ud (\xi^t{\xi^i_t}^\top) = \sRicc(\SigN_t)\xi^i_t{\xi^i_t}^\top\ud t + \xi^i_t{\xi^i_t}^\top \sRicc(\SigN_t) \ud t
\end{align*}
and summing over $i=1,\ldots,N$ and dividing by $(N-1)$.  Equation~\eqref{eq:e-evolution-d} is obtained  by subtracting~\eqref{eq:Xit-d} for $X^i_t$ from~\eqref{eq:mean-evolution-d} for $\mN_t$.  

Because the equations for the empirical mean~\eqref{eq:mean-evolution-d}
and the empirical covariance~\eqref{eq:var-evolution-d} are identical to
the Kalman filter~\eqref{eq:kalman-mean}-\eqref{eq:kalman-variance}, the
a.s. convergence of mean and variance follows from
Lemma~\ref{lem:KF-stability} on the filter
stability theory.  It remains to derive the mean-squared estimates.
This is done in the following steps: 

\begin{enumerate}
	\item Denote $F_\infty := A-\Sigma_\infty H^\top H$.  In the
	step~1, an estimate for the spectral norm of the transition
	matrix $e^{tF_\infty}$ is obtained.  From
	Lemma~\ref{lem:KF-stability}, the eigenvalues of $F_\infty$
	have negative real parts smaller than $-\lambda_0$.
	Consider the Jordan decomposition $J = P^{-1}F_\infty P$ to bound
	\begin{equation*}
	\|e^{tF_\infty}\|_2 \leq \|P\|_2\|P^{-1}\|_2 \left(\max_{0\leq k\leq n}~\frac{t^k}{k!}\right)e^{-\lambda_0t},\quad \forall t>0
	\end{equation*}
	where $n$ the largest multiplicity of the eigenvalues of $F_\infty$. As a result, for all $\lambda < \lambda_0$, there exists a constant $c'_\lambda := \|P\|_2\|P^{-1}\|_2  \sup_{t\geq 0}e^{-(\lambda_0-\lambda)t} \left(\max_{0\leq k\leq n}~\frac{t^k}{k!}\right)$ such that 
	\begin{equation*}
	\|e^{tF_\infty}\|_2 \leq c'_\lambda e^{-\lambda t}
	\end{equation*}	
	
	\item  
	Denote $F_t:= A-\Sigma_tH^\top
	H$ and consider the linear system
	\begin{equation}\label{eq:xtFtxt}
	\frac{\ud}{\ud t}x_t = F_tx_t = F_\infty x_t +(\Sigma_\infty-\Sigma_t)H^\top H x_t
	\end{equation}
	Therefore
	\begin{equation*}
	x_t = e^{tF_\infty}x_s + \int_s^t e^{(t-\tau)F_\infty}(\Sigma_\infty-\Sigma_\tau)H^\top H x_\tau \ud \tau
	\end{equation*}	
	Upon taking the norm and using the triangle inequality
	\begin{align*}
	\|x_t\|_2\leq &c_\lambda e^{-t\lambda}\|x_s\|_2 \\&+ \int_s^t c_\lambda e^{-(t-\tau )\lambda} \|\Sigma_\tau-\Sigma_\infty\|_2\|H^\top H\|_2\|x_\tau\|_2\ud \tau
	\end{align*}
	The Gronwall inequality is then used to conclude that
	\begin{align*}
	\|x_t\|_2\leq c'_\lambda e^{-\lambda (t-s)}  \|x_s\|_2 e^{c'_\lambda \|H^\top H\|_2 \int_s^t \|\Sigma_\tau -\Sigma_\infty\|_2 \ud \tau }
	\end{align*}
	This shows that the transition matrix $\Phi_{t,s}$ for the
	linear system~\eqref{eq:xtFtxt} is bounded as follows:
	\begin{equation*}
	\|	\Phi_{t,s}\|_2\leq c'_\lambda e^{-\lambda (t-s)} e^{c'_\lambda \|H^\top H\|_2 \int_s^t \|\Sigma_\tau -\Sigma_\infty\|_2 \ud \tau }
	\end{equation*}
	Now, because of the exponential convergence $\|\Sigma_t -
	\Sigma_\infty\|_2\leq c_\lambda e^{-2\lambda t}$ from
	Lemma~\ref{lem:KF-stability}, 
	\begin{equation*}
	\|	\Phi_{t,s}\|_2\leq c'_\lambda e^{-\lambda (t-s)} e^{c'_\lambda  \|H^\top H\|_2 \frac{c_\lambda \|\Sigma_0-\Sigma_\infty\|_2}{2\lambda} }
	\end{equation*}
	and therefore $\|\Phi_{t,s}\|_2\leq c_\lambda e^{-\lambda (t-s)}$ for
	some constant $c_\lambda$.   
	\item Consider the empirical counterpart of the linear
	system~\eqref{eq:xtFtxt} defined using $F_t^{(N)}:=
	A-\Sigma_t^{(N)} H^\top
	H$.  Denote the associated transition matrix as
	$\Phi^{(N)}_{t,s}$.  Then, because $\SigN_t$ also evolves
	according to the Riccati equation and converges
	exponentially to $\Sigma_\infty$, by repeating the steps
	above, we also obtain $\|\Phi^{(N)}_{t,s}\|_2\leq c_\lambda
	e^{-\lambda (t-s)}$.  
	\item We are now ready to derive an estimate for the error
	$\SigN_t-\Sigma_t$. From the Riccati equation,
	\begin{align*}
	\frac{\ud}{\ud t}(\SigN_t-\Sigma_t)= &(A-\Sigma_t HH^\top)(\SigN_t-\Sigma_t) \\&+ (\SigN_t-\Sigma_t)(A-\SigN_t HH^\top)^\top
	\end{align*}
	whose solution is given by
	\begin{equation*}
	\SigN_t-\Sigma_t = \Phi_{t,0}(\SigN_0-\Sigma_0)(\Phi_{t,0}^{(N)})^\top
	\end{equation*}
	Therefore,
	\begin{align*}
	\|\SigN_t-\Sigma_t\|_F &\leq \| \Phi^{\Sigma_t}_t\|_2\| \Phi^{\SigN_t}_t\|_2 \|\SigN_0-\Sigma_0\|_F\\&\leq c^2_\lambda  e^{-2\lambda t} \|\SigN_0-\Sigma_0\|_F
	\end{align*}
	Upon squaring and taking the expectation of both sides 
	\begin{align*}
	\Expect[\|\SigN_t-\Sigma_t\|^2_F] &\leq c^4_\lambda  e^{-4\lambda t} \Expect[ \|\SigN_0-\Sigma_0\|^2_F]\\
	&= c_\lambda^4 e^{-4\lambda t} \frac{1}{N}\Expect[\trace((\xi^i_0{\xi^i_0}^\top - \Sigma_0)^2)]\\
	&\leq c_\lambda^4 e^{-4\lambda t} \frac{\Expect[\|\xi^i_0\|_2^4]}{N}= c_\lambda^4 e^{-4\lambda t} \frac{3\tr(\Sigma_0)^2}{N}
	\end{align*}	
	\item Finally, a bound for the mean-squared error in estimating the
	mean is derived. Subtracting~\eqref{eq:kalman-mean} for the
	conditional mean from~\eqref{eq:mean-evolution-d} for the empirical mean yields:
	\begin{align*}
	\ud \mN_t - \ud m_t = &(A - \SigN_tH^\top H) (\mN_t-m_t)\ud t \\&+ (\SigN_t-\Sigma_t)H^\top H \ud I_t
	\end{align*} 
	where $\ud I_t  =\ud Z_t - H m_t \ud t$ is the increment of the
	innovation process. Its solution is given by
	\begin{align*}
	\mN_t - m_t = &\Phi^{(N)}_t (\mN_0 - m_0) \\&+ \int_0^t \Phi^{(N)}_{t,s}(\SigN_s-\Sigma_s)H^\top H \ud I_s 
	\end{align*}
	The norm of the first term is bounded by:
	\begin{align*}
	\Expect[\| \Phi^{(N)}_t (\mN_0 - m_0)\|_2^2 ]&\leq c_\lambda^2e^{-2\lambda t} \Expect[\|\mN_0-m_0\|_2^2]\\&\leq c_\lambda^2e^{-2\lambda t}  \frac{\trace(\Sigma_0)}{N}
	\end{align*}
	The norm of the second term is bounded by:
	\begin{align*}
	\Expect&\left[\left\|\int_0^t \Phi^{(N)}_{t,s}(\SigN_s-\Sigma_s)H^\top \ud I_s\right\|_2^2\right] = \\&\int_0^t \Expect\left[\trace\left(\Phi_{t,s}^{(N)} (\SigN_s-\Sigma_s) H^\top H(\SigN_s-\Sigma_s) {\Phi_{t,s}^{(N)}}^\top  \right) \right]\ud s \\
	&\leq \int_0^t \Expect[\|\Phi_{t,s}^{(N)} (\SigN_s-\Sigma_s)\|^2_F] \| H\|^2_2   \ud s\\
	&\leq  \| H\|^2_2 \int_0^t c_\lambda^2e^{-2\lambda(t-s)} c_\lambda^4e^{-4\lambda s} \Expect[\| \SigN_0-\Sigma_0\|^2_F]\ud s\\
	&\leq c_\lambda^6  \| H\|^2_2 \frac{3\tr(\Sigma_0)^2}{N} ~\frac{e^{-2\lambda t}}{2\lambda }
	\end{align*}
	where we used the fact that the innovation process is a Brownian motion~\cite[Lemma 5.6]{xiong2008} and It\^o isometry in the first step. 
	Adding the two bounds,
	\begin{align*}
	\Expect[\|m_t-\mN_t\|_2^2] \leq &e^{-2\lambda t}  \frac{2c_\lambda^2\trace(\Sigma_0)}{N} \\&+  e^{-2\lambda t}\frac{6c_\lambda^6  \| H\|^2_2\tr(\Sigma_0)^2}{2\lambda N} 
	\end{align*}

\end{enumerate}

\section{Proofs of the Prop.~\ref{prop:prop-chaos} and Cor.~\ref{cor:prop-chaos}}\label{apdx:prop-chaos}
\begin{proof} 
	In the proof $S$ is used to denote $S_{\Lambda_0}$. 
	Use the decomposition 
	\begin{equation*}
	X^i_t = \mN_t + \xi^i_t,\quad \bar{X}^i_t = \bar{m}_t + \xibar^i_t
	\end{equation*}
	to bound the error as
	\begin{align*}
	\Expect[\|X^i_t-\bar{X}^i_t\|_2^2 \mathds{1}_{S}]^{1/2} &\leq
          \Expect[\|\mN_t-\bar{m}_t\|_2^2]^{1/2} +
            \Expect[\|\xi^i_t-\xibar^i_t\|_2^2
            \mathds{1}_{S}]^{1/2}\\
&\leq
          \frac{\text{(const.)}}{\sqrt{N}} +
            \Expect[\|\xi^i_t-\xibar^i_t\|_2^2
            \mathds{1}_{S}]^{1/2}
	\end{align*}
where we have used the exactness property $\bar{m}_t=m_t$ and the
bound~\eqref{eq:mean-estimate} derived in
Prop.~\ref{prop:conv_error}-(ii).  The hard part of the proof is to establish the following bound:
		\begin{align}
		\Expect[\|\xi^i_t-\xibar^i_t\|_2^2  \mathds{1}_{S}]^{\half} &\leq \frac{\text{(const.)}}{\sqrt{N}} \label{eq:proof-xi-bound}
		\end{align}	
This is done next.

	
The two processes evolve as follows:
	\begin{align*}
	\ud \xi^i_t &= \sRicc(\SigN_t)\xi^i_t \ud t,\quad \xi^i_0 = X^i_0 - \mN_0\\	
	\ud \xibar^i_t &= \sRicc(\Sigmabar_t)\xibar^i_t\ud t,\quad \xibar^i_0 = X^i_0 - \mbar_0 
	\end{align*} 
To express the solution, define the state
	transition matrix according to
	\begin{equation*}
	\frac{\ud}{\ud t} \Psi^{(Q_t)}_{t,s} = \sRicc(Q_t) \Psi^{(Q_t)}_{t,s},\quad \Psi_{s,s}^{(Q_t)} = I
	\end{equation*}	
Using this definition,
	\begin{equation}
	\begin{aligned}
	\xi^i_t - \xibar^i_t = &\Psi^{(\SigN_t)}_{t,0} (\xi^i_0 - \xibar^i_0) \\&+ \int_0^t \Psi_{t,s}^{(\SigN_s)}(\sRicc(\Sigmabar_s)-\sRicc(\SigN_s))\xibar^i_s\ud s
	\end{aligned}
	\label{eq:xi-xibar-integral}
	\end{equation}
	
We claim: There exists $c_1,c_2>0$ and a matrix $\Lambda_{\text{min}} \succ 0$ such that for all $t\geq s \geq 0$:
	\begin{subequations}
		\begin{align}
		&\P(\{\SigN_t \succ \Lambda_{\text{min}}\} \cap  S)=1 ~\label{eq:SigN-lower-bound}\\
                &\P( \{\|\Psi_{t,s}^{(\SigN_t)}\|_2\leq c_1\}  \cap  S)=1\label{eq:proof-Psi-bound}\\ 
		&\Expect[\|\sRicc(\Sigmabar_t)-\sRicc(\SigN_t)\|^2_2 \mathds{1}_{S}]^\half \leq  \frac{c_2}{\sqrt{N}}e^{-2\lambda t}  \label{eq:proof-sRicc-bound}
		\end{align}	
	\end{subequations}
Assuming the claim is true, the bound~\eqref{eq:proof-xi-bound} is
obtained as follows:
	\begin{align*}
	\Expect[&\|\xi^i_t-\xibar^i_t\|^2_2\mathds{1}_S]^{\half}\leq c_1 \Expect[\| \xi^i_0 -\xibar^i_0\|_2^2\mathds{1}_S]^{\half} \\&\hspace*{80pt}+ c_1  \frac{c_2}{\sqrt{N}} \int_0^t e^{-2\lambda t}\Expect[\|\xibar^i_s\|^2]^{\half}\ud s\\
	&\leq c_1 \Expect[\| \mN_0 -\mbar_0\|_2^2]^{\half} + c_1  \frac{c_2}{\sqrt{N}} \int_0^t e^{-2\lambda t}\tr(\Sigmabar_s)^\half\ud s\\
	&\leq   c_1 \frac{\trace(\Sigma_0)^\half}{\sqrt{N}} + c_1  \frac{c_2}{\sqrt{N}} \frac{\tr(\Lambda_\text{max})^\half}{2\lambda}
	\end{align*}	 
	where we used the identity $\xi^i_0 - \xibar^i_0 = \mbar_0 -
        \mN_0$ and $\Sigmabar_t = \Sigma_t \prec \Lambda_{\text{max}}$
        from Lemma~\ref{lem:KF-stability}-(ii).  

	
It remains to prove the claim.  The three
bounds~\eqref{eq:proof-Psi-bound}-\eqref{eq:SigN-lower-bound} are
obtained in the following three steps:
\begin{enumerate}
\item (Bound~\eqref{eq:SigN-lower-bound}):  On the event $S$, $\SigN_0
  \succ \Lambda_0$.  Let $\Lambda_t$ be the solution of the Riccati
  equation initialized at $\Lambda_0$. By
  Lemma~\ref{lem:KF-stability}-(ii), $\exists \; \Lambda_\text{min}\succ
  0$ such that $\Lambda_t \succ \Lambda_{\text{min}}$. Because
  $\SigN_0 \succ \Lambda_0$, by the monotonicity property of the
  solution to Riccati equation (see~\cite[prop. 4.1]{bishop2017stability}), $\SigN_t
  \succ \Lambda_t$. Therefore, $\SigN_t\succ \Lambda_{\text{min}}$.

\item (Bound~\eqref{eq:proof-Psi-bound}): We begin by deriving a bound for $\|\Psi_{t,s}^{(\Sigma_\infty)}\|_2$. Consider the linear system: 
	\begin{equation*}
	\frac{\ud}{\ud t } x_t = \sRicc(\Sigma_\infty)^\top x_t
	\end{equation*} 
	and a function $V(x) := x^\top \Sigma_\infty x$. Then, 
	\begin{align*}
	\frac{\ud}{\ud t} V(x_t) = x_t^\top \sRicc(\Sigma_\infty)\Sigma_\infty x_t + x_t^\top \Sigma_\infty \sRicc(\Sigma_\infty)^\top x_t = 0
	\end{align*}
	where we used the definition~\eqref{eq:sRicc-def}. As a result, for all $t \geq s\geq 0$, 
	\begin{align*}
	x_t^\top\Sigma_\infty x_t = x_s^\top\Sigma_\infty x_s\quad &\Rightarrow \quad \|x_t\|_2^2 \leq  \frac{\lambda_{\text{max}}(\Sigma_\infty)}{\lambda_{\text{min}}(\Sigma_\infty)}\|x_s\|_2^2\\
	&\Rightarrow  \quad \|(\Psi^{(\Sigma_\infty)}_{t,s})^\top x_s\|_2 \leq c_3 \|x_s\|_2\\
	&\Rightarrow  \quad \|(\Psi^{(\Sigma_\infty)}_{t,s})\|_2 \leq c_3
	\end{align*}
	where the constant $c_3:=\sqrt{\frac{\lambda_{\text{max}}(\Sigma_\infty)}{\lambda_{\text{min}}(\Sigma_\infty)}}$. 
	
To obtain a bound for 
	$\|\Psi_{t,s}^{(\SigN_t)}\|_2$, consider the linear system
	\begin{align*}
	\frac{\ud}{\ud t} x_t &= \sRicc(\SigN_t)x_t \\&= \sRicc(\Sigma_\infty)x_t + (\sRicc(\SigN_t) - \sRicc(\Sigma_\infty))x_t
	\end{align*}  
	whose solution is given by
	\begin{align*}
	x_t = \Psi_{t,0}^{(\Sigma_\infty)} x_0 + \int_0^t \Psi_{t,s}^{(\Sigma_\infty )} (\sRicc(\SigN_s) - \sRicc(\Sigma_\infty))x_s\ud s
	\end{align*}
Therefore, using the bound $\|\Psi_{t,s}^{(\Sigma_\infty)}\|_2 \leq c_3 $,
	\begin{align*}
	\|x_t\|_2 \leq  c_3 \|x_0\|_2 + c_3 \int_0^t  \|\sRicc(\SigN_s)-\sRicc(\Sigma_\infty)\|_2 \|x_s\|_2\ud s
	\end{align*}
	Now, 
	\begin{align*}
	\|&\sRicc(\SigN_t) - \sRicc(\Sigma_\infty) \|_2 \\&=
	\|\frac{1}{2}\Sigma_B((\SigN_t)^{-1}-\Sigma^{-1}_\infty) -
                                                            \frac{1}{2}(\SigN_t-\Sigma_\infty)H^\top
                                                            H\|_2\\&\leq
                                                                     \frac{1}{2}(\|\Sigma_B\|_2
                                                                     \|\Sigma_\infty^{-1}\|_2
                                                                     \|(\SigN_t)^{-1}\|_2
                                                                     +
                                                                     \|H^\top
                                                                     H\|_2)
                                                                     \|\SigN_t
                                                                     -
                                                                     \Sigma_\infty\|_2\\
&\leq \;\; \underbrace{(\|\Sigma_B\|_2 \|\Sigma_\infty^{-1}\|_2
  \|\Lambda_{\text{min}}^{-1} \|_2  + \|H^\top H\|_2)
  c_{\lambda}}_{c_4} \;\; e^{-2\lambda t}
	\end{align*}
where we used Lemma~\ref{lem:KF-stability}-(iii) (because $\SigN_t$ is
a solution of the Riccati
equation~\eqref{eq:var-evolution-d}) and $ \|(\SigN_t)^{-1}\|_2\leq
\|\Lambda_{\text{min}}^{-1} \|_2 $ from~\eqref{eq:SigN-lower-bound}.  
	Therefore,
	\begin{align*}
	\|x_t\|_2 \leq  c_3 \|x_0\|_2 + c_3 c_4  \int_0^t  e^{-2\lambda s}  \|x_s\|_2\ud s
	\end{align*}
	By an application of the Gr\"onwall inequality
	\begin{align*}
	\|x_t\|_2&\leq c_3 e^{c_3 c_4  \int_0^t e^{-2\lambda s}\ud s}\|x_0\|_2 \leq \underbrace{c_3 e^{c_3 c_4\frac{1}{2\lambda} }}_{=:c_1} \|x_0\|_2 
	\end{align*}
which implies $\|\Psi^{(\SigN_t)}_{t,s}\|\leq c_1 $ for all $t \geq s
\geq 0$.  This is the bound~\eqref{eq:proof-Psi-bound}.

\item (Bound~\eqref{eq:proof-sRicc-bound}): We have
	\begin{align*}
	\|&\sRicc(\SigN_t) - \sRicc(\Sigmabar_t) \|_2 = \|\sRicc(\SigN_t) - \sRicc(\Sigma_t) \|_2 
	\\&\leq \frac{1}{2}(\|\Sigma_B\|_2 \|\Sigma_t^{-1}\|_2 \|{\SigN_t}^{-1}\|_2  + \|H^\top H\|_2) \|\SigN_t - \Sigma_t\|_2 \\
	&\leq  \frac{1}{2}(\|\Sigma_B\|_2 \|\Lambda_{\text{min}}^{-1} \|_2^2  + \|H^\top H\|_2)  \|\SigN_t - \Sigma_t\|_F
	\end{align*} 
	where we used $\Sigmabar_t=\Sigma_t$ (by the exactness property), $\|\SigN_t-\Sigma_t\|_2\leq \||\SigN_t-\Sigma_t\|_F$, and $ \|(\SigN_t)^{-1}\|_2\leq
	\|\Lambda_{\text{min}}^{-1} \|_2 $ from~\eqref{eq:SigN-lower-bound}.  
	 Taking the squared expectation and
         using~\eqref{eq:var-estimate} in
         Prop.~\ref{prop:conv_error}-(ii) gives~\eqref{eq:proof-sRicc-bound}.
	
\end{enumerate}
\end{proof}

\medskip

\begin{proof}[Proof of the Corollary~\ref{cor:prop-chaos}]
	Consider the event  $S = S_{\half \Sigma_0}$. 
We derive the following bounds:
	\begin{align}
		&\Expect\left[
	\left|\frac{1}{N}\sum_{i=1}^N f(X^i_t) -\Expect[f(X_t) |\clZ_t]\right|^2\mathds{1}_S\right]^{1/2} \leq \frac{\text{(const.)}}{\sqrt{N}}\label{eq:bound-prop-chaos-S}\\
		&\Expect\left[\left|\frac{1}{N}\sum_{i=1}^N f(X^i_t) -\Expect[f(X_t) |\clZ_t]\right|^2\mathds{1}_{S^c}\right]^{1/2} \leq \frac{\text{(const.)}}{\sqrt{N}}\label{eq:bound-prop-chaos-Sc}
	\end{align}
\begin{enumerate}
\item (Bound~\eqref{eq:bound-prop-chaos-S}) Using the triangle
  inequality,
	\begin{equation*}
	\begin{aligned}
	\Expect&\left[
	\left|\frac{1}{N}\sum_{i=1}^N f(X^i_t) -\Expect[f(X_t)
          |\clZ_t]\right|^2\mathds{1}_S\right]^{1/2}
      \\&\quad\leq\Expect\left[\left|\frac{1}{N}\sum_{i=1}^N f(X^i_t)
          -\frac{1}{N}\sum_{i=1}^Nf(\bar{X}^i_t)\right|^2\mathds{1}_S\right]^{1/2}
      \\& \quad \quad \quad + \;\; \Expect\left[\left|\frac{1}{N}\sum_{i=1}^N f(\bar{X}^i_t) -\Expect[f(X_t) |\clZ_t]\right|^2\right]^{1/2} 
	\end{aligned}
	\end{equation*}
	Now, because $\bar{X}^i_t$ are i.i.d with distribution equal
        to the conditional distribution, the second term on the
        right-hand side 
	\begin{equation*}
	\Expect\left[\left|\frac{1}{N}\sum_{i=1}^N f(\bar{X}^i_t) -\Expect[f(X_t) |\clZ_t]\right|^2\right]^{1/2} = \frac{
		\text{Var}(f(X_t)|\clZ_t)
	}
	{\sqrt{N}}
	\end{equation*}
The first term on the right-hand side is bounded as follows:
	\begin{align*}
	\Expect&\left[\left|\frac{1}{N}\sum_{i=1}^N f(X^i_t)
                 -\frac{1}{N}\sum_{i=1}^N
                 f(\bar{X}^i_t)\right|^2\mathds{1}_S\right]^{1/2}
          \\&\quad \leq \frac{1}{N}\sum_{i=1}^N
              \Expect\left[\left|f(X^i_t)-f(\bar{X}^i_t)\right|^2\mathds{1}_S\right]^{1/2}\\&\quad
                                                                                              \leq
                                                                                              \frac{\text{(const.)}}{N}\sum_{i=1}^N
	\Expect\left[\left\|X^i_t-\bar{X}^i_t\right\|_2^2\mathds{1}_S\right]^{1/2}
	\leq \frac{\text{(const.)}}{\sqrt{N}}
	\end{align*}
	where we used triangle inequality in the first step, the
	Lipschitz property of $f$ in the second step, and the
	estimate~\eqref{eq:estimate_POA} from Prop.~\ref{prop:prop-chaos} in the final step. 
	
\item (Bound~\eqref{eq:bound-prop-chaos-Sc}) The function $f$ is assumed bounded. So,
	\begin{align*}
	\Expect\left[\left|\frac{1}{N}\sum_{i=1}^N f(X^i_t) -\Expect[f(X_t) |\clZ_t]\right|^2\mathds{1}_{S^c}\right]^{1/2} \leq \text{(const)} \P(S^c)^\half
	\end{align*}
	The probability of the event $S^c$ is bounded as follows: 
	\begin{align*}
	\P(S^c) 
	& = \P(\Sigma_0 - \SigN_0 \succ \frac{1}{2}\Sigma_0)\\
	& \leq  \P( \| \SigN_0 - \Sigma_0\|_F^2 \geq \frac{1}{4} \|\Sigma_0\|_F^2)\\
	& \leq \frac{\Expect[ \| \SigN_0 - \Sigma_0\|_F^2]}{\frac{1}{4} \|\Sigma_0\|_F^2}\leq \frac{12\trace(\Sigma_0^2)}{N\|\Sigma_0\|_F^2} = \frac{12}{N}
	\end{align*}
	
\end{enumerate}	
\end{proof}

\section{Proof of the Prop.~\ref{prop:importance-sampling}}\label{apdx:importance-sampling}
\begin{proof}
	\noindent {\bf Part (i)}  Express the m.s.e as:
	\begin{equation*}
	\text{m.s.e}_{\overline{\text{PF}}}(f) = \Expect\left[\left|\frac{1}{N}\sum_{i=1}^N \bar{w}_i f(X^i_0) - \Expect[\bar{w}_i f( X^i_0)|\clZ_1]\right|^2 \right]
	\end{equation*}
	where we used $\pi(f) =
	\Expect[\bar{w}_if(X^i_0)|\clZ_1]$. The expectation simplifies
	to:
	\begin{equation*}
	\text{m.s.e}_{\overline{\text{PF}}} = \frac{1}{N}\left(\underbrace{\Expect[|\bar{w}_i f(X^i_0)|^2]}_{\text{1st term}} - \underbrace{\Expect[ |\Expect[\bar{w}_if(X^i_0)|\clZ_1]|^2]}_{\text{2nd term}}\right)
	\end{equation*}
	The two terms are simplified separately:
	\begin{enumerate}
		\item (2nd term) Note that
		$\Expect[\bar{w}_if(X^i_0)|\clZ_1]=\Expect[a^\top X_0 |\clZ_1] =
		\frac{\sigma_x^2}{\sigma_x^2+\sigma_w^2} a^\top Z_1 =
		\frac{1}{2}a^\top Z_1$ where we used
		$\sigma_x^2=\sigma_w^2=\sigma^2$ in the last step. Therefore, the
		(2nd term) is evaluated as
		\begin{equation*}
		\Expect [|\frac{1}{2}a^\top Z_1|^2] =  \frac{1}{4}a^\top \Expect[Z_1Z_1^\top]a = \frac{\sigma^2}{2}
		\end{equation*}
		where we used $\Expect[Z_1Z_1^\top] = \Expect[X_0X_0^\top]  + \sigma_w^2 I_d = 2\sigma^2 I_d$.
		\item (1st term) We have
		\begin{align*}
		\Expect[|\bar{w}_if(X^i_0)|^2] &= \Expect\left[\frac{|f(X^i_0)|^2e^{-\frac{\|Z_1-X^i_0\|_2^2}{\sigma_w^2}}}{\left|\Expect[e^{-\frac{\|Z_1-X^i_0\|_2^2}{2\sigma_w^2}}|\clZ_1]\right|^2}\right]
		\end{align*}
		The denominator
		\begin{align*}
		\left|\Expect[e^{-\frac{\|Z_1-X^i_0\|_2^2}{2\sigma_w^2}}|\clZ_1]\right|^2 &= \left|\frac{(2\pi\sigma_w^2)^{d/2}}{(2\pi(\sigma_w^2 +\sigma_0^2))^{d/2}}e^{-\frac{\|Z_1\|_2^2}{2(\sigma_w^2+\sigma_x^2)}}\right|^2\\&=\frac{1}{2^d}e^{-\frac{\|Z_1\|_2^2}{2\sigma^2}}
		\end{align*}
		The conditional expectation of the numerator
		\begin{align*}
		\Expect&[|f(X^i_0)|^2e^{-\frac{\|Z_1-X^i_0\|_2^2}{\sigma_w^2}}|\clZ_1] \\&= \frac{(\pi\sigma_w^2)^{d/2}}{(\pi(\sigma_w^2+2\sigma_0^2))^{d/2}}e^{-\frac{\|Z_1\|^2}{2\sigma_0^2+\sigma_w^2}}\Expect_{\zeta\sim \NN(\frac{2}{3}Z_1,\frac{\sigma^2}{3}I_d)}[|f(\zeta)|^2]\\&=\frac{1}{3^{d/2}}e^{-\frac{\|Z_1\|^2}{3\sigma^2}}(\frac{4}{9}|a^\top Z_1|^2 + \frac{\sigma^2}{3})
		\end{align*}
		Using the tower property of the conditional expectation
		\begin{align*}
		\Expect&[|\bar{w}_if(X^i_0)|^2=\frac{2^d}{3^{d/2}} \Expect[e^{\frac{\|Z_1\|^2}{6\sigma^2}}(\frac{4}{9}|a^\top Z_1|^2 + \frac{\sigma^2}{3})]\\
		&=\frac{2^d}{3^{d/2}} \frac{(12\pi\sigma^2)^{d/2}}{(4\pi\sigma^2)^{d/2}}\Expect_{\zeta\sim\NN(0,6\sigma^2I_d)}[\frac{4}{9}|a^\top \zeta|^2 + \frac{\sigma^2}{3}]\\
		&= 2^d (3\sigma^2)
		\end{align*}
	\end{enumerate}
	The two terms are combined to obtain the formula~\eqref{eq:PF-mse-bound}.  
	
	\medskip
	
	\noindent {\bf Part (ii)}  The FPF estimator is $\pi_\text{FPF}(f) = a^\top m^{(N)}_1 $ where
	\begin{equation*}
	\ud m^{(N)}_t = \KN_t(\ud Z_t -\mN_t \ud t)
	\end{equation*}
	where $\KN_t = \frac{1}{\sigma_w^2}\SigN_t$. 
	The exact mean evolves according to:
	\begin{equation*}
	\ud m_t = \K_t (\ud Z_t - m_t \ud t)
	\end{equation*}
	where $\K_t = \frac{1}{\sigma_w^2}\Sigma_t$. 
	Therefore, the difference $\mN_t - m_t$ solves the sde:
	\begin{align*}
	\ud \mN_t - \ud m_t = - \KN_t(\mN_t-m_t)\ud t + (\KN_t -\K_t) \ud I_t 
	\end{align*}
	where $\ud I_t = \ud Z_t -m_t\ud t$ is the increment of the
	innovation process. Let $\Phi_{t,s}$ be the state transition
	matrix for the linear system $\frac{\ud}{\ud t}x_t = - \KN_t
	x_t$.  In terms of this matrix
	\begin{align*}
	\mN_1 - m_1 =  \Phi_{1,0}(\mN_0-m_0) + \int_0^1 \Phi_{1,t}(\KN_t - \K_t)\ud I_t
	\end{align*}   
	Taking an inner product of both sides with $a$ yields
	\begin{align*}
	a^\top\mN_1 - a^\top m_1 =  &a^\top\Phi_{1,0}(\mN_0-m_0) \\&+ \int_0^1 a^\top\Phi_{1,t}(\KN_t - \K_t)\ud I_t
	\end{align*}   
	Therefore,
	\begin{align*}
	\Expect&[|a^\top\mN_1 - a^\top m_1|^2] \leq \\&2 \Expect[ |a^\top \Phi_{1,0}(\mN_0-m_0)|^2] +2 \Expect[(\int_0^1 a^\top \Phi_{1,t}(\KN_t - \K_t)\ud I_t)^2]
	\end{align*}   
	The formula~\eqref{eq:FPF-mse-bound} follows by showing the following bounds for the two terms:
	\begin{subequations}
		\begin{align}
		\Expect[ |a^\top \Phi_{1,0}(\mN_0-m_0)|^2]  &\leq  \frac{2d\sigma^2}{N} \label{eq:FPF-bound-1}\\
		\Expect[(\int_0^1 a^\top \Phi_{1,t}(\KN_t - \K_t)\ud I_t)^2] &\leq  \frac{3d^2\sigma^2}{N}  \label{eq:FPF-bound-2}
		\end{align}
	\end{subequations} 
	\begin{enumerate}	
		\item (Bound~\eqref{eq:FPF-bound-1}) The spectral norm
		$\|\Phi_{t,s}\|_2\leq 1$ because $\KN_t = \frac{1}{\sigma2}\SigN_t
		\succeq0$.  Therefore, $|a^\top \Phi_{1,0}(\mN_0-m_0)| \leq \|a\|_2
		\|\Phi_{1,0}\|_2 \|(\mN_0-m_0)\|_2\leq \|(\mN_0-m_0)\|_2$ and
		\begin{align*}
		\Expect[ |a^\top \Phi_{1,0}(\mN_0-m_0)|^2] \leq 	\Expect[ \|\mN_0-m_0\|_2^2]= \frac{\sigma_0^2 d}{N}
		\end{align*} 
		\item (Bound~\eqref{eq:FPF-bound-2}) By the It\^o isometry, because the innovation process is a Brownian motion~\cite[Lemma 5.6]{xiong2008}, 
		\begin{align*}
		\Expect[&(\int_0^1 a^\top \Phi_{1,t}(\KN_t - \K_t)\ud I_t)^2]
		\\&= \sigma_w^2\Expect[\int_0^1 a^\top \Phi_{1,t}(\KN_t -
		\K_t)^2 \Phi_{1,t}^\top a  \ud t  ]\\
		&\leq \frac{1}{\sigma_w^2}\int_0^1 \Expect [\|\SigN_t - \Sigma_t)\|^2_2] \ud t  
		\end{align*}
		where we used $\|\Phi_{1,t}\|_2\leq 1$ and $\|a\|_2=1$ to derive the
		inequality.  
		
		Next, we bound the spectral norm $\|\SigN_t - \Sigma\|_2$.  We have
		\begin{align*}
		\frac{\ud}{\ud t}\SigN_t = -\frac{1}{\sigma_w^2}(\SigN_t)^2,\quad \frac{\ud}{\ud t}\Sigma_t = -\frac{1}{\sigma_w^2}\Sigma_t^2
		\end{align*}
		and thus
		\begin{align*}
		\frac{\ud}{\ud t}(\SigN_t-\Sigma_t) =& -\frac{1}{2\sigma_w^2}(\SigN_t+\Sigma_t)(\SigN_t-\Sigma_t)\\
		&-\frac{1}{2\sigma_w^2}(\SigN_t-\Sigma_t)(\SigN_t+\Sigma_t)\
		\end{align*}
		Its solution is obtained as
		\begin{align*}
		\SigN_t-\Sigma_t =  \Phi_t(\SigN_0-\Sigma_0)\Phi_t^\top
		\end{align*}
		where $\Phi_t$ solves $\frac{\ud}{\ud t}\Phi_t =
                -\frac{1}{2\sigma^2}(\Sigma_t+\SigN_t)\Phi_t$ with
                $\Phi_0=I$. Because $\Sigma_t+\SigN_t\succeq 0$, the
                spectral norm $\|\Phi_t\|_2 \leq 1$.  Therefore,
		\begin{align*}
		\|\SigN_t-\Sigma_t\|_2 \leq \|\SigN_0-\Sigma_0\|_2
		\end{align*}
and 
		\begin{align*}
		\frac{1}{\sigma_w^2}\int_0^1 \Expect\|\SigN_t - \Sigma_t)\|^2_2 \ud t   &\leq \frac{1}{\sigma_w^2} \|\SigN_0 - \Sigma_0\|_2 \\
		&\leq \frac{1}{\sigma_w^2} \|\SigN_0 - \Sigma_0\|_F\\
		&\leq \frac{1}{\sigma_w^2} \frac{1}{N}\Expect[\|X^i_0\|^4]\\
		&\leq \frac{1}{\sigma_w^2} \frac{3\sigma_0^4 d^2}{N}
		\end{align*}
	\end{enumerate}

\end{proof}

\bibliographystyle{IEEEtran}
\bibliography{TAC-OPT-FPF}

\begin{IEEEbiography}[{\includegraphics[width=1in,height=1.25in,clip,keepaspectratio]{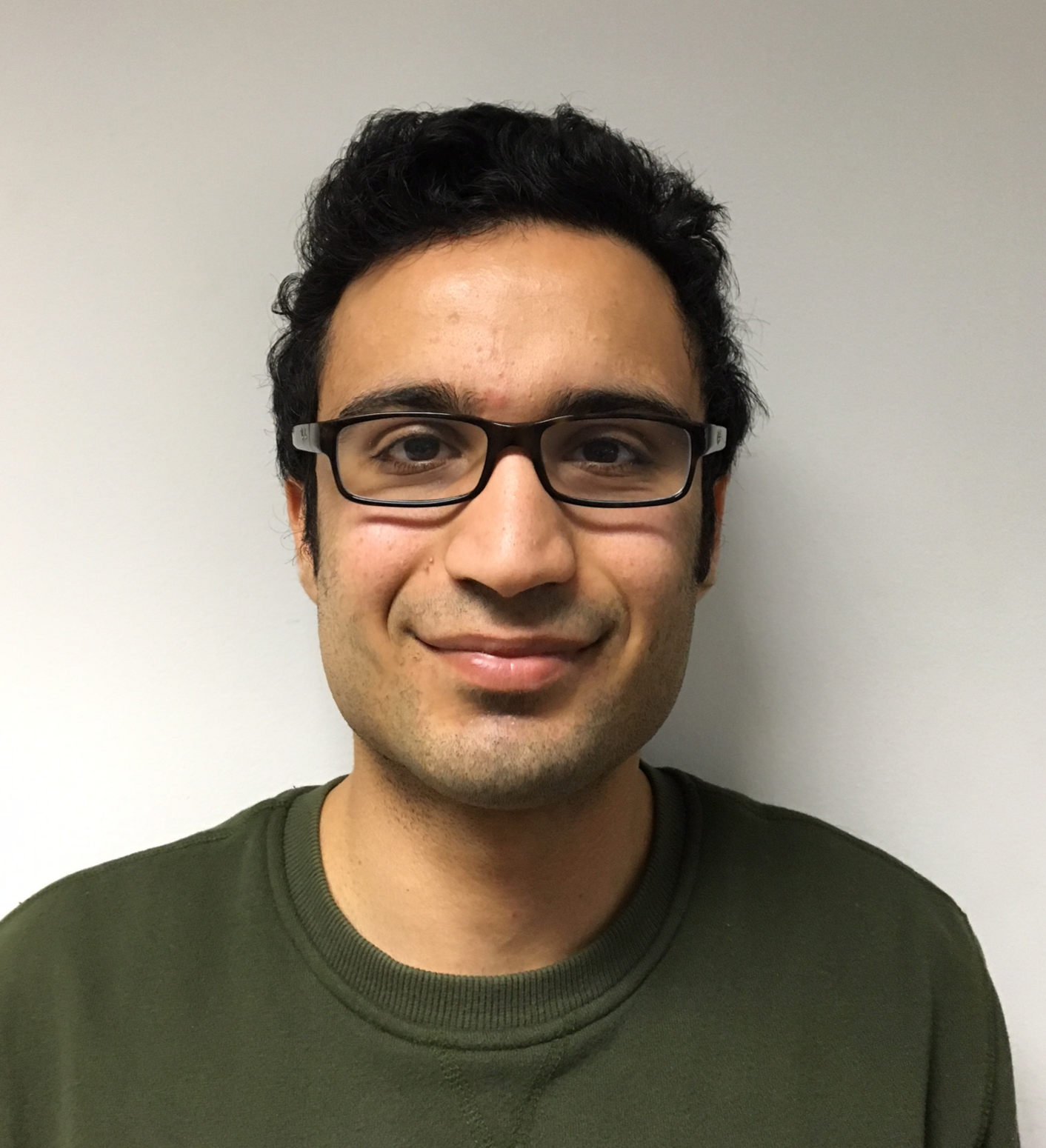}}]{Amirhossein Taghvaei}
	is a Postdoctoral Researcher in the Department of Mechanical and Aerospace Engineering at University of California Irvine. He obtained his PhD in Mechanical Science and Engineering and M.S in Mathematics from University of Illinois at Urbana-Champaign. He is currently working in the area of  Control theory and Machine learning.  
\end{IEEEbiography}

\vspace*{-66pt}

\begin{IEEEbiography}[{\includegraphics[width=1in,height=1.25in,clip,keepaspectratio]{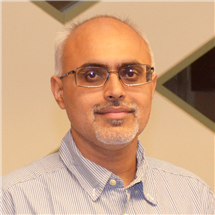}}]{Prashant G Mehta}
	(M'06) received the Ph.D. degree in Applied Mathematics from Cornell University, Ithaca, NY, in 2004. 
	He is an Associate Professor in the Department of Mechanical
	Science and Engineering, University of Illinois at
	Urbana-Champaign. Prior to joining Illinois, he was a Research
	Engineer at the United Technologies Research Center
	(UTRC).  His current research interests are in nonlinear filtering and
	mean-field games.  He received the Outstanding Achievement Award at
	UTRC for his contributions to the modeling and control of combustion instabilities in
	jet-engines. His students received the Best Student Paper Awards at
	the IEEE Conference on Decision and Control 2007 and 2009 and were
	finalists for these awards in 2010 and 2012.  He has served on the
	editorial boards of the {\em ASME Journal of Dynamic Systems, Measurement,
		and Control} and the {\em Systems and Control Letters}. 
\end{IEEEbiography}

\end{document}